\newcommand{\condnocaptionrule}{}
\newcommand{\ket}[1]{|#1\rangle}
\mailugo\url{dallago@cs.unibo.it}
\mailclaudia\url{faggian@pps.jussieu.fr}
\mailichiro\url{ichiro@is.s.u-tokyo.ac.jp}
\mailakira\url{yoshimizu@is.s.u-tokyo.ac.jp}
\newenvironment{varitemize}
{
\begin{list}{\labelitemi}
{\setlength{\itemsep}{0pt}
 \setlength{\topsep}{0pt}
 \setlength{\parsep}{0pt}
 \setlength{\partopsep}{0pt}
 \setlength{\leftmargin}{15pt}
 \setlength{\rightmargin}{0pt}
 \setlength{\itemindent}{0pt}
 \setlength{\labelsep}{5pt}
 \setlength{\labelwidth}{10pt}
}}
{
 \end{list} 
}
\newcounter{number}
\newenvironment{varenumerate}
{\begin{list}{\arabic{number}.}
  {
   \usecounter{number}
   \setlength{\labelwidth}{4.0mm}
   \setlength{\labelsep}{2.0mm}
   \setlength{\itemindent}{0.0mm}
   \setlength{\itemsep}{0.0mm}
   \setlength{\topsep}{0.0mm}
   \setlength{\parskip}{0.0mm}
   \setlength{\parsep}{0.0mm}
   \setlength{\partopsep}{0.0mm}
  }
}
{\end{list}}
\newcommand{\MLL}{\ensuremath{\textsf{MLL}}}
\newcommand{\NN}{\mathbb{N}}
\newcommand{\typeone}{A}
\newcommand{\typetwo}{B}
\newcommand{\typefour}{D}
\newcommand{\atomone}{\alpha}
\newcommand{\tens}{\otimes}
\newcommand{\lin}{\multimap}
\newcommand{\ctxthree}{\Theta}
\newcommand{\cut}{\mathsf{C}}
\newtheorem{theorem}{Theorem}[section]   % Numbered within each section
\newtheorem{lemma}[theorem]{Lemma}
\newtheorem{prop}[theorem]{Proposition}
\newtheorem{deff}[theorem]{Definition}
\newtheorem{rem}[theorem]{Remark}
\newtheorem{corollary}[theorem]{Corollary}
\newenvironment{proof}{\begin{trivlist}
       \item[\hskip \labelsep {\bfseries Proof.}]}{\hfill $\Box$ \end{trivlist}}
\newcommand{\termone}{t}
\newcommand{\termtwo}{u}
\newcommand{\termthree}{v}
\newcommand{\varone}{x}
\newcommand{\vartwo}{y}
\newcommand{\varthree}{z}
\newcommand{\varfour}{w}
\newcommand{\qvarone}{r}
\newcommand{\qvartwo}{q}
\newcommand{\new}{\mathsf{new}}
\newcommand{\tfalse}{\mathsf{ff}}
\newcommand{\ttrue}{\mathsf{tt}}
\newcommand{\meas}{\mathsf{meas}}
\newcommand{\bool}{\mathbb{B}}
\newcommand{\qbool}{\mathbb{Q}}
\newcommand{\typone}{A}
\newcommand{\typtwo}{B}
\newcommand{\typthree}{C}
\newcommand{\contone}{\Gamma}
\newcommand{\conttwo}{\Delta}
\newcommand{\tyg}[3]{#1\vdash #2:#3}
\newcommand{\subst}[3]{#1\{#3/#2\}}
\renewcommand{\b}{{}^{\bot}}
\newcommand{\lam}{\lambda}
\newcommand{\abstr}[2]{\lambda #1.#2}
\newcommand{\one}{1}
\newcommand{\imp}{\multimap}
\renewcommand{\lin}{\multimap}
\newcommand{\pair}[2]{\langle #1,#2 \rangle}
\newcommand{\llet}[4]{\mathsf{let}\;\langle #1,#2\rangle\;\mathsf{be}\;#3\;\mathsf{in}\;#4}
\newcommand{\ifthen}[3]{\mathsf{if}\;#1\;\mathsf{then}\;#2\;\mathsf{else}\;#3}
\newcommand{\ecimp}{\Box\!\!\Rightarrow} 
\newcommand{\ax}{\mathsf{ax}}
\tikzset{%
%
% tikzpictures
  ocenter/.style={baseline={([yshift=-.5ex, xshift=-.5ex]current bounding box)}},  
%
% NODE STYLES
  nospace/.style={inner sep= 0pt},
  etic/.style={inner sep= 0.5pt, fill=white, anchor= center},
%
% EDGE STYLES
  nopol/.style={->, shorten <=0.5pt, shorten >=0.5pt, draw=gray, line width=0.18ex},
  nopolrev/.style={<-, shorten <=1pt, shorten >=1pt, draw=gray, line width=0.18ex},
  nopolgen/.style={preaction={decorate},decoration={markings,mark=at position .5 with {\draw [shorten >=0pt, shorten <=0pt,draw=gray,-](0pt,3pt) -- (0pt,-3pt);}},->, shorten <=0.5pt, shorten >=0.5pt, draw=gray, line width=0.18ex},
  nopolrevgen/.style={preaction={decorate},decoration={markings,mark=at position .5 with {\draw [shorten >=0pt, shorten <=0pt,draw=gray,-](0pt,3pt) -- (0pt,-3pt);}},<-, shorten <=1pt, shorten >=1pt, draw=gray, line width=0.18ex},
  pos/.style={->, shorten >=0.5pt, shorten <=0.5pt, draw=blue, line width=0.18ex},
  neg/.style={->, densely dotted, shorten >=0.5pt, shorten <=0.5pt, draw=red, line width=0.18ex, overlay},
%
% BOX LINE STYLES
  jboxline/.style={draw= gray,rounded corners, line width=0.20ex, overlay},
  exboxline/.style={draw= gray,line width=0.20ex, overlay},
  noboxline/.style={draw= white,rounded corners, line width=0ex},
%
% NETS
  net/.style={draw=gray,inner sep=2pt,thick,ellipse, anchor=center, font=\scriptsize},
%
% REGISTERS
  reg/.style={draw=gray,inner sep=2pt,ultra thick,rectangle, anchor=center, font=\scriptsize},
%
% EVERY
  every label/.style={label distance = 1pt, font=\tiny, inner sep= 1pt},  
  every node/.style={font=\scriptsize }%\small
}
\newcommand{\altax}{8pt}
\newcommand{\stalt}{22pt}
\newcommand{\stlar}{15pt}
\newcommand{\hstalt}{\stalt/2}
\newcommand{\hstlar}{\stlar/2}
\newcommand{\ilar}{12pt}
\newcommand{\lzeroary}[5]{
% 1 = one principal port	
% 2 = name of the link
% 3 = link symbol
% 4 = distance link-node
% 5 = type of edge
\node at (#1.center) [above= #4, etic] (#2){#3};
\draw[#5] (#2) to (#1);}
\newcommand{\lunarysymbol}[4]{
% 1 = bottom port, 
% 2 = top port, 
% 3 = name of the link, 
% 4 = link symbol
\node at ($(#2.center) ! .5 ! (#1.center)$) [etic] (#3){ #4};
}
\newcommand{\lunary}[5]{
% 1 = principal port	
% 2 = auxiliary port
% 3 = name of the link
% 4 = link symbol
% 5 = edge line
\lunarysymbol{#1}{#2}{#3}{#4}

\draw[#5] (#2) to (#3);
\draw[#5] (#3) to (#1);
}
\newcommand{\lbinsym}[5]{
% 1 = principal port	
% 2 = left auxiliary port	
% 3 = right auxiliary port	
% 4 = name of the link
% 5 = link symbol
\node at ($(#2.center) ! .5 ! (#3.center) ! .5 ! (#1.center)$) [etic] (#4){#5};
}
\newcommand{\lbinedgesabove}[4]{
% 1 = left auxiliary port	
% 2 = right auxiliary port	
% 3 = name of the link
% 4 = edge type
\draw[#4, in=150, out=-90] (#1) to (#3);
\draw[#4, in=30, out=-90] (#2) to (#3);
}
\newcommand{\lbinary}[6]{
% 1 = principal port	
% 2 = left auxiliary port	
% 3 = right auxiliary port	
% 4 = name of the link
% 5 = link symbol
% 6 = edge type
\lbinsym{#1}{#2}{#3}{#4}{#5}

% upper edges of the link
\lbinedgesabove{#2}{#3}{#4}{#6}

\draw[#6] (#4) to (#1);
}
\newcommand{\stboxlw}{6pt}
\newcommand{\stboxh}{20pt}
\newcommand{\stboxrwaux}{35pt}
\newcommand{\boxnodes}[4]{
% 1 = link to box
% 2 = left width
% 3 = right width 
% 4 = height
\node at (#1.center)[left = #2,nospace](#1so){};
\node at (#1.center)[right = #3,nospace](#1se){};
\node at (#1se.center)[above=#4,nospace](#1ne){};
\node at (#1ne-|#1so)[nospace](#1no){};}
\newcommand{\boxline}[2]{
% 1 = box link
% 2 = line style
\draw[#2](#1.center) -- (#1se.center) -- (#1ne.center) -- (#1no.center) -- (#1so.center)--(#1.center);}
\newcommand{\abox}[5]{
% 1 = link to box
% 2 = line style
% 3 = right width 
% 4 = left width
% 5 = height
\boxnodes{#1}{#3}{#4}{#5}
\boxline{#1}{#2line}}
\newcommand{\ltens}[4]{
% 1 = principal port	
% 2 = left auxiliary port	
% 3 = right auxiliary port	
% 4 = name of the link
\lbinary{#1}{#2}{#3}{#4}{$\tens$}{nopol}}
\newcommand{\lpar}[4]{
% 1 = principal port	
% 2 = left auxiliary port	
% 3 = right auxiliary port	
% 4 = name of the link
\lbinary{#1}{#2}{#3}{#4}{$\parr$}{nopol}}
\newcommand{\lone}[2]{
% 1 = one principal port	
% 2 = name of the link
\lzeroary{#1}{#2}{\tiny $\mathsf{one}$}{\hstalt}{nopol}}
\newcommand{\lbot}[2]{
% 1 = one principal port	
% 2 = name of the link
\lzeroary{#1}{#2}{\tiny $\mathsf{bot}$}{\hstalt}{nopol}}
\newcommand{\lsync}[3]{
% parameters
% 1 = principal port	
% 2 = auxiliary port	
% 3 = name of the link
\lunary{#1}{#2}{#3}{\tiny $\blacksquare$}{nopol}
}
\newlength{\figurewidth}
\begin{document}

\renewcommand{\cut}{\mathsf{cut}}

\newcommand{\IAM}{\textsf{IAM}}
\newcommand{\QSIAM}{\textsf{QSIAM}}
\newcommand{\SIAM}{\textsf{SIAM}}
\newcommand{\varalphabet}{\mathcal A}

\newcommand{\SMLL}{\textsf{SMLL}}
\newcommand{\SMLLb}{\textsf{SMLL}}
\newcommand{\SMLLcf}{\textsf{SMLL}${}^0$}

\newcommand{\QSMLL}{\textsf{QSMLL}}

\newcommand{\botlk}{\mathsf{bot}}
\newcommand{\onelk}{\mathsf{one}}
\newcommand{\clk}{\mathsf{c}}
\newcommand{\axlk}{\mathsf{ax}}
\newcommand{\boxlk}{\mathsf{box}}

\newcommand{\up}{\uparrow}
\newcommand{\down}{\downarrow}
\newcommand{\at}{\gamma}

\newcommand{\before}{\less_p}

\newcommand{\formone}{A}
\newcommand{\formtwo}{B}
\newcommand{\formthree}{C}

\newcommand{\bnf}{::=}
\newcommand{\midd}{\; \; \mbox{\Large{$\mid$}}\;\;}

\newcommand{\atone}{\gamma}
\newcommand{\attwo}{\delta}

\newcommand{\posfone}{P}
\newcommand{\negfone}{N}

\newcommand{\netone}{R}
\newcommand{\nettwo}{R'}

\newcommand{\swone}{s}

\newcommand{\rednet}{\longrightarrow}

% % % % % % % % % % % % % % % % % % % % % % %
%\newcommand{\ONE}{{\tt ONES}}
\newcommand{\ONE}{\mathtt{ONES}}
\newcommand{\BOTBOX}{\mathtt{BOTBOX}}

\newcommand{\Pos}{\mathtt{POS}}

\newcommand{\PosI}{\mathtt{INIT}}
\newcommand{\PosF}{\mathtt{FIN}}

\newcommand{\pos}[1]{\mathtt{POS}(#1)}
\newcommand{\botcon}[1]{\mathtt{INIT}(#1)}  
%per me, le posizioni iniziali e finali possono anche essere atomi
\newcommand{\onecon}[1]{\mathtt{FIN}(#1)}
\newcommand{\botbox}[1]{\mathtt{BOTBOX}(#1)}
\newcommand{\oneany}[1]{\mathtt{ONES}(#1)}
\newcommand{\tksone}{\mathbf{T}}
\newcommand{\pssetone}{L}   %il problema di S, e' che e' variabile per proof-nets
\newcommand{\pssettwo}{Y}
\newcommand{\posone}{\mathbf{o}}
\newcommand{\postwo}{\mathbf{p}}
\newcommand{\posthree}{\mathbf{q}}
\newcommand{\states}[1]{\mathit{ST}(#1)}
\newcommand{\istates}[1]{\mathit{IST}(#1)}
\newcommand{\fstates}[1]{\mathit{FST}(#1)}
\newcommand{\trrel}{\hookrightarrow}
\newcommand{\trrelmult}{\looparrowright}
\newcommand{\stone}{s}
\newcommand{\sttwo}{t}
\newcommand{\sem}[1]{[\![#1]\!]}

% % % % % % % % % % % % % % % % % % %

\newcommand{\condmedskip}{}
%\hyphenation{op-tical net-works semi-conduc-tor}

%%%%%%%%%%%%%%%%%%%%%%%%%%%%%%%%%%%%%%%%%%%%%%%%%%%%%%
\title{The Geometry of Synchronization (Long Version)}
%%%%%%%%%%%%%%%%%%%%%%%%%%%%%%%%%%%%%%%%%%%%%%%%%%%%%%
\author{
  Ugo Dal Lago\footnote{University of Bologna \& INRIA, \mailugo}
  \and
  Claudia Faggian\footnote{CNRS, \mailclaudia}
  \and
  Ichiro Hasuo\footnote{University of Tokyo, \mailichiro}
  \and
  Akira Yoshimizu\footnote{University of Tokyo, \mailakira}}
\date{}

\maketitle

\begin{abstract}
We graft synchronization onto Girard's Geometry of Interaction in its
most concrete form, namely token machines. This is realized by
introducing proof-nets for \SMLL, an extension of multiplicative
linear logic with a specific construct modeling synchronization
points, and of a multi-token abstract machine model for it.
Interestingly, the correctness criterion ensures the absence of
deadlocks along reduction and in the underlying machine, this way
linking logical and operational properties.
\end{abstract}

%%%%%%%%%%%%%%%%%%%%%%%%
\section{Introduction}
%%%%%%%%%%%%%%%%%%%%%%%%
One of the reasons making Linear Logic~\cite{Girard87} a breakthrough
not only in proof theory but also in programming language semantics,
is that it enables an interactive view of computation through Game
Semantics and the Geometry of Interaction (GoI in the following). This
way, proofs and higher-order programs are seen as mathematical objects
with a rich interactive behaviour (composition in the former,
execution in the latter). One could say that Game Semantics focuses on
the interpretation of programs, the tool of election being full
abstraction, while Geometry of Interaction is a fine-grained model of
computation itself, and provides insights on quantitative aspects of
computation~\cite{BaillotPedicini,DalLago}, tools to allow
optimization~\cite{GAL} and guidelines in the design of a
compiler~\cite{Mackie,Pinto}. In either settings, one can describe the
dynamics of the interaction between programs and their environments in
several ways. In Game Semantics, this can take a categorical form, or
the operational form of an abstract machine, as in
\cite{CurienHerbelin}.  Similarly in GoI, the interaction can be
described either via automata~\cite{DanosRegnier}, or via traced
monoidal categories~\cite{Abramsky}. GoI can also be presented
algebraically, by way of operator algebras~\cite{GirardV}, or more
operationally as an algebra of clauses~\cite{GirardIII}.  Different
presentations suit different aims.

In this paper, we are interested in the most concrete presentation of
GoI, and in particular in the so-called Interaction Abstract Machines
(IAMs in the following). IAMs are bi-deterministic automata by which
one interprets $\lambda$-terms in such a way that $\beta$-equivalent
terms are interpreted by equivalent IAMs (i.e., IAMs computing the
same function).  A single run of the IAM interpreting a $\lambda$-term
does \emph{not} suffice to capture completely the behaviour of the
term itself: this in general requires multiple calls to the IAM, which
however can proceed in parallel, without any need for
synchronization. In a sense, this shows that GoI has the potential to
somehow capture the inherent parallelism of functional programs, and
this has been indeed exploited as a compilation technique through
(directed) virtual reduction~\cite{Pinto,Pedicini} (even though such
work deviates from purely interactive machines). The captured kind of
parallelism is however lacking a fundamental ingredient, since the
parallel components are not allowed to interact in non-trivial ways.

This work is a study of synchronization in the context of linear logic
proofs. More specifically, the contributions of this paper are
threefold:
\begin{varitemize}
\item 
  Proof-nets of multiplicative linear logic (\MLL\ in the
  following) are enriched so as to include a specific rule
  representing synchronization points. This is done by extending the
  kinds of links on top of which proof-structures are defined, then
  properly adapting Danos and Regnier's correctness criterion. The
  resulting system, called \SMLL, is shown to enjoy cut elimination.
\item 
  A specific kind of Interaction Abstract Machine, called \SIAM,
  is introduced and shown to be a model of \SMLL. Remarkably, \SMLL\
  nets have the property that the underlying \SIAM\ is deadlock-free.
\item 
  \SMLL\ is shown to be sufficiently rich to interpret a quantum
  $\lambda$-calculus akin to those recently introduced by Selinger and
  Valiron~\cite{SelingerValiron}. Synchronization plays the essential
  role of reflecting quantum entanglement, itself a crucial ingredient
  for the efficiency of quantum computation~\cite{Jozsa97}. This
  requires to extend \SMLL\ only slightly, by endowing
  proof-structures with quantum registers, but keeping the underlying
  logical structure essentially unchanged.
\end{varitemize}
%%%%%%%%%%%%%%%%%%%%%%%%%%%%%%%%%%%%%%%%%%%%%%%%%%%%%%%%%%%%%%%%%%%%%%%
\section{Linear Logic and Token Machines}
%%%%%%%%%%%%%%%%%%%%%%%%%%%%%%%%%%%%%%%%%%%%%%%%%%%%%%%%%%%%%%%%%%%%%%%
\newcommand{\true}{\mathtt{true}}
\newcommand{\false}{\mathtt{false}}
\newcommand{\notc}{\mathtt{not}}
\newcommand{\booltype}[1]{\mathit{B}_{#1}}
\newcommand{\PCF}{\textsf{PCF}}
\newcommand{\suc}{\mathit{S}}
\newcommand{\zero}{\underline{\mathit{0}}}
\newcommand{\cnot}{\mathsf{CNOT}}
\newcommand{\hada}{\mathsf{H}}
In this section, we will give some hints about how IAMs (close
variations of which include
token machines~\cite{Mackie} and context semantics~\cite{DanosRegnier}) are defined, pointing to the relevant literature on the subject.

Let's start with linear, simply-typed, $\lambda$-calculus. Even in the absence of constants, the language has a decent expressive
power~\cite{Terui}: all boolean circuits can be encoded into it. Booleans can be encoded as the two permutations on a two-element
set: $\true$ is the $\lambda$-term $\abstr{\pair{\varone}{\vartwo}}{\pair{\varone}{\vartwo}}$, while $\false$ is the 
$\lambda$-term $\abstr{\pair{\varone}{\vartwo}}{\pair{\vartwo}{\varone}}$; both can be given the same type 
$\booltype{\atomone}=\atomone\tens\atomone\lin\atomone\tens\atomone=(\atomone\b\parr\atomone\b)\parr(\atomone\otimes\atomone)$. Boolean functions can also be represented
in the calculus. As a simple example, consider the combinator $\notc=\abstr{\varone}{\abstr{\pair{\vartwo}{\varthree}}
{\varone\pair{\varthree}{\vartwo}}}$. As can be easily verified, the application $\notc\;\true$ has type $\booltype{\atomone}$
and $\beta$-reduces to $\false$. There's a different, ``reduction-free'' way to compute the result of the application $\notc\;\true$:
traveling inside (a graph-based representation of) the term. Indeed, a type derivation for
$\false$ in Figure~\ref{fig:pfnet_false} can be represented as the proof-net in Figure~\ref{fig:examples_false}. If we start from the \emph{leftmost} 
negative occurrence of $\atomone$ in its conclusion and track it in the natural way, we end up in the \emph{rightmost} 
positive occurrence of $\atomone$ in the conclusion (Figure~\ref{fig:pfnet_false_traceOne}). Similarly, if we start 
from the rightmost negative occurrence of $\atomone$, we arrive at the
rightmost positive occurrence of $\atomone$ (Figure~\ref{fig:pfnet_false_traceTwo}). 
As expected, the term $\true$ behaves the opposite. Generalizing a bit, this game reveals the shape of normal forms, and
the nice thing is that it can be played on terms which \emph{are not} in normal form, this way becoming a fully-fledged notion of 
computation. As an example, consider the proof-net in Figure~\ref{fig:examples_nottrue} which corresponds to the term $\notc\;\true$, 
and the paths (like in Figure~\ref{fig:pfnet_false_traceOne}) certifying that the term rewrites to $\false$. The way one traces atom occurrences along paths can be formalized as 
an automaton, the Interaction Abstract Machine (IAM in the following), whose states are atom occurrences which are associated with the edges of the graph,  and whose transitions only depend on the nodes in the underlying graph (the transitions for \MLL\ can be found in the first two rows of Figure~\ref{SIAM}).
\begin{figure*}
\begin{center}
\fbox{
\begin{minipage}{.97\textwidth}
\begin{center}
\raisebox{11pt}{\subfigure[]{\raisebox{7pt}{\includegraphics[scale=0.13]{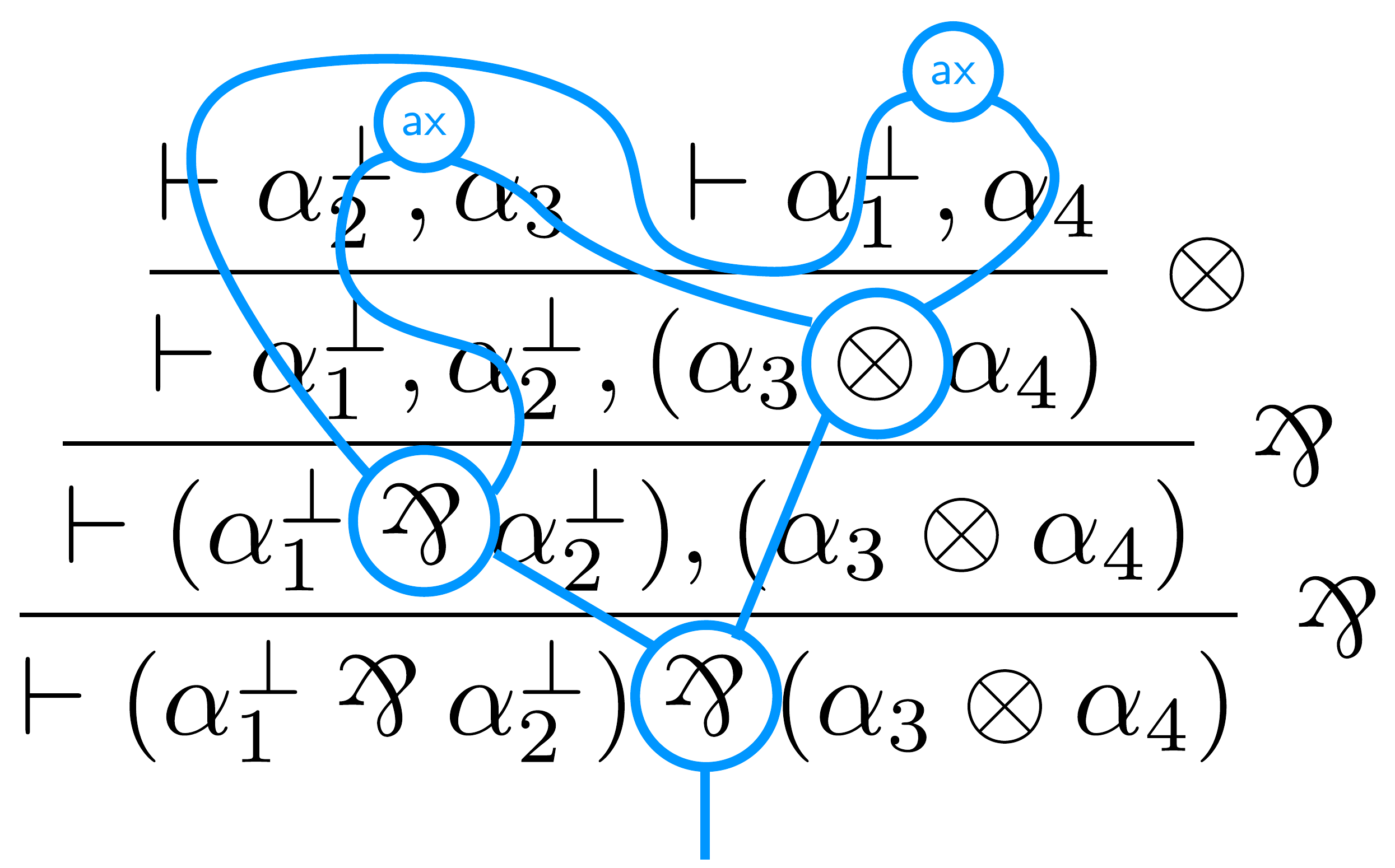}}\label{fig:pfnet_false}}}
\qquad\qquad
\raisebox{40pt}{\subfigure[]{\begin{tikzpicture}[ocenter]
%nodes
\node at (0,0) [etic](axlinkup){\tiny $\ax$};
\node at (axlinkup) [below = 2*\altax, etic](axlinkdown){\tiny $\ax$};
\node at (axlinkdown) [below left = 2*\altax and \ilar, etic](multlft){\tiny $\parr$};
\node at (axlinkdown) [below right = 2*\altax and \ilar, etic](multrht){\tiny $\otimes$};
\node at (multrht) [below left = 1.5*\altax and \ilar, etic](multcen){\tiny $\parr$};
\node at (multcen) [below = 1.5*\altax](concl){\tiny $B_\alpha$};
\node at (concl) [below= 1.5*\altax, etic](dummy){};
%edges
\draw[nopol, in=135, out=180, looseness=1.5] (axlinkup) to (multlft);
\draw[nopol, in=45, out=0] (axlinkup) to (multrht);
\draw[nopol, in=45, out=180, looseness=1.5] (axlinkdown) to (multlft);
\draw[nopol, in=135, out=0] (axlinkdown) to (multrht);
\draw[nopol, in=180, out=-90] (multlft) to (multcen);
\draw[nopol, in=0, out=-90] (multrht) to (multcen);
\draw[nopol, in=90, out=-90] (multcen) to (concl);
\end{tikzpicture}\label{fig:examples_false}}}
\qquad\qquad
\subfigure[]{\raisebox{7pt}{\includegraphics[scale=0.13]{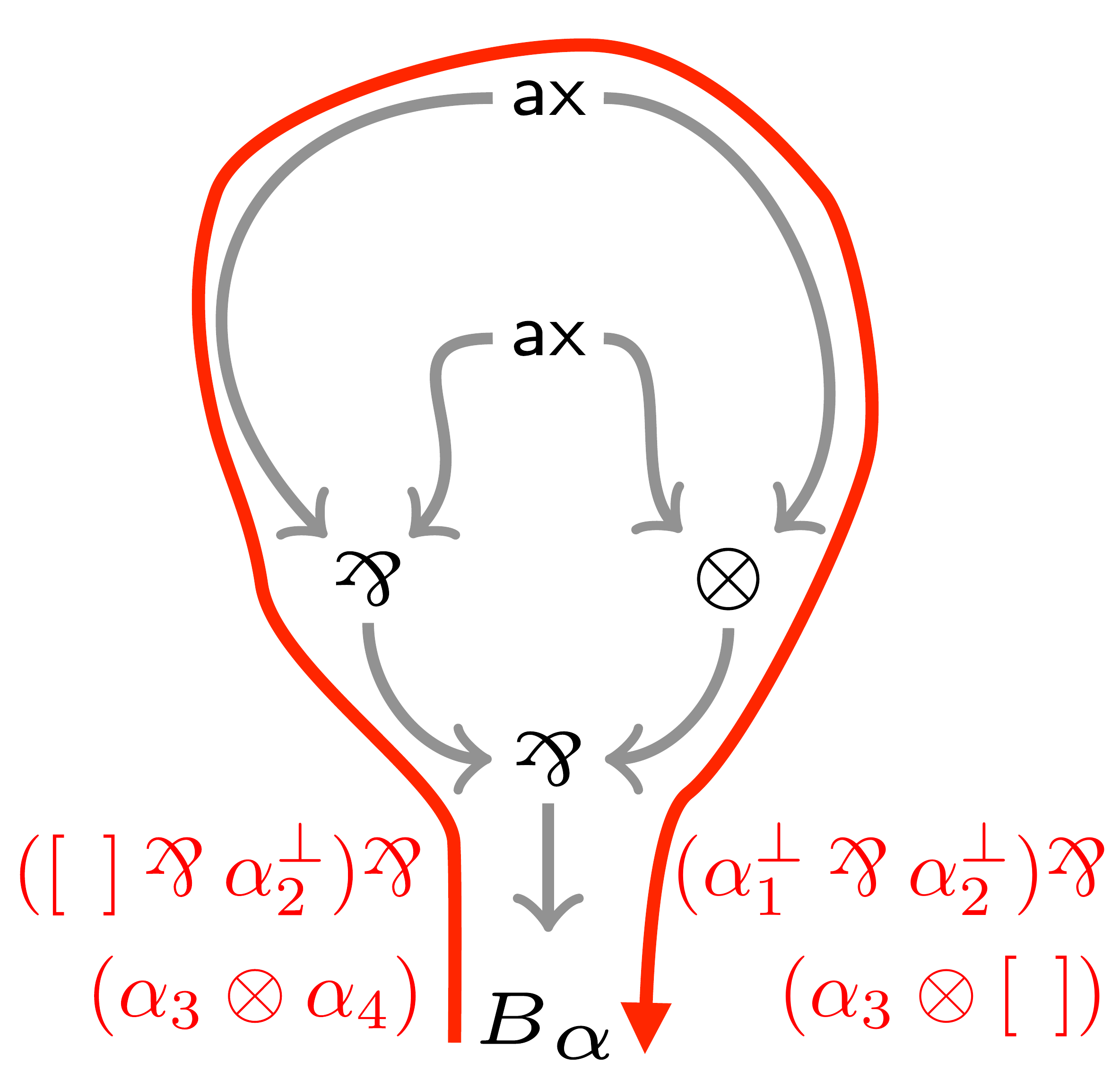}}\label{fig:pfnet_false_traceOne}}
\\ \vspace{6pt}
\subfigure[]{\raisebox{7pt}{\includegraphics[scale=0.13]{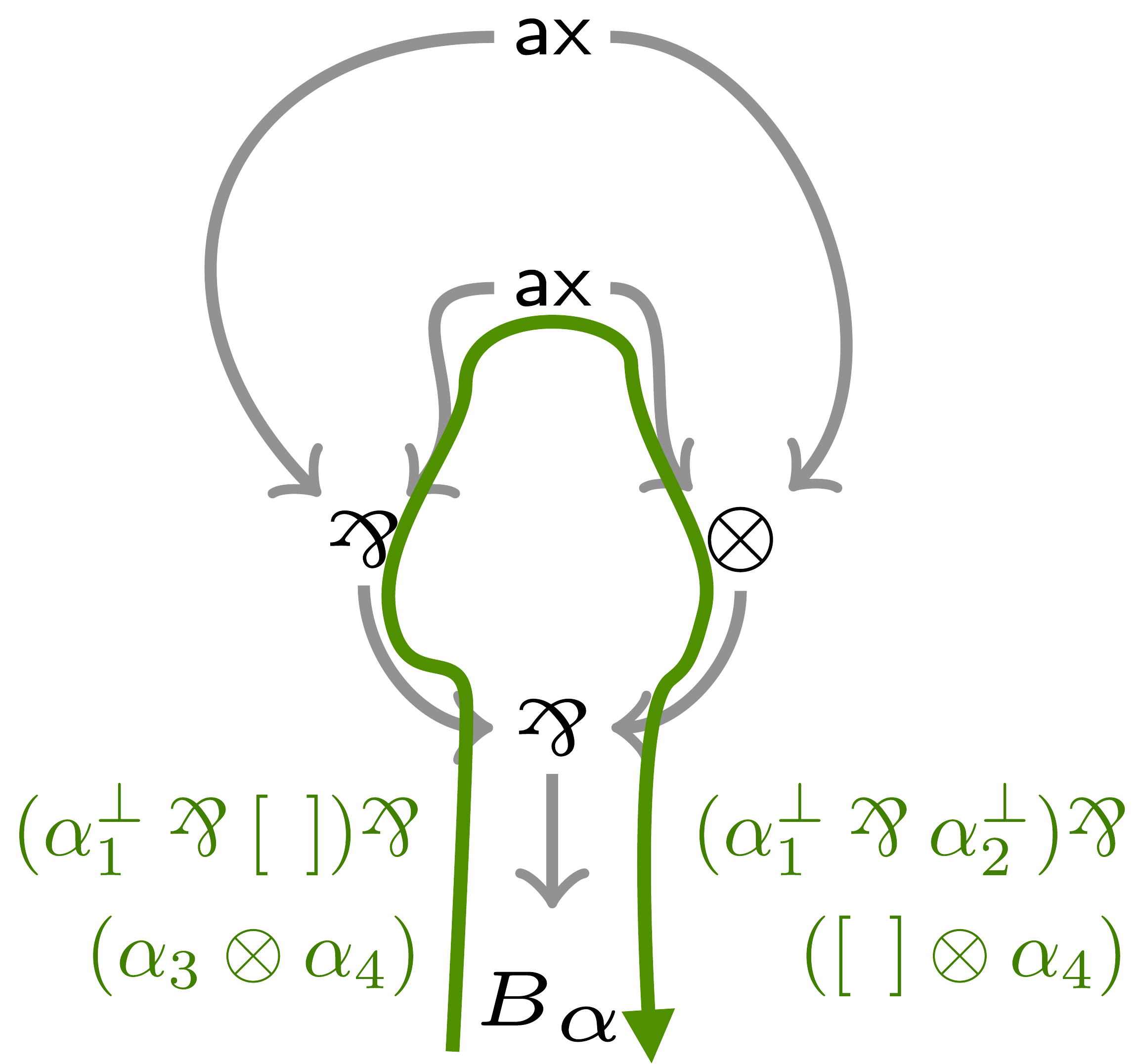}}\label{fig:pfnet_false_traceTwo}}
\qquad\qquad
\subfigure[]{\begin{tikzpicture}[ocenter]
%nodes
\node at (0,0) [etic](axlinklft){\tiny $\ax$};
\node at (axlinklft) [right = 2*\ilar, etic](axlinkrht){\tiny $\ax$};
\node at (axlinklft) [below = 2*\altax, etic](multlft){\tiny $\parr$};
\node at (axlinkrht) [below = 2*\altax, etic](multrht){\tiny $\otimes$};
\node at (multrht) [below left = 1.5*\altax and \ilar, etic](multcen){\tiny $\parr$};
\node at (multrht) [below= 3*\altax, etic](tensapp){\tiny $\otimes$};
\node at (multlft) [below= 4.5*\altax, etic](maincut){\tiny $\mathsf{cut}$};
\node at (multcen) [right= 2*\ilar, etic](axconcl){\tiny $\mathsf{ax}$};
\node at (axconcl) [below right= 2*\altax and \ilar, etic](concl){\tiny $B_\alpha$};
\node at (maincut) [above left= 1.5*\altax and 3*\ilar, etic](arglowpar){\tiny $\parr$};
\node at (arglowpar) [above= 2*\altax, etic](arguppar){\tiny $\parr$};
\node at (arguppar) [above left= 1.5*\altax and .7*\ilar, etic](argtens){\tiny $\otimes$};
\node at (argtens) [above left = 1.5*\altax and \ilar, etic](argaxlft){\tiny $\ax$};
\node at (argtens) [above right = 1.5*\altax and \ilar, etic](argaxrht){\tiny $\ax$};
\node at (maincut) [below= 1.5*\altax, etic](dummy){};
%edges
\draw[nopol, in=135, out=180, looseness=1.5] (axlinklft) to (multlft);
\draw[nopol, in=135, out=0] (axlinklft) to (multrht);
\draw[nopol, in=45, out=180, looseness=1.5] (axlinkrht) to (multlft);
\draw[nopol, in=45, out=0] (axlinkrht) to (multrht);
\draw[nopol, in=180, out=-90] (multlft) to (multcen);
\draw[nopol, in=0, out=-90] (multrht) to (multcen);
\draw[nopol, in=180, out=-90] (multcen) to (tensapp);
\draw[nopol, in=0, out=-90] (tensapp) to (maincut);
\draw[nopol, in=45, out=180] (axconcl) to (tensapp);
\draw[nopol, in=90, out=0] (axconcl) to (concl);
\draw[nopol, in=180, out=-90] (arglowpar) to (maincut);
\draw[nopol, in=45, out=-90, looseness=1.5] (arguppar) to (arglowpar);
\draw[nopol, in=135, out=-90, looseness=1.5] (argtens) to (arglowpar);
\draw[nopol, in=135, out=0] (argaxlft) to (argtens);
\draw[nopol, in=45, out=180] (argaxrht) to (argtens);
\draw[nopol, in=170, out=180, looseness=1.5] (argaxlft) to (arguppar);
\draw[nopol, in=45, out=0] (argaxrht) to (arguppar);
\end{tikzpicture}\label{fig:examples_nottrue}}
\\ \vspace{6pt}
\subfigure[]{\begin{tikzpicture}[ocenter]
%nodes
\node at (0,0) [etic](axlinklft){\tiny $\ax$};
\node at (axlinklft) [right = 2*\ilar, etic](axlinkrht){\tiny $\ax$};
\node at (axlinklft) [below = 2*\altax, etic](multlft){\tiny $\otimes$};
\node at (axlinkrht) [below = 2*\altax, etic](multrht){\tiny $\parr$};
\node at (multrht) [below left = 1.5*\altax and \ilar, etic](multcen){\tiny $\parr$};
\node at (multrht) [below right= 3*\altax and \ilar, etic](maincut){\tiny $\mathsf{cut}$};
\node at (maincut) [above right= 1.5*\altax and 2*\ilar, etic](arglowtens){\tiny $\otimes$};
\node at (arglowtens) [above right= 1.5*\altax and \ilar, etic](arguptens){\tiny $\otimes$};
\node at (arguptens) [above right= 1.5*\altax and 1.5*\ilar, etic](argax){\tiny $\ax$};
\node at (arglowtens) [above left= 1.5*\altax and \ilar, etic](argsucc){\tiny $\mathsf{succ}$};
\node at (arguptens) [above left= 1.5*\altax and \ilar, etic](argzero){\tiny $\mathsf{zero}$};
\node at (argax) [below right= 1.5*\altax and \ilar, etic](concl){\tiny $\mathbb{N}$};
\node at (maincut) [below= 1.5*\altax, etic](dummy){};
%edges
\draw[nopol, in=135, out=180, looseness=1.5] (axlinklft) to (multlft);
\draw[nopol, in=135, out=0] (axlinklft) to (multrht);
\draw[nopol, in=45, out=180, looseness=1.5] (axlinkrht) to (multlft);
\draw[nopol, in=45, out=0] (axlinkrht) to (multrht);
\draw[nopol, in=180, out=-90] (multlft) to (multcen);
\draw[nopol, in=0, out=-90] (multrht) to (multcen);
\draw[nopol, in=180, out=-90] (multcen) to (maincut);
\draw[nopol, in=0, out=-90] (arglowtens) to (maincut);
\draw[nopol, in=134, out=-90] (argsucc) to (arglowtens);
\draw[nopol, in=45, out=-90] (arguptens) to (arglowtens);
\draw[nopol, in=135, out=-90] (argzero) to (arguptens);
\draw[nopol, in=45, out=-180] (argax) to (arguptens);
\draw[nopol, in=90, out=0] (argax) to (concl);
\end{tikzpicture}\label{fig:examples_pcf}}
\qquad\qquad
\subfigure[]{\begin{tikzpicture}[ocenter]
%nodes
\node at (0,0) [etic](axlinklft){\tiny $\ax$};
\node at (axlinklft) [right = 2*\ilar, etic](axlinkrht){\tiny $\ax$};
\node at (axlinkrht) [below left = 1.5*\altax and .5*\ilar, etic](hada){\tiny $\mathsf{H}$};
\node at (axlinklft) [below = 3*\altax, etic](multlft){\tiny $\parr$};
\node at (axlinkrht) [below = 3*\altax, etic](multrht){\tiny $\otimes$};
\node at (multrht) [below = 1.25*\altax, etic](cnot){\tiny $\mathsf{CNOT}$};
\node at (multrht) [below left = 2.25*\altax and \ilar, etic](multcen){\tiny $\parr$};
\node at (multrht) [below right= 4.5*\altax and \ilar, etic](maincut){\tiny $\mathsf{cut}$};
\node at (maincut) [above right= 1.5*\altax and 3*\ilar, etic](arglowtens){\tiny $\otimes$};
\node at (arglowtens) [above left= 1.5*\altax and \ilar, etic](arguptens){\tiny $\otimes$};
\node at (arglowtens) [above right= 1.5*\altax and 1.5*\ilar, etic](argax){\tiny $\ax$};
\node at (arguptens) [above left= 1.5*\altax and \ilar, etic](argzerolft){\tiny $\mathsf{zero}$};
\node at (arguptens) [above right= 1.5*\altax and \ilar, etic](argzerorht){\tiny $\mathsf{zero}$};
\node at (argax) [below right= 1.5*\altax and \ilar, etic](concl){\tiny $\mathbb{Q}\tens\mathbb{Q}$};
\node at (maincut) [below= 1.5*\altax, etic](dummy){};
%edges
\draw[nopol, in=135, out=180, looseness=1.5] (axlinklft) to (multlft);
\draw[nopol, in=90, out=0] (axlinklft) to (hada);
\draw[nopol, in=135, out=-90] (hada) to (multrht);
\draw[nopol, in=45, out=180, looseness=1.5] (axlinkrht) to (multlft);
\draw[nopol, in=45, out=0] (axlinkrht) to (multrht);
\draw[nopol, in=180, out=-90] (multlft) to (multcen);
\draw[nopol, in=90, out=-90] (multrht) to (cnot);
\draw[nopol, in=0, out=-90] (cnot) to (multcen);
\draw[nopol, in=180, out=-90] (multcen) to (maincut);
\draw[nopol, in=0, out=-90] (arglowtens) to (maincut);
\draw[nopol, in=135, out=-90] (argzerolft) to (arguptens);
\draw[nopol, in=45, out=-90] (argzerorht) to (arguptens);
\draw[nopol, in=135, out=-90] (arguptens) to (arglowtens);
\draw[nopol, in=45, out=-180] (argax) to (arglowtens);
\draw[nopol, in=90, out=0] (argax) to (concl);
\end{tikzpicture}\label{fig:examples_quantum}}
\end{center}
\end{minipage}}
\caption{Nets and Abstract Machines --- Some Examples}\label{fig:examples_mll}
\end{center}
\end{figure*}

The ideas above have been extensively developed. In particular,
Interaction Abstract Machines have been defined for $\lambda$-calculi
in which duplication is indeed possible, and also for
\emph{applicative} $\lambda$-calculi, i.e., calculi endowed with
constants and possibly recursion. Take, as an example, the simple
\PCF\ term
$\termone=(\abstr{\varone}{\abstr{\vartwo}{\varone\vartwo}})\;\suc\;\zero$
where $\suc\colon\mathbb{N}\multimap\mathbb{N}$ and
$\zero\colon\mathbb{N}$ are constants for successor and zero,
respectively. The fact that the term above evaluates to $\one$, again,
can be observed by letting \emph{one} token travel inside the
(proof-net corresponding to a) type derivation for $\termone$, as
depicted in Figure~\ref{fig:examples_pcf}. The token now starts its
journey from the node labeled with $\mathsf{zero}$ carrying the
natural number ``built so far'', which initially is of course
$0$. After some re-routing induced by the multiplicative nodes $\parr$
and $\tens$, the token reaches the node labelled with $\mathsf{succ}$,
which modifies the natural number to $\one$, and the journey proceeds
until the conclusion.

How about quantum computation? Would it be possible to adapt the
scheme above to $\lambda$-calculi specifically designed for quantum
computation? Token machines seem to be a natural way to model
inherently linear calculi such as quantum $\lambda$-calculi. However,
if one tries to \emph{directly} apply the paradigm described above,
one soon gets into troubles. Consider, as an example, a term like
$\termtwo=(\abstr{\pair{\varone}{\vartwo}}{\cnot\pair{\hada{\varone}}{\vartwo}})\pair{\zero}{\zero}$,
where $\cnot$ and $\hada$ are certain unitary operations that act on
2- and 1-qubit systems, respectively.  This is an encoding of a
quantum circuit having the remarkable property of producing an
entangled pair of qubits in output. Let us try to play the same game
we played with $\termone$ on a graph-theoretic representation of
$\termtwo$ (see Figure~\ref{fig:examples_quantum}). If we allow a
token to start its journey from the leftmost occurrence of $\zero$, it
can of course reach $\hada$, go through it (having the underlying
qubit modified accordingly) but gets stuck at $\cnot$. Indeed, the
value of the first and second outputs of $\cnot$ can only be known
when \emph{both} inputs are available. But even more importantly, the
state of those qubits is an entangled state, i.e. it cannot be
described as the tensor product of the two qubits. Switching to a
setting in which an IAM state consists not of a single token but
possibly of multiple ones seems very natural now. Moreover, as is
clear from the $\cnot$ example, there should be a way to force those
many tokens to \emph{synchronize}, i.e., to wait until some of the
other tokens reaches a certain state, before proceeding.

In all the ``non quantum'' examples, \emph{multiple} tokens could travel the
net in parallel, but what we compute is exactly the same, because
the tokens do not interact with each other. Things are very different
in the quantum example. Summing up, as computational models the GoI
machines are effective, powerful (as shown by the results on optimal
reductions and implicit complexity from the literature). However, 
they have  limits:
\begin{varitemize}
\item 
  a \emph{general} limit in expressiveness, as they capture
  parallel computation, but without synchronization;
\item 
  a \emph{specific} limit, as it is not possible to model quantum
  computing without some form of synchronization.
\end{varitemize}
%%%%%%%%%%%%%%%%%%%%%%%%%%%%%%%%%%%%%%%%
\paragraph{Synchronizations and Deadlocks.}
%%%%%%%%%%%%%%%%%%%%%%%%%%%%%%%%%%%%%%%% 
As the reader may expect, the most delicate property in a multitoken
setting is deadlock freedom.  Consider Figure
\ref{fig_nodeadlock}. Squares connected with lines represent
synchronization points: tokens should cross simultaneously $s_1$ and
cross simultaneously $s_2$. In this configuration, if we have a token
on each of the three positions $\postwo_1$,$\postwo_2$,$\postwo_3$,
they are in deadlock.  In the following, we develop an approach which
reduces the absence of deadlocks in the machine $M_R$ interpreting a
proof-net $R$ to the correctness criterion of $R$, our \emph{motto}
then being the following:
\begin{center}
``Correctness of $R$'' $\quad\Rightarrow\quad$ ``Deadlock Freedom of $M_R$''.
\end{center}
\begin{figure}[htbp]
\begin{center}
  \fbox{
  \begin{minipage}{\figurewidth}
  \centering
  \vspace{4pt}
  \begin{tikzpicture}[ocenter]
\node at (0,0) [etic](sylftd){};
\node at (sylftd) [etic, above= 0.8*\stalt] (sylftcc){};
\node at (sylftcc) [etic, above= 0.02*\stalt] (sylftccaux){$\postwo_1$};
\lsync{sylftd}{sylftcc}{sylftcd};
\node at (sylftcd) [etic, below left = 0.2*\stalt and 0.3*\stalt, etic] (sylftcdlabel){$s_1$};
\node at (sylftcc) [etic, above=1.3*\stalt] (sylftuu){};
\node at (sylftd) [right= 4*\altax, etic](syctrd){};
\node at (syctrd) [etic, above= 0.8*\stalt] (syctrcc){};
\lsync{syctrd}{syctrcc}{syctrcd};
\node at (syctrcd) [etic, above= 0.8*\stalt] (syctru){};
\node at (syctru) [etic, above= 0.02*\stalt] (syctruaux){$\postwo_2$};
\lsync{syctrcd}{syctru}{syctrcu};
\node at (syctrcc) [etic, above=\stalt] (syctruu){};
\node at (syctrcd) [right=4*\altax, etic](syrhtcd){};
\node at (syrhtcd) [etic, above= 0.8*\stalt] (syrhtcu){};
\lsync{syrhtcd}{syrhtcu}{syrhtcc};
\node at (syrhtcc) [etic, right = 0.4*\stalt, etic] (syrhtcclabel){$s_2$};
\node at (syrhtcc) [etic, above= 0.8*\stalt] (syrhtu){};
\node at (syrhtu) [etic, above= 0.02*\stalt] (syrhtuaux){$\postwo_3$};
\lsync{syrhtcc}{syrhtu}{syrhtcu};
\node at (syrhtcc) [etic, above=1.3*\stalt] (syrhtuu){};
\draw[thick] (sylftcd) to (syctrcd);
\draw[thick] (syrhtcc) to (syctrcc);
\draw[thick] (syrhtcu) to[out=0,in=0] (syrhtuu) to[out=180,in=0] (sylftuu) to[out=180,in=180] (sylftcd);
\end{tikzpicture}
  \vspace{4pt}
  \end{minipage}}
  \condnocaptionrule\caption{Deadlocked Structures}\label{fig_nodeadlock}
\end{center}
\end{figure}
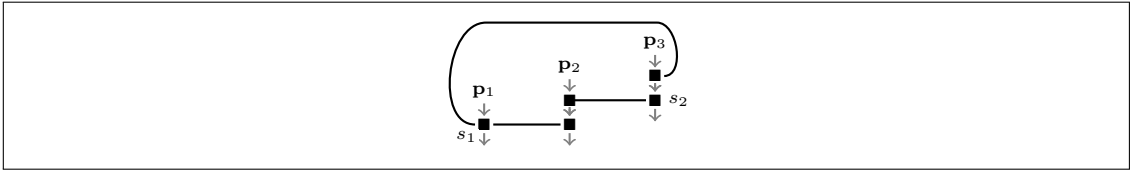

%%%%%%%%%%%%%%%%%%%%%%%%%%%%%
\paragraph{Some Related Work.}
%%%%%%%%%%%%%%%%%%%%%%%%%%%%%
The initial motivation leading us to the development of Geometry of
Synchronization was the study of quantum computation.

Multiple-token machines have already been investigated by the first
author and Margherita Zorzi in a recent unpublished
manuscript~\cite{DalLagoZorzi}. In this paper, we take a step back and 
analyze the construction from a logical point of view, showing how
parallelism and synchronization can be satisfactorily captured within
a slight variation of multiplicative linear logic.

A version of quantum proof-nets have recently been proposed by the
authors~\cite{esop}. Boxes are seen as a way to implement quantum
measurement, and the reader can find several useful examples
there. The proposed class of proof-nets has weaker properties,
however: the results on cut elimination are rather limited, and do not
allow us to study deadlock freedom as we do here. Having better proof
theoretical properties, as in the nets in this paper, allows us to go
further in the interpretation of quantum lambda calculi, which we
are able to simulate in a sound way. Moreover, we uncover and separate
the classic computational structure, making it independent from
quantum data. Not only this allows to modularize the results, but we
believe that other interesting application examples for the \SIAM,
besides quantum computing, can be found, such as distributed
implementations (see, e.g.~\cite{DBLP:conf/lics/FredrikssonG13}).

We finally like to mention other work which is related to \SMLL\ in that the aim is to capture  synchronization into a logical calculus, in particular  \cite{hirai} and \cite{DK00}.

\newcommand{\rednetP}{\rednet_2}

\newcommand{\redsiam}{\longrightarrow}

\newcommand{\tok}{\mathbf{T}}
\newcommand{\toktwo}{\mathbf{U}}
\newcommand{\Itok}{\mathbf I}
\newcommand{\Ftok}{\mathbf F}
\newcommand{\pp}{\mathbf p}

\renewcommand{\ss}{\mathbf s}

\newcommand{\clean}[1]{(#1)^\circle}

\newcommand{\ie}{\emph{i.e.}}
  
%%%%%%%%%%%%%%%%%%
\section{\SMLLb}
%%%%%%%%%%%%%%%%%
In this section we introduce \SMLLb\ nets, which are a generalization
of proof-nets for Multiplicative Linear Logic (\MLL). As a reference
to proof-nets, we suggest \cite{LaurentTorino} --- our approach to
correctness is close to the one described there.

\noindent\emph{Formulas.} The language of \SMLL\ \emph{formulas} is identical
to the one for \MLL, i.e.,
$$ 
\formone\bnf\one\midd\bot\midd X\midd X\b\midd\formone\otimes\formone\midd\formone\parr\formone,
$$
where $X$ ranges over a denumerable set of \emph{propositional
  variables}. The constants $1,\bot$ are the \emph{units}.  We call
\emph{atomic} those formulas which are either propositional variables
or units. %We will indicate atoms by $\alpha,\gamma$ etc.
Linear negation $(\cdot)\b$ is extended into an involution on all
formulas as usual: $X\equiv X\b\b$, $1\b\equiv \bot $, $\bot\b\equiv 1
$, $(\formone\otimes\formtwo)\b\equiv \formone\b\parr \formtwo\b $,
$(\formone\parr\formtwo)\b\equiv \formone\b\otimes \formtwo\b$.
Linear implication is a defined connective:
$\formone\lin\formtwo\equiv\formone\b\parr\formtwo$.
  
\noindent\emph{Polarized Formulas.} Atoms and connectives of
\MLL\ are divided in two classes: positive ($1, X, \otimes $) and
negative ($\bot, X\b, \parr $).  In this paper, to have a compact
presentation, we exclude from polarized formulas the propositional
variables, and define \emph{positive} formulas (denoted by $P$) and
\emph{negative} formulas (denoted by $N$) as follows: 
\begin{align*}
  \posfone&\bnf \one\midd\posfone\otimes\posfone;\\
  \negfone&\bnf \bot\midd\negfone\parr\negfone.
\end{align*}
Observe that the formula $\formone=\one\lin\one\equiv \bot\parr\one$
contains one negative occurrence of atom and one positive occurrence
of atom; thus $\formone$, in our setting, is neither positive nor
negative.

%%%%%%%%%%%%%%%%%%%%%%%%
\subsection{Structures}
%%%%%%%%%%%%%%%%%%%%%%%%
An \SMLLb\ \emph{structure} is a labeled directed (multi-)graph, where
the edges are labeled with \MLL\ formulas (the label of an edge is
called its \emph{type}). The alphabet of \emph{nodes}, given in Figure
\ref{sMLLlinks} (a,b,c), is the same as the one of \MLL, but extended
with a new link, called a \emph{sync link}. Altogether, we have \MLL\
links $\{\axlk, \cut, \otimes, \parr \}$, unit links
$\{\onelk,\botlk\}$, and sync links (Figure~\ref{fig:sync_links}),
which we detail below.

Graphically, we represent structures with the edges oriented from top
to bottom; we use accordingly terms like ``above'', ``below'', ``upwards'' and
``downwards''.  We call \emph{conclusions} (resp. \emph{premisses}) of a
link those edges represented below (resp. above) the link symbol.
\begin{figure}[htbp]
  \begin{center}
    \fbox{
      \centering
      \begin{minipage}{\figurewidth}
        \vspace{4pt}
\begin{center}
\subfigure[]{
% AXIOM
\begin{tikzpicture}[ocenter]
\node at (0,0) [label=below:$A$, etic](axposedge){};
\node at (axposedge.center) [above right = \altax and \ilar, etic](axlink){\tiny $\ax$};
\node at (axlink.center) [below right= \altax and \ilar,  etic](axnegedge){};
\node at (axnegedge.center) [etic, label = below: $A^\bot$, above right=2pt and 3pt](dummy){};
% edges
\draw[nopolrev, in=180, out=90] (axposedge) to (axlink);
\draw[nopol, in=90, out=0] (axlink) to (axnegedge);
\end{tikzpicture}
\qquad
% CUT
\begin{tikzpicture}[ocenter]
\node at (0,0) [label=above:$A$, etic](axposedge){};
\node at (axposedge.center) [below right = \altax and \ilar, etic](axlink){\tiny $\cut$};
\node at (axlink.center) [above right= \altax and \ilar, label=above:$A^\bot$, etic](axnegedge){};
% edges
\draw[nopolrev, in=-90, out=180] (axlink) to (axposedge);
\draw[nopol, in=0, out=-90] (axnegedge) to (axlink);
\end{tikzpicture}
\qquad
% TENSOR
\begin{tikzpicture}[ocenter]
\node at (0,0)[etic] (tensPal){\tiny $A\tens B$};
\node at (tensPal.center) [etic,above left = \stalt and \hstlar](tensLPax){\tiny $A$};
\node at (tensPal.center) [etic,above right = \stalt and \hstlar,label distance = 0.5pt](tensRPax){\tiny $B$};
\ltens{tensPal}{tensLPax}{tensRPax}{tensor};
\end{tikzpicture}
\qquad
% PAR
\begin{tikzpicture}[ocenter]
\node at (0,0)[etic] (parrPal){\tiny $A\parr B$};
\node at (parrPal.center) [etic,above left = \stalt and \hstlar](parrLPax){\tiny $A$};
\node at (parrPal.center) [etic,above right = \stalt and \hstlar,label distance = 0.5pt](parrRPax){\tiny $B$};
\lpar{parrPal}{parrLPax}{parrRPax}{par};
\end{tikzpicture}\label{fig:mll_links}}
\\ \vspace{5pt}
\subfigure[]{
% SYNC
\begin{tikzpicture}[ocenter]
\node at (0,0) [etic](syplftPal){\tiny $P_1$};
\node at (syplftPal.center) [etic, above= \stalt] (syplftPax){\tiny $P_1$};
\lsync{syplftPal}{syplftPax}{syplft};
\node at (syplftPal) [right= 3*\altax, etic](syprhtPal){\tiny $P_k$};
\node at (syprhtPal.center) [etic, above= \stalt] (syprhtPax){\tiny $P_k$};
\lsync{syprhtPal}{syprhtPax}{syprht};
\
\node at (syprhtPal) [right=2*\altax, etic](synlftPal){\tiny $N_1$};
\node at (synlftPal.center) [etic, above= \stalt] (synlftPax){\tiny $N_1$};
\lsync{synlftPal}{synlftPax}{synlft};
\node at (synlftPal) [right= 3*\altax, etic](synrhtPal){\tiny $N_m$};
\node at (synrhtPal.center) [etic, above= \stalt] (synrhtPax){\tiny $N_m$};
\lsync{synrhtPal}{synrhtPax}{synrht};
% dots
\node at (syplftPal.center) [right=1.5*\altax, etic](uldots){\tiny $\cdots$};
\node at (syplftPax.center) [right=1.5*\altax, etic](lldots){\tiny $\cdots$};
\node at (synlftPal.center) [right=1.5*\altax, etic](urdots){\tiny $\cdots$};
\node at (synlftPax.center) [right=1.5*\altax, etic](lrdots){\tiny $\cdots$};
% links
\draw[thick] (syplft) to (syprht);
\draw[thick] (syprht) to (synlft);
\draw[thick] (synlft) to (synrht);
\end{tikzpicture}\label{fig:sync_links}}
\qquad
\subfigure[]{
% ONE
\begin{tikzpicture}[ocenter]
\node at (0,0) [etic](onePal){\tiny $1$};
\lone{onePal}{one};
\end{tikzpicture}
\quad
% BOT
\begin{tikzpicture}[ocenter]
\node at (0,0) [etic](botPal){\tiny $\bot$};
\lbot{botPal}{bot};
\end{tikzpicture}\label{fig:unit_links}}
\qquad
\subfigure[]{
\begin{tikzpicture}[ocenter]
% bang
\node at (0,0) [etic](pospal){};
\abox{pospal}{exbox}{\stboxlw}{\stboxrwaux-10pt}{\stboxh+6pt}
\node at (pospal.center) [etic](possym){$\bot$};
% auxiliary ports
\node at (pospal.center)[right=\stlar, etic](posauxkport){\tiny $\Gamma$};
\node at (posauxkport)[below=\hstalt*1.2, etic](posauxkportcon){\tiny $\Gamma$};
% net g
\node at (pospal.center)[above right=0.8*\stalt and \stlar, net](netg){$R$};
\draw[nopolgen, in=90, out=-90](netg)to(posauxkport);
\draw[nopolgen, in=90, out=-90](posauxkport)to(posauxkportcon);
% bot node
\node at (possym.center) [above=1.4*\altax, etic](botlink){\tiny $\mathsf{bot}$};
\draw[nopol, in=90, out=-90] (botlink) to (possym);
% principal port
\node at (pospal.center)[below = \hstalt*1.2, etic](posprincport){\tiny $\bot$};
\draw[nopol](possym)to(posprincport);
\end{tikzpicture}\label{fig:box_links}}
\end{center}
      \end{minipage}}
    \condnocaptionrule\caption{\SMLL\ Links}\label{sMLLlinks}  % (a.)= \MLL\, (a.)+(b.)=\SMLLcf\,(a.)+(b.)+(c.)=\SMLL\}
  \end{center}
\end{figure}
The sort of a link induces constraints on the number and the labels of
its premisses and conclusions, as shown in Figure \ref{sMLLlinks}. The
graph can have pending edges, \ie, some edges may not have a target;
the pending edges are called the \emph{conclusions} of the
structure. We will often say that a link ``has a conclusion (premiss)
$\formone$'' as shortcut for ``has a conclusion (premiss) of type
$\formone$''.  When we need more precision, we distinguish between an
edge and its type, and we use variables such as $e,f$ for the edges.

\noindent\emph{Sync links.}
A sync link has $n$ premisses, and $n$ conclusions. For each $i$
($1\leq i\leq n$) the $i$-th premiss $e_i$ and the
\emph{corresponding} $i$-th conclusion $f_i$ are typed by the
\emph{same} formula, which is either positive or negative.  To stress
the correspondence between the $i$-th premiss and the $i$-th
conclusion, we find it convenient to graphically represent an $n$-ary
sync link as a list of $n$ nodes connected by untyped edges.  In the
example in Figure~\ref{fig:sync_links} , we have $n=k+m$ edges, which
are typed with $k$ positive and $m$ negative formulas.

We now need some specific terminology. An edge is \emph{positive}
(\emph{negative}) if its type is positive (or negative); we also say
that such an edge is \emph{polarized}. A node is polarized if its
conclusions are all polarized.  All edges of sync links are polarized;
we will borrow some of the terminology from polarized linear logic
\cite{phdlaurent}.  Given a sync link, we call \emph{in-edges} its
positive premisses and negative conclusions, and call \emph{out-edges}
the positive conclusion and negative premisses.

Intuitively, a sync link acts on an edge of type $A$, but does not
introduce $A$.  We call \emph{sync path} a path which traverses only
sync links, going in and coming out on corresponding edges.  We say
that an edge $e$ of type $A$ is a \textit{hereditary conclusion} of a
link $l$ if there is a sync path from the conclusion of $l$ to $e$ .
 
\noindent\emph{Units and unit-free fragment.} The units are $\one$ and
$\bot$, respectively introduced by the links $\onelk$ and $\boxlk$. To
the $\botlk$ link is associated a notion of box
(Figure~\ref{fig:box_links}) which we discuss next.
 
We indicate by \SMLLcf\ the fragment of \SMLLb\ without unit links
(therefore in particular without boxes); in \SMLLcf, the formulas $1$
and $\bot$ are hence only introduced as conclusions of axioms. Even if
minimal, \SMLLcf\ is actually an interesting system in itself, and
especially well behaved; we will study its specific properties.
 
%%%%%%%%%%%%%%%%%%%%%%%%%%%%%%%
\paragraph{Structures with Boxes.}
%%%%%%%%%%%%%%%%%%%%%%%%%%%%%%%
We use boxes to represent  the rule for $\bot$:
$$
\infer{\vdash \bot, \Gamma}{\vdash \Gamma}
$$
The definition of structures with boxes which we adopt is standard.
\emph{In short}, a box of conclusions $\bot,\Gamma$ contains a
structure $R$ of conclusions $\Gamma$, and a distinguished $\botlk$
link of conclusion $\bot$.  Such a conclusion $\bot$ is the
\emph{lock} of the box, and $R$ is its \emph{content}.  We represent a box
graphically as in Figure \ref{sMLLlinks}(d), where the structure $R$
is represented as a circle inside the box, and the barred edge
labelled by $\Gamma$ stands for a sequence of edges (the conclusions
$\Gamma$).

More formally, an \SMLLb\ \emph{structure with boxes} is an \SMLLb\
structure together with a function which associates to each node $l$
of sort $\botlk$ a sub-structure $R$ of conclusions $\Gamma$  (as mentioned above, $l$ and $R$ are depicted in a box).  Boxes
are required to be either one included in the other, or disjoint. The
\emph{depth} of a node is the number of boxes to which it belongs,
while the depth of a structure is the maximal depth of its nodes. The
lock of a box acts as a guard, as will be evident when we define
normalization and the \SIAM.
%%%%%%%%%%%%%%%%%%%%%%%%
\subsection{Correctness}
%%%%%%%%%%%%%%%%%%%%%%%%
A \emph{net} is a structure (with boxes) which fulfills a correctness
criterion. We define correctness by means of {switching paths} (see
\cite{LaurentTorino}). A \emph{switching path} on the structure $R$ is
an undirected path\footnote{By path, in this paper we always mean a
  \emph{simple path} (no repetition of either nodes or edges ).}
which uses:
\begin{varitemize}
\item 
  for each $\parr$ link,  at most one of the two premisses;
\item 
  for each  sync link, at most one of its \emph{out-edges}. 
\end{varitemize} 
The former condition is standard, the  latter condition rules out paths such as the one 
 going from $P_1$ (below)
  to $N_1$ (above) or to $P_K$ (below) in Figure~\ref{sMLLlinks}(b).
  
Let us first state correctness for  \SMLLcf, as it is as immediate as for \MLL:
an \SMLLcf\ structure  is  \emph{correct}  if none of its switching paths is cyclic.
Correctness for an \SMLLb\ structure $R$ is defined by levels, as usual with boxes.

We call \emph{0-graph} of $R$ the restriction to depth $0$ of the graph
which is obtained from $R$ by replacing each box of conclusion $\bot,
\Gamma$ with a new sort of node, labelled as $\boxlk$, which has the
same conclusions $\bot, \Gamma$ (like $\axlk$, a $\boxlk$ node has no
premisses). An \SMLLb\ structure with boxes is correct --- and is said
to be an \SMLLb\ \emph{net} --- if the following conditions hold:
\begin{varenumerate}
\item  
 there is no switching cycle in the $0$-graph of $R$;
\item  
  the structure inside each box is itself correct.
\end{varenumerate}
It is immediate to verify that if we only consider \MLL\ links, we
simply have a formulation in terms of switching paths of the usual
``acyclicity condition'' in the Danos-Regnier criterion\footnote{The
  Danos-Regnier criterion~\cite{DanosRegnierMult} is actually made of
  two conditions namely ``acyclic" and "connected"; however
  connectedness only role is to rule out the ``mix'' rule from the
  sequent calculus. This is not relevant in our development, so we
  will ignore it (if wished, one can introduce in the standard way
  also a connectedness condition; we would then speak of
  \emph{connected} nets).}.

The correctness criterion is a key ingredient to guarantee that
synchronizations behave well, i.e. that there are no deadlocks,
neither in the normalization nor in the \SIAM\ machine.
%%%%%%%%%%%%%%%%%%%%%%%%%%%%%%%
\paragraph{Absence of Deadlocks.}
%%%%%%%%%%%%%%%%%%%%%%%%%%%%%%%
As already discussed, the central issue associated to the introduction
of synchronizations is the need to guarantee the absence of deadlocks,
both in the normalization of the nets, and in the runs of the \SIAM.
We now introduce some technical notions and give a lemma which will be
our main tool in all proofs of deadlock freedom.  It is common to
verify deadlock freedom by using a notion of strict partial order, and
this is the case also in our setting.  More precisely, we define a
partial order on the sync links; the order corresponds to a notion of
dependency that will become clear when we define the \SIAM\
machine. We prove that the order is a strict partial order; this
indicates that there is always at least one sync link which does not
depend on any other one.  
\renewcommand{\before}{\prec}
\newcommand{\linkone}{l}

Given two links $\linkone_1, \linkone_2$ of an \SMLLb\ net, we write
$\linkone_1\before\linkone_2$ (and we say that $\linkone_1$ is
\emph{before} $\linkone_2$) if there is a \emph{polarized path} from
$\linkone_1$ to $\linkone_2$, i.e., a path of polarized edges
(connecting polarized nodes) which is going \emph{upwards on negative}
edges, and \emph{downwards on positive} edges. We ask that a polarized
path does not enter boxes.

%\condmedskip
\begin{lemma}[Links Strict Order]\label{no_deadlocks} 
 Given a net $R$, the  set of its links equipped with the relation $\before$ is a finite strict partial order.
\end{lemma}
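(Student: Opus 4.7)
The plan is to verify that $\prec$ is finite (trivial, since $R$ has finitely many links), irreflexive, and transitive. The key observation that drives both irreflexivity and transitivity, and the step I expect to be the main obstacle, is the claim that \emph{every polarized path is also a switching path}. Once this is established, the correctness criterion of $R$ supplies irreflexivity and a standard graph-theoretic argument supplies transitivity.

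To show that polarized paths are switching paths, recall that a polarized path visits only polarized nodes and traverses each positive edge downward and each negative edge upward. At a sync link I would classify its edges: the in-edges are the positive premisses and the negative conclusions, while the out-edges are the positive conclusions and the negative premisses. A polarized path entering the sync link either comes down along a positive premiss or goes up along a negative conclusion---in both cases through an in-edge---and, by symmetry, it exits through an out-edge. Hence the path uses exactly one out-edge per sync link, satisfying the switching condition for sync links. At a $\parr$ link visited by the path, polarization forces all its edges to be negative, so the path enters from below along the conclusion and leaves upward through exactly one of the two premisses, satisfying the switching condition for $\parr$. Getting this in/out accounting right at sync links is where the definition of correctness and the definition of polarized path must line up precisely, and that is the delicate point of the proof.

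Irreflexivity then follows at once. Suppose $l \prec l$; then the polarized path from $l$ to itself forms a cycle, and since polarized paths do not cross box borders it lies at a single depth (the $0$-graph of $R$, or the $0$-graph of the content of some box). By the previous step it is a switching cycle, contradicting the correctness of $R$ applied at that depth. For transitivity, given polarized paths $\pi_1 \colon l_1 \to l_2$ and $\pi_2 \colon l_2 \to l_3$, their concatenation is a polarized walk from $l_1$ to $l_3$; because the polarization constraint is local to each individual edge traversal, iteratively excising cycles from this walk (the standard walk-to-path reduction) preserves polarization and yields a simple polarized path witnessing $l_1 \prec l_3$. Together with finiteness, this shows that $\prec$ is a finite strict partial order.
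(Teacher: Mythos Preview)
Your proof is correct and follows essentially the same approach as the paper: the central observation that every polarized path is a switching path (by the in-edge/out-edge analysis at sync links and the all-negative analysis at polarized $\parr$ links) is exactly what the paper uses, and irreflexivity is then immediate from the correctness criterion. The only difference is cosmetic: the paper dismisses transitivity as holding ``by construction'' while you spell out the concatenation-and-cycle-excision argument; both amount to the same thing, and your version is simply more explicit.
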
 
%\condmedskip 
The result follows from the fact that a polarized path
$p$ is in particular a switching path (as one can easily check,
noticing that if $p$ crosses a sync link, it uses at most one
out-edge); hence a polarized path is never cyclic, and the relation is
irreflexive. As a consequence, configurations like the ones in Figure
\ref{fig_nodeadlock} are not possible. 
\begin{proof}  
  The relation is transitive by construction. To prove that it is
  irreflexive, we show that a polarized path cannot be cyclic, by
  proving that a polarized path $p$ is in particular a switching path.
  If $p$ crosses a sync link, it uses at most one out-edges.  If $p$
  crosses a par link $l$, we know that $l$ is a polarized node, which
  implies that both its premisses are negative; therefore $p$ uses at
  most one of the premisses.
\end{proof}
%%%%%%%%%%%%%%%%%%%%%%%%%%%%
\subsection{Normalization}
%%%%%%%%%%%%%%%%%%%%%%%%%%%%
We define a set of rewriting rules on \SMLLb\ nets.  The elementary
reduction steps are given in Figure \ref{sync_red}. Reduction is
intended to happen at level $0$, i.e. reduction \emph{cannot} take place
inside a box. This way we obtain a rewrite relation on structures, called
$\rightarrow$.
\begin{figure}[htbp]
  \begin{center}
  \input{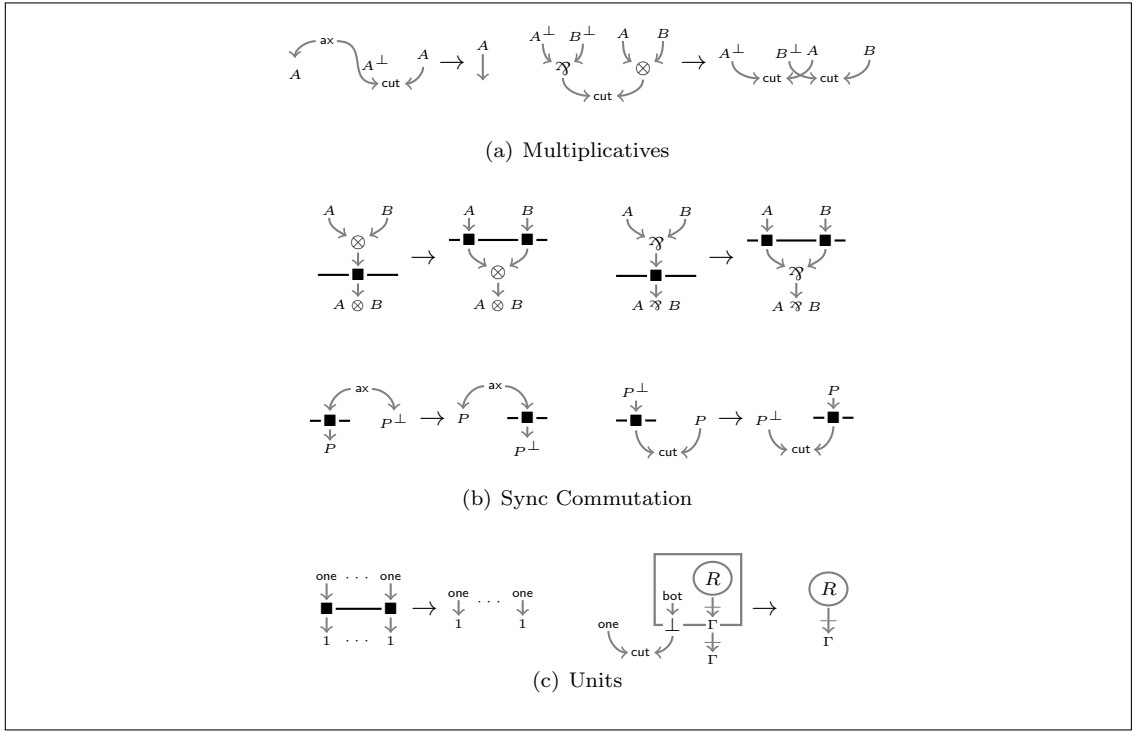}
  \condnocaptionrule\caption{Reduction}\label{sync_red}
  \end{center}
\end{figure}
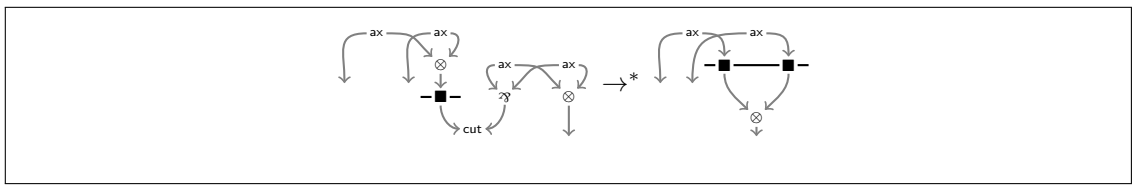
\begin{figure}[htbp]
  \begin{center}
  \fbox{
\begin{minipage}{\figurewidth}
\vspace{5pt}
\begin{center}
\begin{tikzpicture}[ocenter]
%nodes
\node at (0,0) [etic](axlinklftlft){\tiny $\ax$};
\node at (axlinklftlft) [right = 2*\ilar, etic](axlinklftrht){\tiny $\ax$};
\node at (axlinklftrht) [below right = 1.5*\altax and 2*\ilar, etic](axlinkrhtlft){\tiny $\ax$};
\node at (axlinkrhtlft) [right = 2*\ilar, etic](axlinkrhtrht){\tiny $\ax$};
\node at (axlinklftlft) [below left = 2.5*\altax and \ilar, etic](premlft){};
\node at (axlinklftlft) [below right = 2.5*\altax and \ilar, etic](premrht){};
\node at (axlinklftrht) [below = 1.5*\altax, etic](lfttens){\tiny $\otimes$}; 
\node at (axlinkrhtlft) [below = 1.5*\altax, etic](rhtpar){\tiny $\parr$};
\node at (axlinkrhtrht) [below = 1.5*\altax, etic](rhttens){\tiny $\otimes$};
\node at (axlinklftrht) [below right = 4.5*\altax and \ilar, etic](cut){\tiny $\mathsf{cut}$};
\node at (lfttens) [below = 1.5*\altax, etic](sync){\tiny $\blacksquare$};
\node at (rhttens) [below = 2*\altax, etic](conc){};
\node at (sync) [etic,left=\hstlar](synlft){};
\node at (sync) [etic,right=\hstlar](synrht){};
%edges
\draw[nopol, in=90, out=180, looseness=1.5] (axlinklftlft) to (premlft);
\draw[nopol, in=135, out=0, looseness=1.5] (axlinklftlft) to (lfttens);
\draw[nopol, in=90, out=180, looseness=1.5] (axlinklftrht) to (premrht);
\draw[nopol, in=45, out=0, looseness=1.5] (axlinklftrht) to (lfttens);
\draw[nopol, in=90, out=270] (lfttens) to (sync);
\draw[nopol, in=180, out=270] (sync) to (cut);
\draw[nopol, in=135, out=180, looseness=1.5] (axlinkrhtlft) to (rhtpar);
\draw[nopol, in=135, out=0, looseness=1.5] (axlinkrhtlft) to (rhttens);
\draw[nopol, in=45, out=180, looseness=1.5] (axlinkrhtrht) to (rhtpar);
\draw[nopol, in=45, out=0, looseness=1.5] (axlinkrhtrht) to (rhttens);
\draw[nopol, in=0, out =270] (rhtpar) to (cut);
\draw[nopol, in=90, out=270] (rhttens) to (conc);
\draw[thick] (sync) to (synlft);
\draw[thick] (sync) to (synrht);
\end{tikzpicture}
$\rightarrow^*$
\begin{tikzpicture}[ocenter]
%nodes
\node at (0,0) [etic](axlinklft){\tiny $\ax$};
\node at (axlinklftlft) [right = 2*\ilar, etic](axlinkrht){\tiny $\ax$};
\node at (axlinklft) [below left = 2.5*\altax and \ilar, etic](premlft){};
\node at (axlinklft) [below = 2.5*\altax, etic](premrht){};
\node at (axlinkrht) [below left = 1.5*\altax and \ilar, etic](lftsync){\tiny $\blacksquare$};
\node at (axlinkrht) [below right = 1.5*\altax and \ilar, etic](rhtsync){\tiny $\blacksquare$};
\node at (axlinkrht) [below = 4*\altax, etic](tens){\tiny $\otimes$}; 
\node at (tens) [below = \altax, etic](conc){};
\node at (lftsync) [etic,left=\hstlar](synlft){};
\node at (rhtsync) [etic,right=\hstlar](synrht){};
%edges
\draw[nopol, in=90, out=180, looseness=1.5] (axlinklft) to (premlft);
\draw[nopol, in=90, out=0, looseness=1.5] (axlinklft) to (lftsync);
\draw[nopol, in=90, out=180, looseness=1.5] (axlinkrht) to (premrht);
\draw[nopol, in=90, out=0, looseness=1.5] (axlinkrht) to (rhtsync);
\draw[nopol, in=135, out=270] (lftsync) to (tens);
\draw[nopol, in=45, out=270] (rhtsync) to (tens);
\draw[nopol, in=90, out=270] (tens) to (conc);
\draw[thick] (lftsync) to (synlft);
\draw[thick] (rhtsync) to (synrht);
\draw[thick] (lftsync) to (rhtsync);
\end{tikzpicture}
\end{center}
\vspace{3pt}
\end{minipage}}
  \condnocaptionrule\caption{Use of $sync/\otimes$}\label{special_main}
  \end{center}
\end{figure}
Altogether, we have:
\begin{varitemize}
\item[(a)]
  the standard \MLL\ reductions: $\ax/ \cut$ and $\otimes/\parr$; 
\item[(b)]
  a set of commutations with the sync links:
  \begin{varitemize}
  \item 
    the first two rules push the sync links towards the axioms; a
    typical example of the use of the $sync/\otimes$ reduction is
    given in Figure \ref{special_main}.  they make more and more
    explicit on which atoms the synchronization is acting.
  \item 
    The $\ax/sync$ rule allows to deal with axioms which are not
    atomic. %such as the case depicted in Figure \ref{special_red}(a);
  \item 
    The $\cut/sync$ rule allow us to deal with the case in which
    $P\b$ is the lock of a box. % (Figure \ref{special_red}(b)).
  \end{varitemize}
\item[(c)] 
  Two rules which are concerned with the unit links, and
  which respectively eliminate synchronizations and boxes:
   \begin{varitemize}
   \item 
     the first rule erases a sync link whose premisses are all
     conclusions of $\onelk$ links;
   \item 
     the reduction $\onelk$/$\botlk$ \emph{opens} a box.
   \end{varitemize}
\end{varitemize}
It is important to notice that (i) there are no commutations between
sync links and that (ii) there are no commutations with a box.  
\begin{rem}[A Minimalistic Alternative] 
  Several variations on the proposed rewrite rules are indeed
  possible. Here we have made a choice of \emph{generality}, and of
  positive-negative \emph{symmetry}. If one aims at obtaining a
  minimalistic system with all the good properties (in particular,
  confluence and cut elimination), one can also choose to apply sync
  links only to positive formulas. If we assume that all axioms are
  atomic (i.e., if we make a hypothesis of $\eta$-expansion on the
  axioms), it turns out that instead of the four rules in Figure
  \ref{sync_red}(b), we only need the first one (the $sync/\otimes$
  reduction).
\end{rem}
It is now time to study the main properties of the relation $\rightarrow$.
The following has the flavour of Subject Reduction, where correctness plays
the role of typability:
\begin{lemma}[Preservation of Correctness]
  If $R$ is a correct structure and $R\rightarrow Q$, then $Q$ is
  correct, too.
\end{lemma}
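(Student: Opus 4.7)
The plan is to proceed by case analysis on the reduction rule applied. Correctness has two parts: acyclicity of switching paths in the $0$-graph, and correctness of every box. Since reduction is forbidden inside boxes, every box that is strictly nested inside another box is untouched, so the box-correctness condition is preserved for all internal boxes. The work is therefore to check, for each reduction rule, that the $0$-graph of $Q$ remains acyclic, and to verify that in the one case where a box disappears (the $\onelk/\botlk$ rule) the $0$-graph of $Q$ inherits its acyclicity partly from the original $0$-graph and partly from the internal correctness of the opened box.

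For the standard \MLL\ cases ($\axlk/\cut$ and $\tens/\parr$) the argument is the familiar Danos--Regnier one: any switching cycle in $Q$ is lifted to a switching cycle in $R$ by routing it through the disappearing axiom link, or by noting that a cycle meeting either of the two new cuts in a $\tens/\parr$ redex must in fact cross them in a way that corresponds to a cycle through the original $\tens$--$\parr$ pair.

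For the sync commutations ($sync/\tens$, $sync/\parr$, $\axlk/sync$, $\cut/sync$), I would show that a switching cycle in $Q$ pulls back to a switching cycle in $R$ after re-inserting the moved sync node(s). The delicate ingredient is the switching condition for sync links, which restricts the use of \emph{out-edges}. In each case, moving a sync across a multiplicative, axiom, or cut link preserves the polarity (hence the in/out classification) of each edge involved, so any switching that is legal in $Q$ induces a legal switching in $R$, and the cycles correspond. A careful but routine check of the four local configurations in Figure~\ref{sync_red}(b) closes these cases.

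The remaining cases are the unit rules. Erasure of a sync link whose premisses are all $\onelk$-conclusions is immediate: the local redex is disconnected from any cycle, so cycles in $Q$ extend back to cycles in $R$ by re-inserting the sync. The main obstacle is the $\onelk/\botlk$ case, which opens a box and is the only rule that mingles two distinct depth levels. Here I would use internal correctness of the box: a switching cycle in the new $0$-graph either stays entirely inside the (previously boxed) content, contradicting internal correctness, or it enters and exits the content along some edges of $\Gamma$. In the latter case, contracting the content back to the original $\boxlk$ node (together with the now-erased $\onelk$, $\cut$, and $\botlk$) turns the cycle into a switching cycle in the original $0$-graph of $R$, contradicting correctness of $R$. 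This is the step I expect to require the most bookkeeping, since one must argue that the switching choices made inside the content do not block the contraction.
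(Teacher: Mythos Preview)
The paper states this lemma without proof, so there is no authors' argument to compare against; your case analysis is the standard route and is essentially correct.

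One imprecision worth flagging: for the $\axlk/sync$ and $\cut/sync$ commutations, your sentence ``moving a sync across a multiplicative, axiom, or cut link preserves the polarity (hence the in/out classification) of each edge involved'' is not literally true. In $\axlk/sync$ the sync passes from acting on $P$ to acting on $P^\bot$, and in $\cut/sync$ the reverse happens; the polarity of the sync's premiss and conclusion flips. What \emph{is} preserved is the topological correspondence: the unique out-edge of that sync component in $R$ connects (possibly through the axiom or cut) to the same part of the net as the unique out-edge of the component in $Q$. So the lifting of switching paths goes through, but the justification is this edge-by-edge correspondence, not polarity preservation. Your parenthetical ``a careful but routine check'' covers it, but the stated reason would mislead a reader.

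For the box-opening case, your argument is right in spirit but you should note explicitly how to handle a cycle that crosses the $\Gamma$-boundary more than twice: contracting the whole content back to a single $\boxlk$ node would then revisit that node. The fix is the usual one --- take any two consecutive boundary crossings with the segment between them lying \emph{outside} the content; that outside segment, closed up through the $\boxlk$ node in $R$'s $0$-graph, already gives a switching cycle in $R$.
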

By standard arguments, and by Lemma \ref{no_deadlocks}, one can prove
that no infinite sequence of reduction can be built:
\begin{prop}[Normalization]\label{termination}
  The relation $\rightarrow$ is strongly normalizing.
\end{prop}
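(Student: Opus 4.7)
The plan is to exhibit a lexicographic measure $\mu(R)=(\mu_b(R),\mu_w(R),\mu_o(R))\in\NN^3$ on \SMLL\ nets and to check that every reduction step strictly decreases $\mu$ in the lexicographic order. Since $\NN^3$ with the lex order is well-founded, this yields strong normalization.

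For the first component, $\mu_b(R)$ is the number of boxes at depth zero. Only the $\onelk/\botlk$ reduction affects this value, and it strictly decreases it by one; every other rule, being confined to depth zero and not involving a box, preserves $\mu_b$. Hence, once $\onelk/\botlk$ is handled by this component, it suffices to show that the remaining rules strictly decrease the pair $(\mu_w,\mu_o)$.

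For the second component, $\mu_w(R)$ is the sum of the sizes of the formulas carried by the edges that are either a premiss of a $\cut$ or an edge attached to a sync link at depth zero. The $\ax/\cut$ rule and the $\otimes/\parr$ cut reduction strictly decrease $\mu_w$ by the usual \MLL\ cut-elimination analysis; the \emph{sync/tensor} and \emph{sync/par} commutations replace a single sync edge of type $A\otimes B$ (resp.\ $A\parr B$) by two sync edges of types $A$ and $B$, which is a strict decrease since $|A\otimes B|=|A\parr B|=|A|+|B|+1$; and the unit commutation that erases a sync attached only to $\onelk$-nodes also decreases $\mu_w$, since the whole sync link disappears.

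The only rules that leave $\mu_w$ unchanged are \emph{ax/sync} and \emph{cut/sync}, which move a sync link past an axiom or cut without changing formula sizes (as $|P|=|P^{\bot}|$). For these we rely on the third component, where Lemma~\ref{no_deadlocks} enters. Since $\before$ is a strict partial order on the links of a net, we may assign to each sync link $s$ an integer rank $r(s)$ equal to the length of the longest $\before$-chain ending at $s$, and set $\mu_o(R)=\sum_s r(s)$. In each of these two commutations, the sync in question moves from a position where the ambient axiom or cut dominates it in $\before$ to a position on the opposite polarity where it itself dominates the axiom or cut along a polarized path, which should strictly lower its rank. The step I expect to be the main obstacle is verifying that this local rearrangement does not raise the rank of any other sync link, since the $\before$-relation is defined by global polarized paths and a local rewriting can reroute them; the proof requires a careful case analysis on how polarized paths can traverse the rewritten region, and rests critically on the acyclicity of $\before$ furnished by Lemma~\ref{no_deadlocks}. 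If this naive rank is not monotone enough, a tie-breaker counting, for each (axiom-or-cut, sync) pair, the polarity on which the sync currently sits can be added as a further component in the lexicographic order.
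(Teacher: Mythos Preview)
Your lexicographic strategy matches the paper's, but two of your three components fail as stated.

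First, $\mu_w$ can \emph{increase} under the sync/$\otimes$ commutation. Take a sync link whose positive conclusion of type $A\otimes B$ is at the same time a premiss of a cut. That edge is counted once before (it is both a sync edge and a cut premiss) and once after (still a cut premiss); the internal edge of size $|A\otimes B|$ between $\otimes$ and the sync disappears; and four new sync edges of total size $2(|A|+|B|)$ appear (the two old $\otimes$-premisses, now sync premisses, and the two new sync conclusions). The net change is $|A|+|B|-1$, positive whenever $A,B$ are not both units. The paper avoids this interaction by taking as first component $N_R+w_R$, with $N_R$ the number of nodes and $w_R=\sum_{l\ \text{sync}}w(l)$ the total weight of the formulas on sync conclusions: the \MLL\ and unit steps shrink $N_R$, while sync/$\otimes$ and sync/$\parr$ shrink $w_R$ by exactly~$1$, independently of what lies below.

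Second, your $\mu_o$ need not decrease under ax/sync. The rewrite turns the local polarized chain $Y\before a\before l\before X$ into $Y\before l\before a\before X$. If $l$ has another in-slot fed by a long chain through some~$w$, then $r(l)$ is unchanged (it is realised via $w$), while every path that enters $l$ through $w$ and exits towards $X$ is now one edge longer (it must traverse $a$); hence $r(l')$ strictly \emph{grows} for any sync $l'$ downstream of~$X$. The paper's second component is instead $\rho_R=\sum_{l\ \text{sync}}\bigl|\{s: s\text{ is }\axlk\text{ or }\cut,\ s\before l\}\bigr|$. The point is that reachability $u\before v$ is invariant for all $u,v\neq a$ (the entry/exit interface of the region $\{a,l\}$ is the same set $\{Y,w_i\}\times\{X,o_j\}$ before and after), while the pair $(a,l)$ is counted before and not after (now $l\before a$, and $\before$ is acyclic by Lemma~\ref{no_deadlocks}); so $\rho_R$ drops by at least one. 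Finally, a smaller slip: $\mu_b$ should count \emph{all} boxes, not those at depth~$0$, since opening a box may surface inner boxes.
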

\begin{proof}
\newcommand{\weight}{w}
Given a formula $\formone$, its \emph{weight} $\weight(\formone)$ is
the number of connectives in the formula.  Given a sync link
$\linkone$ of conclusions $\formone_1,\ldots,\formone_n$, we define
its weight $\weight(l)$ as $\sum_1^n \weight(A_i)$, and its rank
$\rho(l)$ as the number of cut and axiom links $s$ such that $s
\before l$.  Let $R$ be an \SMLL\ net. We associate to $R$ the pair
$(N_R + \weight_R, \rho_R)$, where $N_R$ is the number of nodes in
$R$, $\weight_R$ is $\sum_{l\in L} \weight(l)$ and $\rho_R$ is
$\sum_{l\in L} \rho(l)$, for $L$ the set of sync links in $R$.  We
order such pairs with the lexicographic order. Termination is
consequence of the fact that any multiplicative or unit reduction
decreases the parameter $N_R$ (possibly increasing $\rho_R$), a
commutation sync/$\otimes$ or sync/$\parr$ decreases $\weight_R$ ,
sync/$\ax$ and sync/cut strictly reduce $\rho_R$\textit{ (because
  $\before$ is a strict partial order on a finite set).}
\end{proof}
In principle, many rewrite rules can be applied to a given
structure. However, this form of nondeterminism is harmless.
\begin{prop}[Confluence]\label{confluence}
  The relation $\rightarrow$ is confluent.
\end{prop}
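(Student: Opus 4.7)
The plan is to invoke Newman's Lemma: since $\rightarrow$ is strongly normalizing by Proposition~\ref{termination}, it suffices to establish \emph{local} confluence, i.e., whenever $R \rightarrow Q_1$ and $R \rightarrow Q_2$, there exists $R'$ with $Q_1 \rightarrow^* R'$ and $Q_2 \rightarrow^* R'$. This is the standard route for confluence of graph-rewriting systems once termination is in hand.

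If the two redexes act on disjoint portions of $R$ (no shared link), they trivially commute, closing in one step on each side. The remaining work is a case analysis on \emph{critical pairs}, i.e., pairs of overlapping redexes. Following the classification of Figure~\ref{sync_red}, the cases to check are: (i) two multiplicative redexes ($\ax/\cut$ or $\tens/\parr$) overlapping; (ii) a multiplicative redex overlapping a sync commutation; (iii) two sync commutations acting on the same sync link; (iv) a unit reduction overlapping any other redex. The first family is standard \MLL\ confluence. The second family closes by direct inspection of the rule diagrams: for instance, a sync link on the conclusion of a $\tens$ that is cut against a $\parr$ can be either commuted first and then the two resulting smaller cuts reduced, or the $\tens/\parr$ can be fired first and the sync then commuted down towards its axiom ancestors -- and the two paths meet (as is essentially illustrated by Figure~\ref{special_main}). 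The third family is immediate because there are no commutations between distinct sync links, and because the commutations for a single sync link act independently on its various edges.

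The delicate family is the fourth. For the sync-on-$\onelk$ erasure, any overlapping sync commutation would require the sync link to have a non-$\onelk$ premiss, contradicting the applicability of the erasure, so no genuine overlap occurs there. The most demanding critical pair is the $\onelk/\botlk$ box-opening against a $\cut$/sync commutation whose cut is the one between the $\onelk$ and the lock of the box: one must verify that opening the box and then propagating the sync inside the former content yields the same net as first commuting the sync onto the lock and then opening the box. I expect this to be the main technical obstacle, because it requires a careful accounting of the sync link and of the edges that were inside the box; however, the acyclicity provided by Lemma~\ref{no_deadlocks} together with preservation of correctness along reduction ensure that no new deadlock or spuriously duplicated sync can arise, so all these diagrams close diagrammatically. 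Once every critical pair is shown to be joinable, Newman's Lemma combined with Proposition~\ref{termination} delivers confluence.
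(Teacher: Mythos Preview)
Your high-level strategy---local confluence plus Newman's Lemma via Proposition~\ref{termination}---is exactly the paper's. But your critical-pair analysis is misdirected in two places. First, the case you call ``most demanding'' (box-opening against a $\cut$/sync commutation sharing the same cut) is \emph{not} a critical pair: for the $\onelk$/$\botlk$ rule the negative premiss of the cut must be the lock of a box, whereas for $\cut$/sync that same premiss must be the conclusion of a sync link; these conditions are mutually exclusive, so the two redexes never overlap. Second, your description of family~(ii)---a sync below a $\tens$ that is cut against a $\parr$---does not give an overlap either: if the sync sits between the $\tens$ and the cut, the $\tens/\parr$ redex is simply not present (the conclusion of the $\tens$ is a premiss of the sync, not of the cut), and $\cut$/sync does not apply because the sync is on the \emph{positive} premiss. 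Your appeal to Lemma~\ref{no_deadlocks} in the confluence argument is also out of place; that lemma is used for termination, not for closing rewrite diagrams.

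The paper's actual analysis is much shorter. Apart from the standard \MLL\ critical pairs, the \emph{only} genuine overlap involving sync is when one edge of a sync link is simultaneously the conclusion of a $\parr$ (so the sync/$\parr$ commutation applies) and the negative premiss of a cut (so the $\cut$/sync commutation applies). One inspects this single case by hand to see that the two reducts join, and Newman's Lemma concludes. Your skeleton is right; replace your case analysis by this one non-trivial pair.
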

\begin{proof}
  The \MLL critical pairs are immediate to solve. The only critical
  pair on sync reductions is when a sync link $l$ is below a $\parr$
  and above a cut.  By inspecting this case, we see that the rewriting
  relation is \textit{locally confluent}. By using strong
  normalization, we conclude by Newman Lemma.
\end{proof}
The last two results imply that the normal form of any
\SMLL\ structure \emph{exists and is unique}. This, by itself, does
not mean that cuts can be eliminated, but is an essentially step towards
it.
%%%%%%%%%%%%%%%%%%%%%%%%%%%%
\subsection{Cut-Elimination}
%%%%%%%%%%%%%%%%%%%%%%%%%%%%
We now study cut elimination, i.e., the property that nets in normal
form contain no cuts. \SMLLcf\ turns out to have especially good
properties in this respect.

We say that a cut link at depth $0$ is \emph{ready} if neither of its
premisses is hereditary conclusion of a box. That is, above
  each premiss of the cut, there is at least a link of sort in
  \{$\axlk,\onelk,\otimes,\parr$ \} which is not inside a box. By a
straightforward case analysis, we can prove that: 
\begin{lemma}\label{readycut_lem}
 Let $R$ be an \SMLLb\ net. If there is a ready cut, then $R$ is not a normal form.
\end{lemma}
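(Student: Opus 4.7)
Let $c$ be a ready cut with premisses $e_1, e_2$; my plan is a short case analysis on the link immediately above each premiss, with the definition of readiness doing the heavy lifting to rule out the troublesome configurations.

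First I would unpack readiness. The condition says that neither premiss is a hereditary conclusion of a box, equivalently that going upward from $e_i$ through a (possibly empty) sync path one eventually reaches a link of sort in $\{\axlk,\onelk,\otimes,\parr\}$ at depth $0$. In particular, no sync path from $e_i$ terminates at the $\boxlk$ node that represents a box in the $0$-graph.

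Then I would split according to whether a sync link sits immediately above one of $e_1, e_2$. If such a sync link $s$ exists with, say, $e_1$ among its conclusions, then $e_1$ is polarized and, by duality, so is $e_2$; the $\cut/\mathsf{sync}$ commutation of Figure~\ref{sync_red}(b) then fires. If no sync link sits immediately above either premiss, then the links $l_1, l_2$ immediately above $e_1, e_2$ lie at depth $0$ and belong to $\{\axlk,\onelk,\otimes,\parr\}$, since a $\boxlk$ node is forbidden by readiness and sync links have been excluded by assumption.

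I then finish the second case by duality. If either $l_i$ is an axiom, the $\ax/\cut$ rule applies. Otherwise $l_i=\onelk$ is impossible: its type $1$ would force the dual premiss to be typed $\bot$, but at depth $0$ only an axiom can produce $\bot$ (the link $\botlk$ only occurs inside boxes, and those are excluded by readiness). Hence $l_1, l_2 \in \{\otimes,\parr\}$; duality makes one $\otimes$ and the other $\parr$, so the $\otimes/\parr$ reduction fires. In every branch $R$ contains a redex, hence is not in normal form.

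The only delicate point I foresee, and the one I would verify carefully, is the translation of readiness into the statement that the sort reached at the top of a sync path from a premiss lies in $\{\axlk,\onelk,\otimes,\parr\}$ at depth $0$; this is what lets me discard the $\botlk/\boxlk$ possibility in the second case. Everything else is direct pattern matching against the reduction rules of Figure~\ref{sync_red}, so no deeper machinery (such as Lemma~\ref{no_deadlocks}) is needed here.
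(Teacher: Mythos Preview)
There is a genuine gap in your first case. The $\cut/\mathsf{sync}$ commutation of Figure~\ref{sync_red}(b) is \emph{not} symmetric: it only fires when the sync link sits above the \emph{negative} premiss of the cut (the one typed $P^\bot$), moving the sync to the positive side. If instead the sync link $s$ is above the positive premiss $e_1$ while the negative premiss $e_2$ is, say, the conclusion of a $\parr$ link, then no rule fires at this cut, and your argument stalls.

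The paper closes this gap by a different decomposition. It first assumes that \emph{no sync reduction applies anywhere} (otherwise $R$ is trivially not normal), and then invokes Lemma~\ref{sync_norm_form}. Point~(3) of that lemma says that under this assumption only the positive premiss of a cut can be the conclusion of a sync link; point~(2), applied to the corresponding premiss of the sync link and combined with readiness (which rules out the box alternative), forces the type to be $1$. Hence the negative cut premiss has type $\bot$, which at depth~$0$ (boxes and syncs now excluded) can only be the conclusion of an axiom, and $\ax/\cut$ fires.

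Your direct approach can be repaired, but it needs an extra step you did not write: when the sync sits on the positive side, follow the sync path upwards from $e_1$. Readiness guarantees the top link is not a box, and since the type is positive it cannot be a $\parr$; so it is $\axlk$, $\otimes$, or $\onelk$. In the first two cases the $\axlk/\mathsf{sync}$ or $\mathsf{sync}/\otimes$ commutation fires somewhere along the path; in the third the type is $1$ and you finish as the paper does. As written, though, the claim that ``$\cut/\mathsf{sync}$ fires'' is false in general.
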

It follows immediately that \SMLLcf\ enjoys cut elimination:
\begin{theorem}[\SMLLcf\ Normal Forms]\label{cut_el}
  If $R$ is an \SMLLcf\ net in normal form, then $R$ is \emph{cut
    free}.
\end{theorem}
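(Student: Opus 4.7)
The plan is to derive this as an almost immediate corollary of Lemma \ref{readycut_lem} by showing that, in the unit-free fragment, \emph{every} cut is automatically ready. The argument is by contraposition: assuming $R$ contains a cut, I will exhibit a ready cut and then invoke the lemma to conclude that $R$ cannot be in normal form.

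First, I would recall that \SMLLcf\ is defined as the fragment of \SMLLb\ without unit links, and in particular without the $\botlk$ link. Since boxes are, by definition, attached to $\botlk$ nodes, an \SMLLcf\ net contains no boxes at all. Two consequences follow directly: (i) every link of $R$ lives at depth $0$, so in particular any cut link of $R$ is at depth $0$; and (ii) the condition ``a premiss of the cut is a hereditary conclusion of a box'' is vacuously false, since there are no $\botlk$ links and hence no boxes of which anything could be a hereditary conclusion.

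Combining (i) and (ii) with the definition of ``ready'' given just before Lemma \ref{readycut_lem}, any cut link occurring in an \SMLLcf\ net is ready. Now suppose, for contradiction, that $R$ is an \SMLLcf\ net in normal form that contains at least one cut. Pick any such cut; by the observation above it is ready. By Lemma \ref{readycut_lem}, $R$ is not a normal form, contradicting the hypothesis. Hence $R$ contains no cut, i.e.\ $R$ is cut free.

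I do not expect any real obstacle: the whole content of the theorem is that the obstruction to cut elimination in \SMLLb, namely cuts blocked under a box, cannot arise in \SMLLcf\ by construction. The only care needed is to be explicit that removing the $\onelk$ and $\botlk$ links from the link alphabet removes all boxes (since a box is syntactically anchored to a $\botlk$ lock), so that the ``readiness'' hypothesis of Lemma \ref{readycut_lem} is trivially satisfied by every cut link.
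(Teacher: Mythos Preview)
Your proposal is correct and follows exactly the route the paper takes: it states Theorem~\ref{cut_el} as an immediate consequence of Lemma~\ref{readycut_lem}, the point being precisely that in \SMLLcf\ there are no $\botlk$ links and hence no boxes, so every cut is at depth~$0$ and trivially ready. Your write-up is in fact more explicit than the paper, which just says ``It follows immediately''.
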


We now turn our attention to the whole \SMLLb. While \SMLLcf\ enjoys
cut elimination with no conditions, in the presence of boxes we
restrict our attention to the \emph{closed} case, i.e., the case in
which no $\bot$ appears in the conclusions.  We observe that the
reduction rules define a \emph{lazy} cut elimination procedure,
because there are no commutations with a box.  In the closed case,
lazy cut elimination is enough to eliminate all cuts.  The proof makes
essential use of Lemma \ref{no_deadlocks}, together with an adaptation
of Girard's analogous result for multiplicative-additive proof-nets
\cite{parallel_syntax}, which in our setting can be reformulated as
follows:
\begin{lemma}[Lazy Cut Elimination] \label{lazy_lem} 
  Let $R$ be a closed \SMLLb\ net. If $R$ is normal, then $R$ is cut-free.
\end{lemma}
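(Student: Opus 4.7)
The plan is to argue by contradiction, combining Lemma \ref{readycut_lem} with the acyclicity guaranteed by Lemma \ref{no_deadlocks}. Suppose $R$ is a closed \SMLLb\ net in normal form that contains a cut.

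First, I would show that $R$ must contain at least one $\boxlk$ link at depth $0$: otherwise every depth-$0$ cut is ready, and Lemma \ref{readycut_lem} provides a redex, contradicting normality. The depth-$0$ boxes being finite, Lemma \ref{no_deadlocks} lets me pick a $\before$-minimal one, which I denote $\boxlk^*$.

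Next I would trace the lock of $\boxlk^*$, an edge of type $\bot$, downward. Closedness forbids $\bot$ from appearing among the conclusions of $R$, so this edge is a premiss of some depth-$0$ link. Normality rules out the obstructing configurations one by one: a sync link on the negative side of a cut would fire $\cut/\mathsf{sync}$; a sync link below a $\otimes$ or a $\parr$ would fire $\mathsf{sync}/\otimes$ or $\mathsf{sync}/\parr$; and if the lock fed into a $\parr$, the resulting compound formula would, by closedness, have to meet a dual at a cut whose other principal link is either a $\otimes$ (giving a $\otimes/\parr$ redex), an axiom (giving an $\ax/\cut$ redex), or the $\Gamma$ of another box (pushing the analysis one step further). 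Running the cases together, the lock of $\boxlk^*$ has to appear as the negative premiss of some depth-$0$ cut $c$; the upward polarized path across this $\bot$ edge yields $c \before \boxlk^*$.

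I would then analyze the positive premiss of $c$, of type $1$. Following it upward through any sync links (which normal form does permit on the positive side), I reach a non-sync source, which can only be (i) a $\onelk$, (ii) an axiom $\vdash 1, \bot$, or (iii) another box $\boxlk'$ whose $\Gamma$ carries the $1$. In case (i), the configuration yields a $\onelk/\botlk$ redex, possibly after the erase-sync rule collapses an intermediate sync whose premisses are all $\onelk$s; in case (ii), I obtain an $\ax/\cut$ redex, possibly after an $\ax/\mathsf{sync}$ commutation; both contradict normality. In case (iii), the downward polarized path from $\boxlk'$ through the sync chain to $c$ yields $\boxlk' \before c$, and by transitivity $\boxlk' \before \boxlk^*$, contradicting the minimality of $\boxlk^*$. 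Every case is inconsistent, and so $R$ must be cut-free.

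The main obstacle is the second paragraph: showing that in a normal form the lock of $\boxlk^*$ truly emerges as the negative premiss of some cut, rather than hiding behind a wall of $\parr$ and $\otimes$ links whose duals never surface. Here Lemma \ref{no_deadlocks} is indispensable: it forbids the polarized dependency chains between boxes, sync links, multiplicatives, and cuts from looping, so that the trace downward from $\boxlk^*$ terminates in one of the three analyzable configurations.
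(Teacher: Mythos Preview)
Your overall plan---contradiction via a well-foundedness argument on $\prec$---is in the right spirit, but the specific invariant you chose (a $\prec$-minimal \emph{box}) does not close the argument, and the gaps you flag in your final paragraph are real and not filled by Lemma~\ref{no_deadlocks} alone.

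The first problem is in paragraph two. Nothing forbids the lock of $\mathsf{box}^{*}$ from entering a $\parr$ (or a $\otimes$) whose other premiss is \emph{not} polarized; the resulting formula $C$ is then itself non-polarized, and when you finally reach a cut on $C,C^{\bot}$ and find $C^{\bot}$ coming out of another box $\mathsf{box}'$, the path from $\mathsf{box}'$ down through $C^{\bot}$, across the cut, and back up through $C$ to $\mathsf{box}^{*}$ is \emph{not} a polarized path. Hence you cannot conclude $\mathsf{box}'\prec\mathsf{box}^{*}$, and the minimality of $\mathsf{box}^{*}$ gives nothing. Your phrase ``pushing the analysis one step further'' is really a switch from a minimality argument to an open-ended path-tracing argument, and at that point you need a different termination principle.

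The second problem is in paragraph three. Even in the clean case where the lock feeds a cut $c$ directly and the positive premiss has type $1$, following that premiss upward through sync links you may hit a sync link $s$ some of whose \emph{other} in-edges are negative. Then neither your case (i) nor (iii) applies: not all premisses are $\mathsf{one}$s (so no erase-sync), and there need be no box above $s$ at all. This branching on negative sync edges is exactly what the paper treats explicitly.

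The paper's proof avoids both difficulties by abandoning $\prec$-minimality on boxes and instead building a single \emph{switching} path (not a polarized one): start at a non-ready cut, go up to a box, follow its lock down through arbitrary $\parr/\otimes$ nodes to the enclosing cut $C,C^{\bot}$, then climb $C^{\bot}$ until hitting either a box (restart) or a sync link with a negative conclusion $N$ (descend along $N$ to the next cut, whose dual side is again a box). Finiteness forces a repeated node, and one checks that the traced path uses at most one out-edge per sync and at most one premiss per $\parr$, so the cycle is a switching cycle---contradicting correctness. Lemma~\ref{no_deadlocks} is used only locally, to guarantee that the upward climb through all-positive sync links terminates. If you want to salvage a minimality approach, the right object to minimize is not a box but a box-or-sync-with-negative-edge along this trace; at that point you are essentially reproducing the paper's argument.
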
 
As a matter of fact, Lemma \ref{lazy_lem} is a key step towards getting
some useful information on the shape of normal forms:
\begin{theorem}[\SMLLb\ Closed Normal Forms]\label{closed_norm} 
  The normal forms of closed \SMLLb\ nets contain no cuts, no boxes,
  and no sync links.
\end{theorem}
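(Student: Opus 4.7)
My plan is to prove the three conclusions one at a time, using the strict partial order $\prec$ from Lemma~\ref{no_deadlocks} for the last two. The first conclusion, cut-freeness, is immediate from Lemma~\ref{lazy_lem}: a closed normal form is cut-free.

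For the absence of sync links, I would argue by contradiction. Pick a sync link $l_0$ that is minimal in $\prec$ among the sync links of $R$, and analyze each kind of incident edge. A positive premiss of $l_0$ comes from a link with a positive output; normal form rules out $\tens$ and $\ax$ (the sync/$\tens$ and sync/$\ax$ commutations would fire), the first part of the theorem rules out $\cut$, and minimality rules out another sync $x$ above (the downward positive edge yields a polarized path $x\to l_0$, hence $x\prec l_0$). So every positive premiss of $l_0$ comes from a $\onelk$ link. A negative conclusion of $l_0$, on the other hand, cannot be a conclusion of $R$ (any polarized negative formula contains $\bot$ as a subformula, forbidden by closedness); it cannot feed a cut, nor a sync $y$ below (the upward negative edge would give $y\prec l_0$, against minimality); and if it feeds a $\parr$ or a $\tens$, then the $\bot$ subformula is preserved as I trace the formula downward through all further links (which can only be $\parr$, $\tens$, or sync, none of which discharges $\bot$), so the trace must terminate at a conclusion of $R$ carrying $\bot$, again absurd. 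Hence $l_0$ has no negative conclusion, and since the types of premisses and conclusions of a sync link match, no negative premiss either. But then every premiss of $l_0$ is a positive formula coming from a $\onelk$ link, which is exactly the redex of the sync-erase rule of Figure~\ref{sync_red}(c), contradicting normality.

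For the absence of boxes, I would reuse the downward-trace trick. If $R$ contained a box, nesting forces one at depth $0$, and the $\bot$ conclusion of its lock link, visible in the $0$-graph, has to be consumed somewhere. Closedness excludes it from being a conclusion of $R$, cut-freeness excludes a cut, and the sync-link part just shown excludes a sync; so it feeds a $\parr$ or a $\tens$. The same downward trace preserves the $\bot$ subformula and forces a conclusion of $R$ containing $\bot$, which is absurd.

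The step I expect to be the main obstacle is the negative-conclusion case in the sync-link argument: when the negative conclusion of $l_0$ escapes into the non-polarized region via a $\tens$, the polarized order $\prec$ does not track the resulting formula directly, so I cannot close by minimality alone. It is cut-freeness from the first part that rescues the argument, by ensuring that the $\bot$ subformula persists all the way down to a conclusion of $R$.
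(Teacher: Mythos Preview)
Your overall strategy is sound and close in spirit to the paper's (largely implicit) argument, but there is a circular dependency between your second and third steps.

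In the sync-link argument you enumerate the possible sources of a positive premiss of $l_0$: you exclude $\tens$ and $\ax$ (reductions would fire), $\cut$ (by cut-freeness), and another sync (by $\prec$-minimality), and conclude it must be a $\onelk$. You have omitted one case: the premiss could be one of the $\Gamma$-conclusions of a box. There is no box/sync commutation rule, so normality does not exclude this configuration, and $\prec$-minimality is of no help since a box is not a sync link. Your later ``no boxes'' step would close this case, but that step explicitly invokes the ``no sync links'' result (to rule out the $\bot$ lock feeding a sync), so the two arguments depend on each other.

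The fix is simply to swap the order. Prove ``no boxes'' immediately after cut-freeness, and in the downward trace from a depth-$0$ box's $\bot$ lock do \emph{not} exclude syncs: let the trace pass through them, since a sync carries a premiss of type $N$ to the corresponding conclusion of the same type $N$, preserving the $\bot$ subformula. The trace then runs through $\parr$, $\tens$, and sync nodes only (cuts are gone), always keeps $\bot$ as a subformula, and---because this is a switching path in a correct net---cannot cycle, so it terminates at a conclusion of $R$ containing $\bot$, contradicting closedness. With boxes eliminated (hence everything at depth $0$), the missing case in your sync-link argument disappears and the rest of that argument goes through as you wrote it.

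The paper does not spell out a proof of this theorem beyond invoking Lemma~\ref{lazy_lem}; the informal discussion around Lemma~\ref{sync_norm_form} and Theorem~\ref{final_th} suggests exactly this decomposition (cut-free $\Rightarrow$ no boxes $\Rightarrow$ every positive sync premiss is a hereditary conclusion of a $\onelk$ link $\Rightarrow$ the sync-erase rule fires for a $\prec$-minimal sync). So once reordered, your proof is essentially what the paper has in mind.
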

In other words, the normal form of an \SMLLb\ net is nothing more than
a \MLL\ net! 

%%%%%%%%%%%%%%%%%%%%%
\paragraph{The Proof.}
%%%%%%%%%%%%%%%%%%%%%
We first observe the following:

\begin{lemma}[Sync Normal-Forms]\label{sync_norm_form}  
If no sync reduction  applies, then in the 0-graph of $R$ the following hold:
  \begin{varenumerate}
  \item  
    no sync link is below a $\otimes$ or a $\parr$
  \item 
    any positive premiss $P$ of a sync link is either  hereditary conclusion of a box, or has type $1$.
  \item 
    if there is a cut, only its positive premiss can be  conclusion of a  sync link.    
  \end{varenumerate}
\end{lemma}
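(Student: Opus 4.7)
The strategy is to run a case analysis on the links adjacent to a sync link, using the fact that each forbidden adjacency triggers one of the sync commutation rules in Figure~\ref{sync_red}(b). Clause (1) is the most direct: if a sync link $l$ at depth $0$ has a premiss which is the conclusion of a $\otimes$ (respectively $\parr$) link at depth $0$, then the configuration formed by $l$ and that link matches the left-hand side of the sync$/\otimes$ (respectively sync$/\parr$) reduction, so that reduction fires, contradicting the hypothesis that no sync reduction applies. Clause (3) is equally direct: if $c$ is a cut at depth $0$ whose negative premiss is a conclusion of a sync link, then the sync$/\cut$ rule fires, a contradiction; hence only the positive premiss of $c$ can be a conclusion of a sync link.

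Clause (2) is the substantial one. Let $P$ be a positive premiss of a sync link $l$, and let $l'$ be the link in the $0$-graph whose conclusion is $P$. I case-analyze on the sort of $l'$. It cannot be $\otimes$ or $\parr$ by clause (1); it cannot be $\cut$, since cut links have no conclusion; it cannot be $\botlk$, since $\botlk$'s conclusion has negative type $\bot$; and it cannot be an axiom link, for otherwise the sync$/\axlk$ rule applied on the positive conclusion of the axiom would fire. The remaining possibilities are $\onelk$, $\boxlk$ (the sort introduced when forming the $0$-graph from a box), and another sync link. If $l'=\onelk$, then $P$ has type $1$. If $l'=\boxlk$, then $P$ is a conclusion of a box, and is thus a hereditary conclusion of it via the empty sync path. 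If $l'$ is another sync link, then by the typing of sync links the corresponding positive premiss of $l'$ has the same type as $P$, and the argument iterates on $l'$. Finiteness of the net together with the upward progress along corresponding edges of sync links guarantees termination at some non-sync link, which by the previous cases must be $\onelk$ or $\boxlk$; in the latter case, the chain of sync links traversed constitutes precisely a sync path from a conclusion of the box down to $P$, witnessing $P$ as a hereditary conclusion of the box in the sense of the paper's definition.

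The main obstacle is the recursive case in clause (2): since sync-sync pairs do not commute (as the paper explicitly notes), sync$/$sync chains persist in normal form, and one must match the iterative back-up through such a chain with the formal definition of a sync path in order to conclude that $P$ is a hereditary conclusion of the $\boxlk$ at the top. This match is afforded by the structural requirement that the $i$-th premiss and $i$-th conclusion of a sync link share the same type, which is exactly the condition defining a sync path.
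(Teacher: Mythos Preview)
Your proof is correct and follows essentially the same approach as the paper's. Both arguments dispatch clauses (1) and (3) by directly identifying the sync commutation rule that would fire, and handle clause (2) by climbing the sync path above $P$ until a non-sync link is reached, then case-analyzing on its sort. Your version is more explicit in enumerating all link sorts and in justifying termination of the climb, whereas the paper simply writes ``go upwards along the sync path until we find a link $l$ which is not a sync link'' and then only rules out $\axlk$ (implicitly relying on clause~(1) to exclude $\otimes$, $\parr$); the content is the same.
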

\begin{proof}\par
\begin{varenumerate}
\item 
  Immediate:  we could apply a commutation step.
\item 
  From $P$, let us go upwards along the sync path 
  until we find a link $l$ which is not a sync link. Let us assume that $l$ is not inside a box. 
  If $l$ is of sort $\axlk$, a $sync/\axlk$ reduction would applies. Therefore $l$ can only have sort 
  $\onelk$.
\item  
  Otherwise we could apply a  $ sync/cut$ reduction. 
\end{varenumerate}
\end{proof}
\paragraph*{Proof of Lemma \ref{readycut_lem}.}
Assume that no sync reduction applies, and that there is a ready cut. If neither premiss is conclusion of a sync link, then  a multiplicative   reduction applies.
If  at least one of the premisses (say $A$) is conclusion of a sync link  $s$, then by Lemma \ref{sync_norm_form} $A$ is positive, and $A=1$. Thus, 
$A\b=\bot$  can only be conclusion of an axiom 
and an $\ax/cut$ reduction   applies.
%\condmedskip  
\paragraph*{Proof of Lemma  \ref{lazy_lem}.} 
We examine the 0-graph of $R$, and trace a path $p$ in it, as described 
below.

% net at depth 0; more precisely,  as in the condition (1) of the correctness criterion,  we treat each box of conclusion $\bot, \Gamma$ as a link of conclusions $\bot, \Gamma$,  and trace a path in it.
\begin{varenumerate}
\item
  Let us assume that  there is a cut. By hypothesis, it is not a ready cut (otherwise a reduction would apply by Lemma \ref{readycut_lem}).
  We make $p$ start from the box above the cut.  
\item
  The lock $e$ of the box has type $\bot$. We extend $p$ downwards, as follows. 
  Since no $\bot$ appears in the conclusions, $\bot$ must be subformula of a formula $C$ (possibly $\bot$ itself) which is the premiss of a cut 
  $C\b, C$. Such a cut is below $e$; we extend $p$ till this cut.  Let us examine  $C\b$. It cannot be  conclusion of a link $\axlk$ or $\onelk$, otherwise a  reduction would applies. 
  Moreover, if $C\not=\bot$, $C\b$ must be  conclusion of  a  box.
  Therefore:
  \begin{varitemize}
  \item[i] 
    either $C\b$ is  conclusion of  a  box, 
  \item[ii] 
    or $C\b = 1$ and  is a positive conclusion of a sync link $l$
  \end{varitemize}
\item 
  In both cases, we extend $p$ upwards on the edge  typed by $C\b$, 
  until  we find a node  which is either 
  \begin{varitemize}
  \item[i] a box, or 
  \item [ii] a sync link $s$ with a negative edge $N$.
  \end{varitemize}  
  Before continuing, let us prove that going upwards from $C\b$ we
  will eventually find either (i) or (ii).  We make use of Lemma
  \ref{no_deadlocks}. If $l=l_1$ is a sync link whose edges are all
  positive, and we move upwards to a link $l_2$, then either $l_2$ is
  a box, or $l_2 \before l_1$. Let us consider a maximal sequence of
  sync links $l_n\before ... \before l_2 \before l_1$. Let us assume
  that no link in this sequence has negative edges, and that $l_n$ is
  not below a box. Necessarily, all edges of $l_n$ are conclusion of
  links $\onelk$, and therefore a reduction applies, against the
  hypothesis.
\item 
  We extend $p$ depending by the case we found at step 3.  If case
  (3.i) holds, from the box we continue as in (2.); if case (3.ii)
  holds, we extend $p$ downwards on the negative conclusion $N$ of the
  sync link $s$.  Since $N$ is negative, it cannot appear in the
  conclusions, and therefore must be subformula of a formula $C$ which
  is premiss of a cut. Moreover, the cut cannot be immediately below a
  sync link (otherwise a reduction applies); thus the premiss $C\b$ of
  the cut must be conclusion of a box. We extend $p$ with the cut till
  the box, and then we continue as in (2.)
\item 
  We iterate the process: go downwards till a cut, and then
  upwards till a box, or a sync link with a negative conclusion.
  Eventually, we visit again a node or a box $l$ which we had already
  visited, and we find a cycle, by using a standard argument to study
  proof-nets: we follow the path $p$ that we have been drawing from
  $l$ to $l$ (note that $l$ does not need to be the box we started
  with, it can be any link, including a $\parr$).  We observe that $p$
  always enters a sync link on an out-edge (in fact, a positive
  conclusion) and exit on a in-edge (either a positive premiss, or a
  negative conclusion). Moreover, if $p$ reaches a $\parr$ link, $p$
  enters in it only once, from a premiss. We can conclude that $p$ is
  a switching path; by correctness, it cannot be cyclic.
\end{varenumerate}

%%%%%%%%%%%%%%%%%%%%%%%%%%%%%%%%%%%%%%%%%%%%%%%%%%%%%%%%%%%%%%%%%
\section{\SIAM: an Interactive Model with Synchronizations}
%%%%%%%%%%%%%%%%%%%%%%%%%%%%%%%%%%%%%%%%%%%%%%%%%%%%%%%%%%%%%%%%%%
The \SIAM\ is a \emph{multitoken} machine designed to run on nets of
\SMLLb.  Let us first recall the main features of the \IAM, i.e., the
standard Interaction Abstract
Machine~\cite{DanosRegnier,Mackie}. Given a net $R$, the \IAM\ pushes a
\textit{single} token around $R$. To each edge of $R$ is associated an
\emph{action} --- a \emph{transition} --- which gives instructions on
how to move the token.  A \emph{state} of the machine is a position of
the token in the net.

To define the \SIAM\ machine $M_R$ associated to a net $R$, we first
need to precisely define what an occurrence of atom and what a
position are. We can then define the states and the transitions of the
\SIAM\, and study its properties.

%%%%%%%%%%%%%%%%%%%%%%%%%%%%%%% 
\paragraph{Occurrences of Atoms.}
%%%%%%%%%%%%%%%%%%%%%%%%%%%%%%%%
\newcommand{\address}[2]{\mathtt{addr}_{#1}(#2)}
\newcommand{\lft}{\mathbf{l}} 
\newcommand{\rgt}{\mathbf{r}} 
We indicate the occurrences of atoms in a formula by their path in the
formula tree. Given an occurrence of atom $\alpha$ in a formula
$\formone$, its address $\address{\alpha}{\formone}$ in $\formone$ is
defined as a string on the alphabet $\{\lft,\rgt\}$, by induction:
\begin{varitemize}
\item 
  if $\formone=\alpha$, $\address{\alpha}{\formone}=\varepsilon$;
\item 
  if $\alpha$ is in $\formone$, then $\address{\alpha}{\formone *
    \formtwo}=\lft\cdot\address{\alpha}{\formone}$ while
  $\address{\alpha}{\formtwo *
    \formone}=\rgt\cdot\address{\alpha}{\formone}$, where $*$ is
  either $\otimes$ or $\parr$.
\end{varitemize}
With a slight abuse of notation, we sometimes identify occurrences of
atoms and their address. We indicate addresses with metavariables
like $m$.
%%%%%%%%%%%%%%%%%%%%%
\paragraph{Positions.}
%%%%%%%%%%%%%%%%%%%%%
Intuitively, each edge in a net is associated to possibly many
positions, one for each occurrence of an atom in the formula $A$
typing it. Given a net $R$, we define the set of its \emph{positions}
$\Pos(R)$ as the set including:
\begin{varitemize}
\item 
  the set of pairs $(e,m)$, where $e$ is an edge of $R$, and $m$
  is (the address of) an occurrence of atom in the formula $\formone$
  typing $e$;
\item 
  together with the set of pairs $(e,i)$, where the edge $e$ is the
  lock of a box in $R$, and $i\in \NN$. The role played by $i\in \NN$
  will be explained in Section~\ref{multiboxes}.
\end{varitemize}
We say that a position $(e,m)$ is positive or negative if it is the
case for the  occurrence of atom corresponding to $m$.  We use the
metavariables $\ss, \pp$ to indicate positions. The following subsets
of $\Pos(R)$ play a crucial role in the following:
\begin{varitemize}
\item 
  the set $\PosI(R)$ of \emph{initial positions}, that are the
  negative occurrences of atoms in the conclusions of $R$;
\item 
  the set $\PosF(R)$ of \emph{final positions}, that are the
  positive occurrences of atoms in the conclusions of $R$;
\item 
  the set $\ONE(R)$ of \emph{one positions}, which are occurrences
  of $1$ which are conclusions of $\onelk$ links;
\item 
  the set  $\BOTBOX(R)$ of pairs $(e,i)$, where 
  $e$ is the lock of a box.
\end{varitemize}
%%%%%%%%%%%%%%%%%%%%%%%%%%%%%%%%%%%%%%%%%%%
\subsection{The Machine, Formally}
%%%%%%%%%%%%%%%%%%%%%%%%%%%%%%%%%%%%%%%%%%%
It is now time to formally define the \SIAM. Given an \SMLLb\ net $R$,
we the multi-token machine $M_R$ for it consists in a set of states
and a transition relation between them. A state of $M_R$, intuitively,
tells us \emph{how many} tokens currently circulate in the net, and
which \emph{positions} they have reached from the initial state, while
keeping track (for each of them) of their \emph{origin}.
%%%%%%%%%%%%%%%%%%%
\paragraph{States.}
%%%%%%%%%%%%%%%%%%%
A \emph{state} of $M_R$ is a function $\tok:\PosI(R) \cup \pssetone
\to\Pos(R)$ where $ \pssetone \subseteq \ONE(R)$. A state is
\emph{initial} if $\tok$ is  the identity on $\PosI(R)$. We
indicate the (unique) initial state of $R$ by $\Itok_R$. A state is
\emph{final} if the image of $\tok$ is $\PosF(R) \cup B$, with $B
\subseteq \BOTBOX(R)$. Intuitively, we identify each token with its
original position $\ss$; given a state $\tok$ of $M_R$, we say that
\emph{there is} a token on the position $\pp$ if $\tok(\ss)=\pp$, for
$\ss\in\mathit{dom}(\tok)$. We use expressions such as ``a token moves'',
``crossing a link'', in the intuitive way. We will refer to the set of
all tokens as \emph{the multitoken}.
 
In describing $M_R$ we use also the following notions: an edge $e$ of
type $A$ is said to be \emph{saturated} if each position of $e$ is in
the image of $\tok$. A box is \emph{unlocked} if $e$ is the lock of
the box, and $(e,i)\in\mathit{range}(\tok)$. A link is \emph{active},
if it is either at depth $0$ in $R$, or at depth $0$ inside an
unlocked box.
%%%%%%%%%%%%%%%%%%%%%%%%%
\paragraph{Transitions.} 
%%%%%%%%%%%%%%%%%%%%%%%%%
The \emph{transition rules} are in Figure \ref{SIAM} (where
$*$ stands for either $\parr$ or $\tens$).
\begin{figure*}
  \begin{center}
    \fbox{
      \begin{minipage}{.97\textwidth}
        \begin{center}
          \input{figure_siam_rules}
        \end{center}
      \end{minipage}}
  \condnocaptionrule\caption{\SIAM\ Transition Rules}\label{SIAM}
  \end{center}
\end{figure*}
To represent the position $\pp = (e,m)$ (respectively, $\pp=(e,i)$,
$i\in\NN$) we write a bullet $\bullet$ along $e$, together with the
atom occurrence $m$ (respectively, the value $i$) next to it.  The
symbol $\down$ (respectively, $\up$) is a polarity annotation: it
indicates that the position is positive (respectively, negative). To
represent a transition $\tok \redsiam \toktwo$, we depict $\tok(\ss)$
in the left-hand-side, and $\toktwo(\ss)$ on the right-hand-side of
the arrow, for each $\ss\in\mathit{dom}(\tok)$ such that
$\toktwo(\ss)\not=\tok(\ss)$.  It is of course intended that
$\toktwo(\ss)=\tok(\ss)$ for all $\ss$ whose value is not explicitly
appearing in the picture.

Observe that the (positive or negative) polarity of any position
determines its direction: a token on a negative atom always moves
upwards, while a token on a positive atom always moves downwards. This is coherent with the way initial positions are defined. 
\begin{varenumerate}
\item 
  $\ax$, $\cut$, $\otimes$, $\parr$: the transitions are the same
  as in \MLL;
\item 
  \emph{Synchronization}. Tokens cross a sync link $l$ only when
  each in-edge of $l$ is saturated (in the pictures, if $e$ is an edge
  of type $\formone$, we write $\overline m_\formone$ for the set of
  all the occurrence of atom of $\formone$); all tokens cross the link
  simultaneously.
\item 
  The transition associated to a $\onelk$ link of conclusion $e$
  has two conditions: (i) the $\onelk$ link needs to be active, (ii)
  $\pp=(e,\varepsilon)\not\in\mathit{dom}(\tok)$.  In this case,
  $\tok$ is extended with the identity on the position $\pp$.  This is
  the only transition changing the domain of $\tok$.
\item 
  \emph{Boxes.} When the token goes through the conclusion of a box
  (graphically, the token "crosses" the border of the box), it is
  modified as if it were crossing a node:
  \begin{varitemize}
  \item[i.]  
    If there is a token on the lock $e$ of a box (necessarily
    $m=\varepsilon$), then $(e,\varepsilon)$ becomes $(e,0)\in
    \BOTBOX(R)$.  This transition plays no significant role at the
    moment, but we clarify its role in Section \ref{multiboxes}, and
    we will make essential use of it in Section~\ref{qsiam}.
  \item[ii.]  
    Tokens can enter a box only when the {box is
      unlocked}. As a consequence, if a box is not unlocked, then no
    token can be inside the box.
  \end{varitemize}
\end{varenumerate}
%%%%%%%%%%%%%%%%%%%%%%%%%%%%%%%%%%%%%%
\paragraph{Why Polarized Formulas Only?}
%%%%%%%%%%%%%%%%%%%%%%%%%%%%%%%%%%%%%%
If we synchronize the two conclusions $1,\bot$ of an atomic axiom, as
depicted in Figure \ref{fig:why_polarized_incorrect}, what we obtain
is a deadlocked structure: the sync link needs to have a token at the
same time on the in-edge of type $\bot$, and on the in-edge of type
$1$, which is not possible, since that would be the \emph{same}
token. On the other hand, this configuration is ruled out by the
correctness criterion: the structure at hand has a switching cycle.
However, if we join the two atoms with a $\parr$, and apply the sync
link on $1\parr \bot$, the situation does not change, but the
criterion does not catch the deadlock; moreover, after a step of
reduction, a cycle would appear (see Figure
\ref{fig:why_polarized_correct}).
\begin{figure}[htbp]
\centering
\fbox{
  \begin{minipage}{\figurewidth}
    \centering
    \vspace{5pt}
\subfigure[]{
\begin{tikzpicture}[ocenter]
%nodes
\node at (0,0) [etic](axlink){\tiny $\ax$};
\node at (axlink) [below left = 1.5*\altax and \ilar, etic](lftsync){\tiny $\blacksquare$};
\node at (axlink) [below right = 1.5*\altax and \ilar, etic](rhtsync){\tiny $\blacksquare$};
\node at (lftsync) [below = 1.5*\altax, etic](lftconc){$\bot$};
\node at (rhtsync) [below = 1.5*\altax, etic](rhtconc){$1$};
\node at (lftsync) [etic,left=\hstlar](synlft){};
\node at (rhtsync) [etic,right=\hstlar](synrht){};
%edges
\draw[nopol, in=90, out=180, looseness=1.5] (axlink) to (lftsync);
\draw[nopol, in=90, out=0, looseness=1.5] (axlink) to (rhtsync);
\draw[nopol, in=90, out=270] (lftsync) to (lftconc);
\draw[nopol, in=90, out=270] (rhtsync) to (rhtconc);
\draw[thick] (lftsync) to (synlft);
\draw[thick] (rhtsync) to (synrht);
\draw[thick] (lftsync) to (rhtsync);
\end{tikzpicture}\label{fig:why_polarized_incorrect}}
\qquad\qquad
\subfigure[]{
\begin{tikzpicture}[ocenter]
%nodes
\node at (0,0) [etic](axlink){\tiny $\ax$};
\node at (axlink) [below = 1.5*\altax, etic](parr){\tiny $\parr$};
\node at (parr) [below = 1.5*\altax and \ilar, etic](sync){\tiny $\blacksquare$};
\node at (sync) [below = 1.5*\altax, etic](conc){\tiny $\bot\parr 1$};
\node at (sync) [etic,left=\hstlar](synlft){};
\node at (sync) [etic,right=\hstlar](synrht){};
%edges
\draw[nopol, in=135, out=180, looseness=1.5] (axlink) to (parr);
\draw[nopol, in=45, out=0, looseness=1.5] (axlink) to (parr);
\draw[nopol, in=90, out=270] (parr) to (sync);
\draw[nopol, in=90, out=270] (sync) to (conc);
\draw[thick] (synlft) to (sync);
\draw[thick] (synrht) to (sync);
\end{tikzpicture}
$\rightarrow$
\begin{tikzpicture}[ocenter]
%nodes
\node at (0,0) [etic](axlink){\tiny $\ax$};
\node at (axlink) [below left = 1.5*\altax and \ilar, etic](lftsync){\tiny $\blacksquare$};
\node at (axlink) [below right = 1.5*\altax and \ilar, etic](rhtsync){\tiny $\blacksquare$};
\node at (axlink) [below = 3*\altax, etic](parr){\tiny $\parr$};
\node at (parr) [below = 1.5*\altax, etic](conc){\tiny $\bot\parr 1$};
\node at (lftsync) [etic,left=\hstlar](synlft){};
\node at (rhtsync) [etic,right=\hstlar](synrht){};
%edges
\draw[nopol, in=90, out=180, looseness=1.5] (axlink) to (lftsync);
\draw[nopol, in=90, out=0, looseness=1.5] (axlink) to (rhtsync);
\draw[nopol, in=135, out=270] (lftsync) to (parr);
\draw[nopol, in=45, out=270] (rhtsync) to (parr);
\draw[nopol, in=90, out=270] (parr) to (conc);
\draw[thick] (lftsync) to (synlft);
\draw[thick] (rhtsync) to (synrht);
\draw[thick] (lftsync) to (rhtsync);
\end{tikzpicture}\label{fig:why_polarized_correct}}
  \end{minipage}}
\condnocaptionrule\caption{The Need for Polarization}\label{fig:why_polarized}
\end{figure}
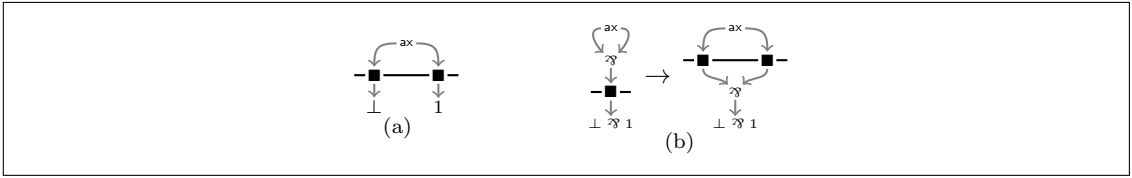

%%%%%%%%%%%%%%%%%%%%%%%%%%%%%%%%%%%%%   
\subsection{\SIAM: Properties and Soundness}
%%%%%%%%%%%%%%%%%%%%%%%%%%%%%%%%%%%%%
In this section, we study the properties of the \SIAM\, and in
particular termination, confluence, and deadlock freedom. We also show
that the \SIAM\ is (up to equivalence) a model of cut-elimination for
\SMLLb, i.e., a Soundness Theorem.  All along this section, $R$
indicates an \SMLLb\ net, $M_R$ its multitoken machine, and $\redsiam$
the induced relation.  Let us first establish some basic properties of
$\redsiam$.  %\condmedskip
\noindent
\begin{lemma}
  \begin{varenumerate}
  \item 
    Each  $\tok$  such that $\Itok_R\redsiam^* \tok$ is an \emph{injective} function.
  \item 
    There are no infinite sequences of transitions from $\Itok_R$.
  \item 
    $\redsiam$  is \emph{confluent}.
  \end{varenumerate}
\end{lemma}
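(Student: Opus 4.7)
The plan is to prove the three items in turn, each by induction on the length of the \SIAM\ trace from $\Itok_R$, exploiting polarization and the correctness of $R$ through Lemma \ref{no_deadlocks}.

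For (1), I would proceed by induction on the trace length. The base case is immediate since $\Itok_R$ is the identity on $\PosI(R)$. For the inductive step, I would inspect each transition rule of Figure \ref{SIAM}. Every rule either (i) moves a single token from one position to an adjacent one, (ii) moves several tokens simultaneously (sync), or (iii) extends the domain by one element (the $\onelk$ rule). For (i) and (ii), injectivity is maintained as long as the target positions of the moved tokens are distinct from the positions of all unaffected tokens; I would strengthen the induction hypothesis to state that distinct tokens occupy positions on \emph{disjoint} sub-trajectories, a property that can be checked case by case from polarity and the local shape of links. For (iii), the delicate point is that when we add $(e,\varepsilon)$ to the domain with image $(e,\varepsilon)$, we must know that $(e,\varepsilon)$ is not already in the range. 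The guard $(e,\varepsilon)\notin\mathit{dom}(\tok)$, together with the structural observation that no link lies above a $\onelk$ conclusion and that a positive position cannot be reached by an upward-moving token, gives us exactly this.

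For (2), I would exhibit a well-founded measure that strictly decreases at each transition. Each token, viewed in isolation, traces a switching path in the $0$-graph of $R$: this uses the fact that a sync link forces the token to cross using only one out-edge, so the traced path qualifies as a switching path in the sense of Section~3. By correctness, such paths are acyclic, hence of length at most $|\Pos(R)|$. Moreover, the number of tokens is itself bounded by $|\PosI(R)|+|\ONE(R)|$, since each $\onelk$ conclusion can be activated at most once. A natural measure is then the multiset of remaining steps for the currently-live tokens, ordered by the multiset extension of the usual order on naturals. The main obstacle, I expect, is precisely here: sync steps move several tokens simultaneously, and one has to rule out a ``deadlocked'' situation where tokens keep firing but never approach their final positions. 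This is exactly the role of Lemma \ref{no_deadlocks}: the strict partial order $\before$ guarantees that some sync link is always minimal, so progress is never blocked by a circular dependency in the waiting pattern.

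For (3), I would invoke Newman's lemma: given (2), it suffices to prove local confluence. Two enabled rules at a state either concern disjoint tokens and links, in which case they obviously commute and produce a common successor in one further step each, or else their token footprints overlap. Overlaps can only occur around sync links; however, a sync step is enabled only when all its in-edges are saturated, a condition preserved by any transition not moving a token off one of those in-edges, so commutation still holds. The $\onelk$ transition is uniquely determined by its guard, hence commutes with any rule that does not touch the same conclusion. A finite case analysis on pairs of enabled rules, organized by link sort, then completes the argument.
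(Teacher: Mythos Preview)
Your strategy for (1) is workable but more awkward than necessary. The clean invariant, and the one the paper actually uses, is \emph{backward determinism}: every transition rule is locally invertible, so from any position in the range of $\tok$ one can trace back \emph{uniquely} along the trajectory to an element of $\PosI(R)\cup\ONE(R)$. Injectivity of $\tok$ is then immediate. Your ``disjoint sub-trajectories'' is a fuzzy version of this; once you say ``each position has a unique predecessor position'', the strengthening you are looking for becomes precise and the case analysis collapses.

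Your argument for (2) has two real gaps. First, the claim that ``each token traces a switching path in the $0$-graph'' is not correct: tokens enter boxes (so they leave the $0$-graph), and even at depth $0$ a token's trajectory on the underlying graph, once you forget addresses, is not in general a switching path---a token can traverse the same edge at different atom occurrences, and at a $\parr$ link nothing prevents both premisses from being used at different moments. Second, your proposed measure ``remaining steps for the currently-live tokens'' is circular: you cannot define the number of remaining steps without already knowing that trajectories are finite. The paper's argument avoids all of this and does \emph{not} invoke Lemma~\ref{no_deadlocks} here at all. It simply observes that the set of reachable states is finite (finitely many positions, bounded domain), so non-termination would force a cycle of states; but backward determinism (from part (1)) means any position occurring along the cycle can only be traced back to other positions on the cycle, never to a starting position in $\PosI(R)\cup\ONE(R)$. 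Hence the cycle is unreachable from $\Itok_R$. Deadlock-freedom is a separate (and harder) question, handled later via Soundness; termination itself is essentially a pigeonhole-plus-reversibility argument.

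For (3), Newman's lemma is fine, but note that the paper gets confluence more directly: any two enabled transitions act on disjoint sets of positions (there are no critical pairs), hence commute, giving the diamond property outright without appealing to termination.
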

%Termination is  consequence of Point 1, while confluence is consequence of the fact 
%that there are no critical pairs.
\begin{proof} 
  \begin{varenumerate}
  \item 
    About injectivity, one can observe that each position in the range
    of $\tok$ can be traced back \emph{uniquely} to a position in
    $\PosI(R)$ or to one in $\ONE(R)$ This can be proved by induction
    on the number of transitions leading from $I_R$ to $\tok$.
  \item 
    Because of finiteness, if $\redsiam$ does not terminate, it is
    because there is a loop $\tok_1 \redsiam \tok_2
    \redsiam\ldots\redsiam \tok_n \redsiam \tok_1$.  Since each
    position in the range of any $\tok_i$ can be traced back uniquely,
    one cannot access any position in the range of
    $\tok_1,\ldots,\tok_n$ from outside the loop. Therefore any
    reduction starting from the initial state cannot loop.
  \item 
    Immediate consequence of the fact that each possible pair of
    transitions commutes (diamond property), because there is no
    critical pair.
  \end{varenumerate}
 \end{proof}

A \emph{run} of the \SIAM\ machine on $R$ is a maximal sequence
$\Itok_R \redsiam \tok_1 \redsiam \tok_2 \redsiam ...\redsiam \tok_n $
of transitions from the initial state $\Itok_R$.  The lemma above
guarantees that \emph{each run of the machine $M_R$ terminates}, and
that the normal form $\tok_n$ reached from the initial state exists
and is unique.
 
A central property which we still need to prove is \textit{deadlock
  freedom}, i.e. if $\Itok_R \redsiam^* \tok$, and no reduction
applies, then $\tok$ is a final state. By confluence, and hence
unicity of the normal form, we know that if there is a run of $M_R$
which terminates on a final state $\Ftok$, then all runs of the
machine terminate on $\Ftok$. Deadlock freedom will be a consequence
of \emph{soundness}, namely that if $R$ reduces to $S$ then $M_R$ is
somehow equivalent to $M_S$. To make this precise, we define the
interpretation $[R]$ of a net $R$:
$$
[R]: (\PosI(R) \rightharpoonup \PosF(R) ) \cup \{\Box\}
$$  
We remind that $\PosI(R)$ and $\PosF(R)$ are the initial and final positions.
The formal symbol $\Box$ indicates that the machine $M_R$ deadlocks
when starting in its (unique!) initial state. Otherwise, $\Itok_R
\redsiam^* \Ftok $ a final state, and $[R]$ is defined to be the partial
function obtained as the restriction of $\Ftok$ to the elements of
$\PosI(R)$ whose image is not in $\BOTBOX(R)$.
\begin{theorem}[Soundness]\label{soundness}  
  If $R$ is a net, and $S$ is its normal form, then $[R]=[S]$.
\end{theorem}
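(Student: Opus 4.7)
The plan is to prove the stronger statement that every single reduction step preserves the interpretation: if $R \rightarrow R'$, then $[R] = [R']$. The theorem then follows by a straightforward induction on the length of the reduction sequence from $R$ to its normal form $S$, invoking Proposition \ref{termination} and Proposition \ref{confluence} to justify that $S$ is well-defined.

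For each reduction rule in Figure \ref{sync_red}, I would exhibit a correspondence between the positions of $R$ and of $R'$ and show that runs of $M_R$ and $M_{R'}$ are in bisimulation modulo this correspondence. The easy cases are the purely multiplicative ones: for $\ax/\cut$, the token on either side of the cut simply reroutes through the dangling axiom edge, exactly as in the classical \MLL\ IAM soundness argument; for $\otimes/\parr$, the token on an $A \ast B$ conclusion that would descend through the $\parr$, cross the cut, and ascend through the $\tens$ is replaced by a token that passes directly through the new smaller cut on $A$ (or $B$), yielding the same final state. These arguments are standard and only need to be checked componentwise on each $\ss \in \mathit{dom}(\tok)$ unaffected by the transformation.

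The genuinely new cases are the sync commutations and the unit/box rules. For $\mathit{sync}/\tens$ and $\mathit{sync}/\parr$, one sync link of conclusion $A\ast B$ is split into two sync links on $A$ and on $B$ joined by the synchronization relation: I would argue that in both machines the activation condition is equivalent, since the set of in-edges whose saturation is jointly required is the same up to the $*$-decomposition of the associated addresses (and the polarity annotations guarantee that the $*$-link is crossed in the same direction by each occurrence). The $\mathit{sync}/\ax$ and $\mathit{sync}/\cut$ cases are simpler because the sync link is merely relocated along a wire traversed by tokens with no extra occurrences of atoms. For the unit elimination of a sync whose premises are all $\onelk$ links, the synchronization is trivially satisfiable in $R$ once each $\varepsilon\!\!\downarrow$ token has been generated, so the additional sync-traversal step is silent and can be erased. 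For the $\onelk/\botlk$ reduction that opens a box, the point is that in $R$ the token carrying $\varepsilon\!\!\uparrow$ reaches the $\botlk$ lock, triggering the $(e,0)\in \BOTBOX(R)$ assignment that unlocks the box and enables entry into its content; in $R'$ the box has been dissolved and the same content is now at depth $0$, producing the same final positions on the conclusions $\Gamma$.

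The main obstacle I expect is keeping the bisimulation argument honest when a reduction changes the depth structure of the net or splits a synchronization across a multiplicative link. In particular, one has to be careful that the value $\Box$ is also preserved, i.e.\ that $M_R$ deadlocks iff $M_{R'}$ does. The natural strategy is to show: (i) any $\tok$ reachable in $R$ corresponds via the bijection to some $\tok'$ reachable in $R'$ and vice versa, and (ii) $\tok$ is stuck non-finally iff $\tok'$ is. For (ii), the key observation is Lemma \ref{no_deadlocks}: the strict partial order $\before$ is invariant (or decreases) under reduction, so if a minimal sync link is ready to fire in $R$ then its image in $R'$ is ready too, and symmetrically. Once the preservation lemma is established, the theorem is immediate, and as a bonus, since the normal form $S$ is a pure \MLL\ net (Theorem \ref{closed_norm}) on which $M_S$ cannot deadlock, we also recover that $[R] \neq \Box$ whenever $R$ is closed.
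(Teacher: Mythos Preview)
Your approach is essentially the same as the paper's: prove that each single reduction step $R\rightarrow R'$ preserves the interpretation, by a case analysis on the rules of Figure~\ref{sync_red}, and conclude by induction on the length of the normalization sequence. The paper's proof is terser than yours (it spells out only the sync-elimination and box-opening cases) but follows the same pattern: if the multitoken never touches the redex the claim is trivial, and otherwise one checks locally that runs on the two sides correspond and that a deadlock on one side forces a deadlock on the other.

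The one place where you deviate is in invoking Lemma~\ref{no_deadlocks} to establish that ``$\tok$ is stuck non-finally iff $\tok'$ is''. This is unnecessary: once you have the local bisimulation (your point~(i)) in both directions, point~(ii) is automatic, since a state is stuck exactly when no transition rule applies, and the correspondence of states already matches applicable transitions one-to-one. The paper does not appeal to the $\before$-order here at all; Lemma~\ref{no_deadlocks} is used elsewhere (for termination of normalization and, via Theorem~\ref{closed_norm}, for the separate deadlock-freedom result), but soundness itself is a purely local invariance argument. Your ``bonus'' remark at the end is correct, but it is the content of Theorem~\ref{final_th}, not of the present theorem.
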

%\condmedskip
\begin{proof}
  Let $R$ be a net of conclusions $\formone_1,\dots,\formone_n$, and
  $R\redsiam S$; to be able to compare $R$ and $S$, we adopt the
  following convention: we identify each conclusion of a net with the
  occurrence of formula $A_i$ typing it, so that there is no
  ambiguity. In particular, $R$ and $S$ have the same initial and
  final positions.  We now show that the interpretation of a net is
  preserved by all normalization steps; in particular, if $R \redsiam
  S$, then $M_R$ deadlocks if and only if $M_{S}$ deadlocks. We look
  at the reductions in Figure \ref{sync_red}.  If the multitoken never
  reaches the portion of the net which is involved in the reduction,
  then both $[R]$ and $[S]$ have the same value.  Otherwise, we
  analyze all cases, and verify that there is a deadlock in $M_R$ iff
  there is a deadlock in $M_{S}$, and otherwise, a run of the machine
  continues in a similar way, so that $[R]=[S]$. Let us just examine a
  few cases:
  \begin{varitemize}
    % \item $\otimes/\parr$ 
    % \item $\parr/sync$ lift
  \item 
    \emph{Sync Elimination}. In both the l.h.s and the r.h.s. all $\onelk$
    links are active, because they are at level 0. Thus, we are able
    to have a token on the conclusion of each $\onelk$ link, and after
    one step, on each conclusion of the sync link.
  \item 
    \emph{Box Opening}. The $\onelk$ link $l$ on the l.h.s. is active; we
    can therefore add the position $\pp$ associated with $l$ to the
    domain of the state. We now have a token on $\pp$; the token
    crosses the cut, and unlocks the box. We observe that $\pp$ does
    not belong to the domain of $[R]$. Once the box is unlocked, the
    \SIAM\ behaves exactly in the same way on the l.h.s. and the
    r.h.s.: any token reaching a position in $\Gamma$ will continue in
    exactly the same way in both sides.
  \end{varitemize}
  This concludes the proof.
  % MORE GENERAL
  % \begin{itemize}
  % \item $\otimes/\parr$ 
  % \item $\parr/sync$ lift
  % \item $sync$ elimination.\textbf{[ ***this is an interesting case***]}   In both l.h.s and r.h.s. either all $\onelk$ links are active, or none of them, because they are all at level 0, or inside the same box  %(*** all at level 0, if we add the condition in bold below ***)
  %   
  %   If  in both l.h.s and r.h.s. all $\onelk$ links are active, then we  are able to have a token on the conclusion  of each $\onelk$ link (and in one step,  on each conclusion of the sync link). If no,   no token can be present in either side.
  %   
  % \item box open.   If the $\onelk$ link on the l.h.s. is not active, it means that we are inside a box which is unlocked; hence there is also no token circulating in the depicted portion of the net in either the l.h.s. or the  r.h.s.
  %   If the $\onelk$ link is active, we add the position $\pp$ on its conclusion to the domain of the state. The token crosses  the cut, and unlocks the box;  $\pp$ does not belong to the domain of $f_R$. Once the box is unlocked,the l.h.s. and the r.h.s. behave exactly in the same way: any token reaching a position in $\Gamma$ will continue in exactly the same way in both sides.
  % \end{itemize}
\end{proof}

A consequence of Theorem~\ref{soundness} is that $M_R$ deadlocks iff
$M_{S}$ deadlocks.  Soundness, thus, allows us to study deadlock
freedom of the \SIAM\ by studying deadlock freedom of the \SIAM\ when
running on a cut-free net.
%\condmedskip
\begin{theorem}[Termination and Deadlock Freedom]\label{final_th}
  Let $R$ be a net, and $M_R$ its multitoken machine. All runs of
  $M_R$ terminate on a final state $\Ftok_R$. Moreover, such a final
  state is unique.
\end{theorem}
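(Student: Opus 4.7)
The plan is to leverage the three results already in hand, namely termination of \SIAM\ runs from $\Itok_R$, confluence of $\redsiam$, and Soundness (Theorem \ref{soundness}), together with the normal-form characterizations established for cut elimination. Termination of every run, and the uniqueness of the terminal state on which it stops, follow immediately from the preceding \SIAM\ lemma: every sequence from $\Itok_R$ is finite, and the diamond property forces all runs to end at the same state. The substantive content that still requires proof is \emph{deadlock freedom}, i.e.\ that this unique terminal state is a \emph{final} state in the sense of the definition ($\tok$ with image in $\PosF(R) \cup B$ for some $B \subseteq \BOTBOX(R)$), rather than a stuck non-final one.

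The main move is to reduce to normal forms via Soundness. Since $[R] = [S]$ where $S$ is the normal form of $R$, the machine $M_R$ deadlocks if and only if $M_S$ does, so it suffices to prove deadlock freedom for $S$ in normal form. By Theorem \ref{closed_norm} (when $R$ is closed) or Theorem \ref{cut_el} (when $R$ is in \SMLLcf), the normal form $S$ contains no cuts, no boxes, and no sync links — in other words $S$ is essentially a pure \MLL\ net. On such a net the \SIAM\ degenerates to the classical token machine: the transitions for $\ax$, $\otimes$, $\parr$, and $\onelk$ are unconditionally enabled whenever a token sits at the appropriate position, there are no sync barriers to satisfy, and no box to unlock. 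Hence each token introduced by $\Itok_R$ or by a $\onelk$ transition simply traces a path to a conclusion, and the terminal state has range inside $\PosF(S)$.

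The main obstacle is the general case of a non-closed net whose normal form may still contain boxes (and, consequently, sync links blocked above them). I would handle this by induction on box depth, invoking at each level the same closed-case analysis applied to the $0$-graph: once the outer multitoken has either reached a final conclusion or parked on an unopened lock position $(e,i) \in \BOTBOX(S)$, any remaining liveness question reduces to the inner net of that box. Lemma \ref{no_deadlocks} supplies the crucial ingredient: whenever a run appeared to be stuck at sync links, the strict partial order $\before$ would provide a minimal sync link whose in-edges are all hereditary conclusions of $\onelk$'s or axioms (by the sync normal-form Lemma \ref{sync_norm_form}), contradicting the hypothesis that no transition applies. Combining this with termination and confluence then yields existence and uniqueness of $\Ftok_R$.
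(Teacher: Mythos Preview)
Your overall strategy---reduce to the normal form via Soundness, then argue deadlock freedom on that normal form---is the paper's strategy too. But two of the steps you rely on are incorrect as stated.

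First, for \SMLLcf: Theorem~\ref{cut_el} gives you a cut-free normal form, but it does \emph{not} eliminate sync links. Your sentence ``the normal form $S$ contains no cuts, no boxes, and no sync links --- in other words $S$ is essentially a pure \MLL\ net'' is false in this case; the paper explicitly flags this (``The difficulty comes from the fact that in the normal form we do have sync links''). So the machine does not degenerate to the classical single-token machine, and you still need an argument that the sync barriers can always be crossed. The paper handles this by observing that in a cut-free \SMLLcf\ net every sync link is a hereditary conclusion of $\axlk$ links, and then uses the strict order of Lemma~\ref{no_deadlocks} to show tokens never get stuck. You gesture at Lemma~\ref{no_deadlocks} in your last paragraph, but you've placed that argument in the wrong case and mixed it with Lemma~\ref{sync_norm_form}, which is a statement about \emph{net} reduction, not about \SIAM\ states.

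Second, for general (non-closed) \SMLLb: the normal form need not be cut-free at all---Lemma~\ref{lazy_lem} and Theorem~\ref{closed_norm} require closedness. So ``reduce to the normal form and analyze it'' does not give you a tractable object, and your proposed induction on box depth has no clear base. The paper's route is different and more robust: it builds a \emph{closure} $\widehat R$ of $R$ by cutting each conclusion containing $\bot$ against a canonical multiplicative net $R^{\circ}(A^{\bot})$, so that $\widehat R$ is closed. One then shows (Lemma~\ref{smllbclosure}) that $M_R$ deadlocks iff $M_{\widehat R}$ does, and applies the closed-case result to $\widehat R$. This closure construction is the missing idea in your general-case argument.
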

%\condmedskip
This result holds for both \SMLLcf\ and \SMLLb\ nets. However, the
argument is quite different, because the results which we have on cut
elimination are themselves different:
\begin{varenumerate}
\item 
  In the case of \SMLLcf, we can exploit the fact that the normal
  form of a net is cut-free (Theorem~\ref{cut_el}). The difficulty
  comes from the fact that in the normal form we do have sync links.
  However, if a net is cut free, all sync links are hereditary
  conclusions of $\axlk$ links, and the correctness criterion (via
  Lemma~\ref{no_deadlocks}) allows us to establish that the tokens
  cannot get stuck before reaching a final state.  Together with
  Soundness, this allows us to conclude.
\item 
  In the case of \SMLLb, on the one hand the proof is simplified
  by the fact that all sync links can be eliminated, but on the other
  hand, we have a result of cut elimination which is limited to the
  closed case. We modularize the proof into two steps:
  \begin{varitemize}
  \item[i.]  
    if $R$ is an \SMLLb\ net with no $\bot$ in the
    conclusion, Theorem \ref{final_th} is an immediate consequence of
    Soundness and Theorem \ref{closed_norm};
  \item[ii.]  
    we then prove the result for an arbitrary \SMLLb\ net
    $R$ by providing a construction that embeds $R$ into a larger net
    $\widehat R$ (its closure) which has no $\bot$ in its conclusion
    (therefore Point 1. applies), and by showing that $M_{\widehat R}$
    deadlocks iff $M_R$ deadlocks.
  \end{varitemize}
\end{varenumerate}
One may wonder what happens to the sync links in case (2.i).
Normalization pushes the sync's upwards, and since we are in the
closed case, eventually all sync links are hereditary conclusions of
$\onelk$ links. The elimination of the sync links 
(Theorem~\ref{closed_norm}) can thus be seen as an step towards
deadlock freedom: Lemma~\ref{no_deadlocks} guarantees that no situation
like the one in Figure~\ref{fig_nodeadlock}(b) can arise (thus, there is
always a top-most sync link which can be eliminated).
%%%%%%%%%%%%%%%%%%%%%
\paragraph{On The Proof of Theorem~\ref{final_th}.}
%%%%%%%%%%%%%%%%%%%%%
Some more details can now be given about the proof of
Theorem~\ref{final_th}.
\begin{lemma}[\SMLLcf] 
  Let $R$ be a \SMLLcf\ net. All runs of the machine $M_R$ terminate
  on a final state.
\end{lemma}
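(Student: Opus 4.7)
Proof plan: I would combine the Soundness Theorem~\ref{soundness}, \SMLLcf\ Cut-Elimination (Theorem~\ref{cut_el}), and the strict partial order $\prec$ from Lemma~\ref{no_deadlocks}. First I reduce to the cut-free case: by Soundness $[R]=[R^*]$ where $R^*$ is the $\rightarrow$-normal form of $R$, hence $M_R$ terminates on a final state iff $M_{R^*}$ does; Theorem~\ref{cut_el} gives that $R^*$ is cut-free, so we may assume $R$ itself is cut-free. Termination of $\redsiam$ has already been established, so every run of $M_R$ reaches a unique terminal state $\tok$, and the content of the lemma reduces to showing $\tok$ is final.

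Suppose, for contradiction, $\tok$ is not final. In cut-free \SMLLcf\ there are no $\cut$, $\onelk$, $\botlk$, or box transitions, and axiom and multiplicative transitions have no blocking preconditions. Hence any token not at a final position must sit at an in-edge of some sync link waiting for saturation. Let $\mathcal{S}$ denote the finite set of sync links with at least one blocked token at one of their in-edges. By assumption $\mathcal{S}\neq\emptyset$, and since $\prec$ is a strict partial order on the finite set of sync links (Lemma~\ref{no_deadlocks}), one picks a $\prec$-minimal element $l_0\in\mathcal{S}$. The goal is to derive a contradiction by showing every in-edge of $l_0$ is actually saturated.

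Fix an in-edge $e$ of $l_0$ and an atom occurrence $m$ on $e$. Because $R$ is cut-free, the \IAM\ determines a unique virtual trajectory from some initial position to $(e,m)$: an upward segment on negative atoms, a single axiom crossing, then a downward segment on positive atoms. If the token of this trajectory has not reached $(e,m)$, it is stuck at an in-edge of some sync link $l'\in\mathcal{S}$. I then need to show $l'\prec l_0$, contradicting minimality. Both $l'$'s out-edge and $e$ are polarized, and between them the trajectory must remain inside the polarized sub-structure of $R$: once a trajectory leaves polarized land through a non-polarized multiplicative node it cannot re-enter, since axioms preserve the polarization status of their conclusions, sync links touch only polarized edges, and polarized multiplicative nodes have all edges polarized. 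Consequently the segment from $l'$ to $l_0$ consists entirely of polarized edges connecting polarized nodes, going upward on negative atoms, possibly across a polarized axiom, then downward on positive atoms --- which is exactly a polarized path witnessing $l'\prec l_0$. Thus $(e,m)$ is occupied, every in-edge of $l_0$ is saturated, and $l_0$ is not blocked, contradicting $l_0\in\mathcal{S}$.

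The main obstacle I anticipate is rigorously establishing the closedness of polarized land in cut-free \SMLLcf\ used in the previous step. It requires a careful case analysis of how a trajectory can cross between polarized and non-polarized edges, distinguishing polarized par/tensor nodes (which keep trajectories inside polarized land) from non-polarized ones (which send trajectories out irrevocably in the cut-free case), and confirming the same for axioms. Once this geometric lemma is in place, the identification of the trajectory segment between $l'$ and $l_0$ with a polarized path, and hence the contradiction with Lemma~\ref{no_deadlocks}, should follow cleanly.
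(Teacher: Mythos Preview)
Your proposal is correct and follows essentially the same approach as the paper: reduce to the cut-free case via Soundness (Theorem~\ref{soundness}) and Theorem~\ref{cut_el}, then use the strict partial order $\prec$ of Lemma~\ref{no_deadlocks} to rule out stuck configurations. The paper's own proof is a one-line sketch invoking exactly these two results; your write-up fleshes out the combinatorial core that the paper leaves implicit, in particular the ``polarized land'' closure argument showing that the virtual trajectory between two sync links lies entirely on polarized edges and hence yields a polarized path witnessing $l'\prec l_0$.
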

%\condmedskip
\begin{proof} 
  We examine the normal form of $R$, and establish the result by using
  Theorem \ref{soundness} and Lemma \ref{no_deadlocks}.
%\TODO{
%Since  only sync links can cause a deadlock,
%if there are no sync links in the net, no deadlock is possible. Hence we  prove that the multitoken can transit  across all sync links of $R'$,  by induction on the number $n$ of the sync links which have not yet been used in a transition.
%To simplify the argument, we transform $R$ into a net  $R'$, which is the one obtained  by $Nsync/ax$ reductions (we just move the sync links from the negative to the positive conclusion of the $\ax$ links).
% It is immediate that $f_R=f_{R'}$.
%% All sync links act on positive formulas, which are hereditary conclusion of either $1$ or $\ax$.  
%% 
%\begin{enumerate}
%\item 
%We observe that from the initial state of $M_{R'}$, we can reach a state  where there is a token on each conclusion of a $\onelk$ link (immediate), and the positive  conclusion  of each   $\ax$ link are  saturated (because  all tokens starting from $\PosI$  can go up till the axioms, without crossing any sync link).
%\item We choose a sync link $l$ which is minimal w.r.t. the before relation (before and above here coincide). By \ref{no_deadlocks}, such a sync link exists.  All the IN edges of $l$ are positive, and conclusion of either an $\ax$ or a $\onelk$, hence they can be saturated, and the transition applies. 
%\item Eventually, the tokens have crossed all sync links.
%\end{enumerate}
%We conclude by confluence.
%}
\end{proof}
We can prove the same for closed \SMLLb\ nets, as an easy consequence
of Corollary \ref{closed_norm}:
%\condmedskip
\begin{lemma}[\SMLLb: Closed Case]\label{closed_SIAM} 
  Let $R$ be a closed \SMLLb\ net. All runs of $M_R$ terminate on a
  final state.
\end{lemma}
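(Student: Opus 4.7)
The plan is to reduce to the cut-free, box-free, sync-free case via the results already established, and then verify deadlock freedom directly on a pure $\MLL$ net.

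First, let $S$ denote the normal form of $R$ (which exists and is unique by Propositions~\ref{termination} and \ref{confluence}). Since $R$ is closed, Theorem~\ref{closed_norm} tells us that $S$ is a pure $\MLL$ net: it contains no cuts, no boxes, and no sync links. Moreover, by the convention in the proof of Theorem~\ref{soundness}, $R$ and $S$ share the same sets $\PosI$ and $\PosF$ of initial and final positions.

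Second, I would show that $M_S$ terminates on a final state. The argument is by direct inspection of the remaining $\SIAM$ transitions, which for a pure $\MLL$ net reduce to the standard $\IAM$ transitions on the $\{\axlk,\otimes,\parr\}$ fragment. Each token starts on an initial (negative) position of a conclusion, moves upward through $\parr$ and $\otimes$ links (following the polarity-dictated direction of the transition rules), crosses an axiom link, and then moves downward to a final (positive) position. Since there are no boxes to lock the advance, no sync links to require saturation, and no cuts to bounce tokens back, no token can get stuck: every transition that applies to a non-final token is a local multiplicative or axiom move, and these are defined for every internal position. Thus the unique normal form of $\redsiam$ starting from $\Itok_S$ is a function whose image lies in $\PosF(S)$, which is the definition of a final state.

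Third, I would apply the Soundness Theorem (Theorem~\ref{soundness}): since $R \rightarrow^* S$, we have $[R] = [S]$. Because $[S] \neq \Box$ (the machine $M_S$ does not deadlock, by the previous step), we conclude $[R] \neq \Box$ as well, so $M_R$ terminates on a final state $\Ftok_R$. Uniqueness of $\Ftok_R$ follows from confluence of $\redsiam$, already established.

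The main obstacle is the second step, namely the verification that a cut-free, box-free, sync-free net really does run every token from an initial position to a final position without getting stuck. This is conceptually the standard $\IAM$ well-behavedness argument, but it must be phrased carefully in the present multi-token framework: one must check that the transitions for $\ax$, $\cut$ (vacuously), $\parr$, $\otimes$ together with the polarity-direction invariant are enough to guarantee progress at every non-final token position. This is a routine case analysis on the sort of the link carrying the token's current edge, made easy by the absence of the only blocking features of the $\SIAM$ (box locks and sync saturation).
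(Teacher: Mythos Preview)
Your proposal is correct and follows essentially the same route as the paper: invoke Theorem~\ref{closed_norm} to pass to a normal form $S$ with no cuts, boxes, or sync links, verify directly that $M_S$ cannot deadlock, and conclude via Soundness (Theorem~\ref{soundness}). One small omission in your second step: the normal form $S$ may still contain $\onelk$ links, so your case analysis on the ``$\{\axlk,\otimes,\parr\}$ fragment'' should also account for the $\onelk$ transition, which merely spawns a downward-moving positive token and creates no obstruction.
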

%\condmedskip
But we can go even further, and prove the same result for arbitrary
\SMLLb\ nets by following another strategy.  Given an \MLL\ formula
$A$, we associate to it a net $R^{\circ}(A)$ which has conclusions
$\Gamma, A$, and the property that no formula in $\Gamma$ contains
occurrences of $\bot$. We call $A$ the main conclusion of
$R^{\circ}(A)$.  We can proceed as follows, by induction on the
structure of $A$:
\begin{varitemize}
\item 
  $R^{\circ}(1)$ is the $\onelk$ link;
  % \item $R^{\circ}(\bot)$ is the axiom link $\vdash 1, \bot$;
\item 
  $R^{\circ}(\at)$, where $\at$ is either $\bot$ or $X$ or $X\b$,
  is the axiom link of conclusions $\at\b , \at$.
\item 
  $R^{\circ}(A_1\otimes A_2)$ is the net of conclusions
  $\Gamma_1,\Gamma_2, A_1\otimes A_2 $ which is obtained from
  $R^{\circ}(A_1)$ and $R^{\circ}(A_1)$ by connecting the main
  conclusions with a $\otimes$ link.
\item 
  $R^{\circ}(A_1\parr A_2)$ is the net of conclusions
  $\Gamma_1,\Gamma_2, A_1\parr A_2 $ which is obtained from
  $R^{\circ}(A_1)$ and $R^{\circ}(A_1)$ by connecting the main
  conclusions with a $\parr$ link.
\end{varitemize}
Given a \SMLLb\ net $R$, its \emph{closure} $\widehat R$ is the net resulting
by cutting each conclusion $A$ of $R$ which contains $\bot$ with the
main conclusion of the net $R^{\circ}(A\b)$. Since each $R^{\circ}(A\b)$ is a
multiplicative net, it is easy to see that if there is a deadlock in
$\widehat R$, it can only be inside $R$.
\begin{lemma}\label{smllbclosure} 
  Let $R$ be a \SMLLb\ net and $\widehat R$ its closure.  $M_R$ is
  deadlock free if and only if $M_{\widehat R}$ is deadlock free.
\end{lemma}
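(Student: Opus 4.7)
The key observation is that each auxiliary net $R^{\circ}(A^{\bot})$ is a pure \MLL\ net, containing only axioms, cuts, tensors, pars and $\onelk$ links --- no sync links and no boxes. Hence the \SIAM\ transitions restricted to an $R^{\circ}$ component coincide with the transitions of the standard \IAM. All such transitions are unconditional (none of them carries a side-condition that can block), so viewed as a black box, $R^{\circ}(A^{\bot})$ acts as a deterministic \emph{rerouter} between its $\Gamma$ conclusions and its main conclusion. First I would prove, by induction on the structure of $A^{\bot}$, the following saturation lemma: in any maximal run of $M_{\widehat R}$ the cut edge coming from the main conclusion of each $R^{\circ}(A^{\bot})$ eventually gets saturated, the tokens on it originating either from initial positions of $\widehat R$ lying in $\Gamma$ or from $\onelk$ links inside $R^{\circ}$. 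The inductive cases for $\otimes$ and $\parr$ are routine, while the base cases reduce to propagating a token across an axiom (for $R^{\circ}(\gamma)$ with $\gamma$ atomic) or firing an $\onelk$ transition to introduce a positive token (for $R^{\circ}(1)$).

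Next I would set up a simulation between runs of $M_R$ and runs of $M_{\widehat R}$. There is a natural bijection $\phi$ between initial positions of $R$ and the tokens eventually arriving onto the cut edges in $\widehat R$: initial positions in a $\bot$-free conclusion of $R$ are also initial positions of $\widehat R$, while initial positions on a $\bot$-containing conclusion $A$ of $R$ are put into correspondence, via the saturation lemma applied to $R^{\circ}(A^{\bot})$, with the tokens that $R^{\circ}(A^{\bot})$ delivers onto the atoms of $A$ through the cut (using the fact that positive and negative atom occurrences swap across $(\cdot)^{\bot}$). Under $\phi$, any \SIAM\ transition of $M_R$ that happens strictly inside $R$ lifts verbatim to a transition of $M_{\widehat R}$, while transitions of $M_{\widehat R}$ that take place inside an $R^{\circ}$ component have no counterpart in $M_R$ but, by the saturation lemma, always succeed.

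From this simulation both directions of the lemma follow. If $M_R$ deadlocks at some sync link or locked box in $R$, lifting the deadlocked state produces a stuck state of $M_{\widehat R}$ in which the same obstruction persists and no transition inside the $R^{\circ}$ components can help, contradicting deadlock-freedom of $M_{\widehat R}$. Conversely, any deadlock in $M_{\widehat R}$ must occur at a sync link or a locked box in $R$ (since, by construction, no blocking side-condition lives in $R^{\circ}$), and via $\phi$ it projects to a deadlock of $M_R$.

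I expect the main obstacle to be the saturation lemma together with the polarity-correctness of $\phi$. One has to verify that tokens travelling ``into'' the cut from the $R^{\circ}$ side arrive exactly at the atom occurrences of the $A$ edge in $R$ which would have been initial positions of $M_R$, with matching polarity annotations, so that each token's subsequent behaviour in $R$ coincides with that of the corresponding token of $M_R$. The bookkeeping for $1/\bot$ atoms, where $R^{\circ}(1)$ generates a fresh token via $\onelk$ rather than piping one in from a $\Gamma$ conclusion, is the most delicate point.
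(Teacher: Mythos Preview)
Your approach is the same as the paper's: the crucial observation is that each $R^{\circ}(A^{\bot})$ is purely multiplicative (axioms, $\onelk$, $\otimes$, $\parr$; no sync links and no boxes), so no \SIAM\ transition inside it can block, and any deadlock of $M_{\widehat R}$ must sit inside $R$. The paper states just this and leaves the rest to the reader; your simulation and the bijection $\phi$ are the natural way to make it explicit.

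One correction: your saturation lemma is stated too strongly and, as written, is false. The cut edge is \emph{not} saturated in an arbitrary maximal run of $M_{\widehat R}$, because the negative atom occurrences of $A^{\bot}$ (equivalently, the positive occurrences of $A$) receive their tokens only from the $R$ side, and if $R$ deadlocks those tokens may never arrive. What the induction on $A^{\bot}$ actually yields, and what you actually need for $\phi$, is the one-sided statement: every \emph{positive} atom position on the main conclusion of $R^{\circ}(A^{\bot})$ eventually carries a token whose origin is either an initial position in $\Gamma$ or a $\onelk$ link inside $R^{\circ}$, independently of anything happening in $R$. Dually, any token arriving from $R$ on a negative atom of the main conclusion is routed by $R^{\circ}$ to a final position in $\Gamma$; this handles the final-position half of the correspondence. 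With these two one-sided facts in place, both directions of your simulation go through as you describe.
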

As a consequence:
\begin{lemma}[\SMLLb: General Case]\label{df_smllb} 
  Let $R$ be an \SMLLb\ net. All runs of $M_R$ terminate on a  final state.
\end{lemma}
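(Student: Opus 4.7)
The plan is to proceed by a straightforward reduction to the closed case, using the machinery already built up. By the basic properties lemma, $\redsiam$ is confluent and admits no infinite sequence from $\Itok_R$, so every run of $M_R$ terminates at \emph{some} state $\tok$, and this normal form is unique. The only remaining content of the statement is therefore deadlock freedom: I must show that the unique normal form $\tok$ is actually a \emph{final} state, rather than a stuck non-final configuration.

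First I would form the closure $\widehat R$ of $R$, as defined just before Lemma \ref{smllbclosure}. I would check by induction on the structure of $A$ that each $R^{\circ}(A)$ is a correct \SMLLb\ structure (in fact a cut-free \MLL\ net) whose conclusions other than the main one are $\bot$-free, and that connecting it via a cut link to a conclusion of $R$ preserves correctness (no switching cycles are introduced at depth $0$, and the content of the boxes of $R$ is untouched). Consequently $\widehat R$ is a closed \SMLLb\ net.

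Next I would invoke Lemma \ref{closed_SIAM} on $\widehat R$ to conclude that every run of $M_{\widehat R}$ terminates on a final state; in particular, $M_{\widehat R}$ is deadlock free. Applying Lemma \ref{smllbclosure} in the reverse direction then yields that $M_R$ is deadlock free as well. Combined with the earlier termination and confluence result, this gives that the unique normal form of a run of $M_R$ from $\Itok_R$ is a final state $\Ftok_R$, which is exactly the statement.

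The only delicate point, and what I expect to be the main obstacle, is ensuring that the construction really behaves as advertised: namely, that the \SIAM\ on $\widehat R$ cannot deadlock \emph{inside} the attached multiplicative nets $R^{\circ}(A^\bot)$. This is essentially the content assumed in Lemma \ref{smllbclosure}, and the intuition is that each $R^{\circ}(A^\bot)$ contains no sync links and no boxes, so tokens that enter it simply traverse $\ax$, $\cut$, $\tens$, $\parr$ transitions deterministically; hence any deadlock of $M_{\widehat R}$ must already be a deadlock internal to $R$. Once this is in hand, the above chain of implications goes through with no further work.
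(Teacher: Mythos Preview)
Your proposal is correct and follows exactly the approach of the paper, which states the result as an immediate corollary of Lemma~\ref{smllbclosure} and Lemma~\ref{closed_SIAM}. You simply spell out the intermediate steps (termination/confluence reducing the question to deadlock freedom, and correctness of the closure construction) that the paper leaves implicit.
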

\begin{proof}
  Immediate corollary of Lemma~\ref{smllbclosure} and Lemma \ref{closed_SIAM}.
\end{proof}
%%%%%%%%%%%%%%%%%%%%%%%%%%%%%%%%%%%%%%%%%%%
\subsection{Multi-Boxes}\label{multiboxes}
%%%%%%%%%%%%%%%%%%%%%%%%%%%%%%%%%%%%%%%%%%%
We have everything in place to model also probabilistic or
non-deterministic choice, with only a small modification of \SMLLb,
where the content of a box can be not a \emph{single} net, but a
\emph{sequence} of $n>1$ nets. A multi-box of conclusion $\bot,
\Gamma$ may contain several nets $R_1,\ldots,R_{n}$, all of the same
conclusion $\Gamma$ (see Figure~\ref{fig:multiboxes}). 
\begin{figure}[htbp]
\centering
\fbox{
  \begin{minipage}{\figurewidth}
    \vspace{5pt}
    \begin{center}
      \begin{tikzpicture}[ocenter]
% bang
\node at (0,0) [etic](pospal){};
\node at (pospal.center)[right=2.7*\stlar, etic](posauxportu){};
\abox{pospal}{exbox}{\stboxlw}{\stboxrwaux+40pt}{\stboxh+7pt}
\node at (pospal.center) [etic](possym){$\bot$};
% auxiliary ports
\node at (pospal.center)[right=1.5*\stlar, etic](posauxportl){\tiny $\Gamma$};
\node at (pospal.center)[right=4*\stlar, etic](posauxportr){\tiny $\Gamma$};
\node at (pospal.center)[below right=0.5*\stalt and 2.7*\stlar, etic](posauxportd){\tiny $\Gamma$};
% net g
\node at (pospal.center)[above right=0.8*\stalt and 1.5*\stlar, net](netl){$R_1$};
\node at (pospal.center)[above right=0.8*\stalt and 2.8*\stlar, etic](netc){$\cdots$};
\node at (pospal.center)[above right=0.8*\stalt and 4*\stlar, net](netr){$R_n$};
\draw[nopolgen, in=90, out=-90](netl)to(posauxportl);
\draw[nopolgen, in=90, out=-90](netr)to(posauxportr);
\draw[nopolgen, in=90, out=-90](posauxportu)to(posauxportd);
% bot node
\node at (possym.center) [above=1.4*\altax, etic](botlink){\tiny $\mathsf{bot}$};
\draw[nopol, in=90, out=-90] (botlink) to (possym);
\node at (possym.center) [below=0.5*\stalt](possymd){};
\draw[nopol, in=90, out=-90] (possym) to (possymd);
\end{tikzpicture}
    \end{center}
  \end{minipage}}
\condnocaptionrule\caption{A Multi-box}\label{fig:multiboxes}
\end{figure}
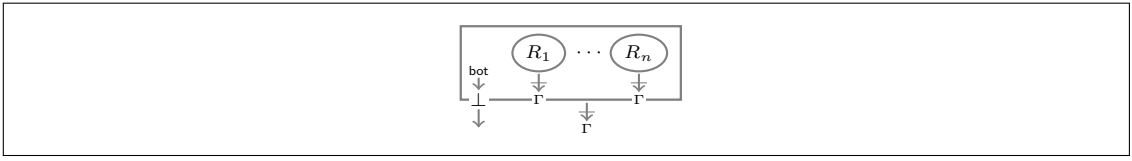
In sequent calculus, this would correspond to the rule
$$
\infer{\vdash \bot, \Gamma}
{{\vdash \Gamma} & \cdots &  {\vdash \Gamma}}
$$
The normalization relation becomes nondeterministic, as it chooses one
of the nets inside the box when opening it. All the interesting
properties hold, with very similar proofs.

The \SIAM\ needs to be adapted slightly, but in fact was already set
for this (see point 4 in the discussion on transitions).  If a box
contains $n$ nets, and a token reaches the lock of the box, the
machine choose a value $i \in \{0,\ldots,n-1\}$ nondeterministically;
the integer $i$ now determines to which of the $R_i$ internal nets the
incoming tokens will move (only $R_i$ is active). As a consequence,
for any box, at most one of the internal nets is populated and
traversed by tokens. If we call \emph{slice} a choice of a single
subnet for each box, one can check that a run of the machine always
happens in a single slice, and has therefore similar properties to the
ones we studied above. We will put this at work in the following
section.
%%%%%%%%%%%%%%%%%%%%%%%%%%%%%%%%%%%%%%%%%%%%%%%%%%%%%%%%%%%%%
\section{Synchronization, Proof-nets, and Quantum Computation}\label{quantum_section}
%%%%%%%%%%%%%%%%%%%%%%%%%%%%%%%%%%%%%%%%%%%%%%%%%%%%%%%%%%%%%
In this section, the whole development of the last two sections will be applied to higher-order quantum
computation in the form of a linear quantum $\lambda$-calculus akin to that recently introduced
by Selinger and Valiron~\cite{SelingerValiron}.
%%%%%%%%%%%%%%%%%%%%%%%%%%%%%%%%%%%%%%%%%%%%%%
\subsection{A Linear Quantum $\lambda$-calculus}
%%%%%%%%%%%%%%%%%%%%%%%%%%%%%%%%%%%%%%%%%%%%%%
\newcommand{\pof}{\triangleright}
\newcommand{\tdone}{\pi}
\newcommand{\tdtwo}{\psi}
\newcommand{\tdthree}{\xi}
\newcommand{\qvars}{\mathcal{QV}}
\newcommand{\unops}{\mathcal{UO}}
\newcommand{\fsvone}{V}
\newcommand{\fsvtwo}{W}
\newcommand{\ccontone}{\Theta}
\newcommand{\uopone}{\mathsf{U}}
\newcommand{\uoptwo}{\mathsf{V}}
\newcommand{\tensp}[2]{#1^{\otimes #2}}
\newcommand{\fsone}{X}
\newcommand{\qrone}{Q}
\newcommand{\qrtwo}{P}
\newcommand{\sbsts}[1]{\mathcal{SUB}(#1)}
\newcommand{\hilb}[1]{\mathbb{H}(#1)}
\newcommand{\cls}{\mathcal{C}}
\newcommand{\clsp}[1]{\mathcal{C}_{#1}}
\newcommand{\RR}{\mathbb{R}}
\newcommand{\qred}{\mapsto}
\newcommand{\qredmult}{\Rightarrow}
\newcommand{\qcone}{C}
\newcommand{\qctwo}{D}
\newcommand{\distrone}{\mathscr{D}}
\newcommand{\distrtwo}{\mathscr{E}}
\newcommand{\distrs}[1]{\mathbb{D}(#1)}
\newcommand{\probone}{p}
\newcommand{\probtwo}{q}
\newcommand{\tuple}[1]{\langle #1\rangle}
\newcommand{\pofobs}[3]{\mathcal{R}^{#2}_{#3}(#1)}
\newcommand{\proj}[3]{\mathcal{J}^{#2}_{#3}(#1)}
\newcommand{\valone}{a}
\newcommand{\valtwo}{b}
\newcommand{\valthree}{c}
\newcommand{\bitone}{b}
\newcommand{\econe}{E}
\newcommand{\ectwo}{F}
\newcommand{\ecthree}{G}
\newcommand{\emctx}{[\cdot]}
\newcommand{\apec}[2]{#1[#2]}
\newcommand{\apecd}[1]{\econe[#1]}
\newcommand{\suprt}[1]{\mathsf{S}(#1)}
\newcommand{\normform}{\mathit{NF}}
\newcommand{\setone}{X}
\newcommand{\settwo}{Y}
We assume given a finite set $\unops$ of symbols, each denoting a unitary operator.
This is ranged over by  metavariables like $\uopone$ and $\uoptwo$. When we want to 
insist on $\uopone$ having arity $n$, we write it as $\uopone_n$.
\emph{Terms}  are defined as follows:
\begin{align*}
\termone,\termtwo,\termthree&\bnf
    \varone\midd\qvarone\midd\termone\termtwo\midd\lam x.t\midd\pair{t}{u}\midd\llet{x}{y}{t}{u}\\
    &\midd\new\midd\ttrue\midd\tfalse\midd\meas\midd\uopone\midd\ifthen{t}{u}{v}.
\end{align*}
where $\qvarone$ ranges in a denumerable set $\qvars$ of quantum variables.
\emph{Types} are defined as follows:
$$
\typone,\typtwo\bnf\bool\midd\qbool\midd\typone\lin\typtwo\midd\typone\otimes\typtwo.
$$
For every natural number $n$ and for every type $\typone$, the expression
$\tensp{\typone}{n}$ stands for the type
$$
\underbrace{\typone\tens\cdots\tens\typone}_{n\mbox{ times}}.
$$
\emph{Typing judgments} are in the form
$\tyg{\ccontone,\fsvone}{\termone}{\typone}$ where $\termone$ is a term, $\typone$ is a type,
$\ccontone$ is an environment mapping $\lambda$-variables to types, and $\fsvone\subseteq\qvars$.
Expressions like $\ccontone,\fsvone$ are called \emph{contexts} and are denoted by metavariables
like $\contone$ or $\conttwo$. \emph{Typing rules} are as in Figure~\ref{fig:typingrules}.
\begin{figure}
\begin{center}
\fbox{
\footnotesize
\begin{minipage}{\figurewidth}
$$
\AxiomC{}
\UnaryInfC{$\tyg{\varone:\typone}{\varone}{\typone}$}
\DisplayProof
\qquad
\AxiomC{}
\UnaryInfC{$\tyg{\qvarone}{\qvarone}{\qbool}$}
\DisplayProof
\qquad
\AxiomC{$\tyg{\contone,\varone:\typone}{\termone}{\typtwo}$}
\UnaryInfC{$\tyg{\contone}{\lam\varone.\termone}{\typone\lin\typtwo}$}
\DisplayProof
$$
$$
\AxiomC{$\tyg{\contone}{\termone}{\typone\lin\typtwo}$}
\AxiomC{$\tyg{\conttwo}{\termtwo}{\typone}$}
\BinaryInfC{$\tyg{\contone,\conttwo}{\termone\termtwo}{\typtwo}$}
\DisplayProof
\quad
\AxiomC{$\tyg{\contone}{\termone}{\typone}$}
\AxiomC{$\tyg{\conttwo}{\termtwo}{\typtwo}$}
\BinaryInfC{$\tyg{\contone,\conttwo}{\pair{\termone}{\termtwo}}{\typone\tens\typtwo}$}
\DisplayProof
$$
$$
\AxiomC{$\tyg{\contone}{\termone}{\typone\tens\typtwo}$}
\AxiomC{$\tyg{\conttwo,\varone:\typone,\vartwo:\typtwo}{\termtwo}{\typthree}$}
\BinaryInfC{$\tyg{\contone,\conttwo}{\llet{\varone}{\vartwo}{\termone}{\termtwo}}{\typthree}$}
\DisplayProof
$$
$$
\AxiomC{}
\UnaryInfC{$\tyg{\cdot}{\ttrue}{\bool}$}
\DisplayProof
\quad
\AxiomC{}
\UnaryInfC{$\tyg{\cdot}{\tfalse}{\bool}$}
\DisplayProof
\quad
\AxiomC{$\tyg{\contone}{\termone}{\bool}$}
\AxiomC{$\tyg{}{\termtwo}{\typone}$}
\AxiomC{$\tyg{}{\termthree}{\typone}$}
\TrinaryInfC{$\tyg{\contone}{\ifthen{\termone}{\termtwo}{\termthree}}{\typone}$}
\DisplayProof
$$
$$
\AxiomC{}
\UnaryInfC{$\tyg{\cdot}{\new}{\qbool}$}
\DisplayProof
\quad
\AxiomC{}
\UnaryInfC{$\tyg{\cdot}{\uopone_n}{\qbool^{\otimes n}\lin \qbool^{\otimes n} }$}
\DisplayProof
\quad
\AxiomC{}
\UnaryInfC{$\tyg{\cdot}{\meas}{\qbool\lin\bool}$}
\DisplayProof
$$
\vspace{0pt}
\end{minipage}}
\condnocaptionrule\caption{Typing Rules}\label{fig:typingrules}
\end{center}
\end{figure}
Observe how terms in the form
$\ifthen{\termone}{\termtwo}{\termthree}$ are typed in a slightly
non-standard way, in that $\termtwo$ and $\termthree$ are required to
be \emph{closed}. In presence of higher-order types and nameless
functions, however, this does not cause any significant threat to
expressivity. If a derivation $\tdone$ has conclusion
$\tyg{\contone}{\termone}{\typone}$, then we write
$\tdone\pof\tyg{\contone}{\termone}{\typone}$. \emph{Values} are terms
generated by the following grammar:
$$
\valone,\valtwo\bnf\qvarone\midd\abstr{\varone}{\termone}\midd\pair{\valone}{\valtwo}\midd
\new\midd\ttrue\midd\tfalse\midd\meas\midd\uopone.
$$ 
\emph{Evaluation contexts} are expressions built as follows:
\begin{align*}
\econe,\ectwo,\ecthree&\bnf
\emctx\midd\econe\termone\midd\valone\ectwo\midd\pair{\econe}{\termone}\midd\pair{\valone}{\ectwo}
    \midd\llet{x}{y}{\econe}{\termone}\midd\ifthen{\econe}{\termone}{\termtwo}.
\end{align*}
As usual, we indicate as $\apec{\econe}{\termone}$ the term obtained
by filling the only occurrence of $\emctx$ in $\econe$ with
$\termone$.
%%%%%%%%%%%%%%%%%%%%%%%%%%%%%%%%%%%%%%%%%%%%%%%%%%%%%%%%%%%
\subsection{Quantum Closures and Operational Semantics}\label{sect:qcos}
%%%%%%%%%%%%%%%%%%%%%%%%%%%%%%%%%%%%%%%%%%%%%%%%%%%%%%%%%%%
Given any finite set $\fsone$, $\hilb{\fsone}$ is the
finite-dimensional Hilbert space generated by $\fsone$, i.e.  the
Hilbert Space having $\fsone$ as basis. Let $\fsone$ be any finite
set. By $\sbsts{\fsone}$ we note the (finite) set of all functions
mapping elements of $\fsone$ to bits in $\{0,1\}$. Please observe that
$|\sbsts{\fsone}|=2^{|\fsone|}$. Given an element $\qrone$ of
$\hilb{\sbsts{\fsone}}$, an element $\qvarone\in\fsone$ and a bit
$\bitone\in\{0,1\}$, we indicate with
$\proj{\qrone}{\qvarone}{\bitone}$ the projection of $\qrone$ into the
subspace of $\hilb{\sbsts{\fsone}}$ in which $\qvarone$ is assigned to
$\bitone$. Similarly, $\pofobs{\qrone}{\qvarone}{\bitone}$ is the
probability of observing $\bitone$ when measuring the value of
$\qvarone$ in $\qrone$.

A \emph{quantum closure} of type $\typone$ is a pair
$[\qrone,\termone]$, where $\qrone$ is a normalized vector in $\hilb{\sbsts{\fsvone}}$
and $\tyg{\fsvone}{\termone}{\typone}$. In this case we often
write $[\qrone,\termone]:\typone$. Quantum closures are taken modulo a form of $\alpha$-equivalence
in which one is allowed to change the name of a quantum variable, modifying the underlying quantum register
accordingly. Quantum closures are denoted with 
metavariables like $\qcone$ and $\qctwo$. $\clsp{\typone}$ is the set of all quantum closures
of type $\typone$, while $\cls$ is the set of all quantum closures.

\begin{sloppypar}
We need to work with distributions on quantum closures, namely functions 
in the form $\distrone:\cls\rightarrow\RR_{[0,1]}$ satisfying $\sum_{\qcone\in\cls}\distrone(\qcone)\leq 1$.
The subset of $\cls$ of those quantum closures $\qcone$ for which $\distrone(\qcone)>0$ is said
to be the \emph{support} of $\distrone$ and is denoted as $\suprt{\distrone}$. In the following,
we will only be concerned with distributions having (at most) denumerable support.
Distributions having a finite support are indicated with expressions
like $\{\qcone_1^{\probone_1},\ldots,\qcone_n^{\probone_n}\}$, with the obvious meaning.
The set of distributions over a set $\setone$ is denoted as $\distrs{\setone}$.
\end{sloppypar}

Quantum closures can be given a semantics in two steps:
first, one gives a binary relation $\qred$ between quantum closures and distributions capturing
one-step reduction, then one generalizes the relation above to another relation $\qredmult$ 
capturing multi-step reduction. Rules for $\qred$ are in Figure~\ref{fig:opsem}.
\begin{figure*}
\begin{center}
\fbox{
\begin{minipage}{.97\textwidth}
$$
\AxiomC{}
\UnaryInfC{$[\qrone,\apecd{(\abstr{\varone}{\termone})\termtwo}]\qred\{[\qrone,\apecd{\subst{\termone}{\varone}{\termtwo}}]^1\}$}
\DisplayProof
\qquad
\AxiomC{}
\UnaryInfC{$[\qrone,\apecd{\llet{\varone}{\vartwo}{\pair{\termone}{\termtwo}}{\termthree}}]
  \qred\{[\qrone,\apecd{\subst{\termthree}{\varone,\vartwo}{\termone,\termtwo}}]^1\}$}
\DisplayProof
$$
$$
\AxiomC{}
\UnaryInfC{$[\qrone,\apecd{\ifthen{\ttrue}{\termone}{\termtwo}}]\qred\{[\qrone,\apecd{\termone}]^1\}$}
\DisplayProof
\qquad
\AxiomC{}
\UnaryInfC{$[\qrone,\apecd{\ifthen{\tfalse}{\termone}{\termtwo}}]\qred\{[\qrone,\apecd{\termtwo}]^1\}$}
\DisplayProof
$$
$$
\AxiomC{$\qvarone$ is fresh}
\UnaryInfC{$[\qrone,\apecd{\new}]\qred[\qrone\tens\ket{\qvarone\leftarrow 0},\apecd{\qvarone}]$}
\DisplayProof
\qquad
\AxiomC{}
\UnaryInfC{$[\qrone,\apecd{\uopone_n\tuple{\qvarone_1,\ldots,\qvarone_n}}]\qred
  \{[\uopone_n^{\varone_1,\ldots,\varone_n}(\qrone),\apecd{\tuple{\qvarone_1,\ldots,\qvarone_n}}]^1\}$}
\DisplayProof
$$
$$
\AxiomC{}
\UnaryInfC{$[\qrone,\apecd{\meas\;\qvarone}]\qred
  \{
    [\proj{\qrone}{\qvarone}{0},\apecd{\ttrue}]^{\pofobs{\qrone}{\qvarone}{0}},
    [\proj{\qrone}{\qvarone}{1},\apecd{\tfalse}]^{\pofobs{\qrone}{\qvarone}{1}}
  \}$}
\DisplayProof
$$
\vspace{0pt}
\end{minipage}}
\condnocaptionrule\caption{Operational Semantics}\label{fig:opsem}
\end{center}
\end{figure*}
Reduction along $\qred$ preserves types:
%\condmedskip
\begin{lemma}[Subject Reduction]
  If $\qcone:\typone$ and $\qcone\qred\distrone$, then $\qctwo:\typone$ for
  every $\qctwo\in\suprt{\distrone}$.
\end{lemma}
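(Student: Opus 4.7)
The plan is to proceed by case analysis on which rule of Figure~\ref{fig:opsem} witnesses $\qcone \qred \distrone$, after first establishing two standard auxiliary lemmas. Write $\qcone = [\qrone, \apecd{\termtwo}]$ where the evaluation context $\econe$ isolates the redex $\termtwo$.

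The first auxiliary lemma is a decomposition statement for evaluation contexts: if $\tyg{\contone}{\apecd{\termtwo}}{\typone}$, then there is a splitting $\contone = \contone_1, \contone_2$ and a type $\typtwo$ such that $\tyg{\contone_1}{\termtwo}{\typtwo}$ and, for every $\termthree$ with $\tyg{\contone_1}{\termthree}{\typtwo}$, we have $\tyg{\contone}{\apecd{\termthree}}{\typone}$. This follows by a straightforward induction on the structure of $\econe$, reading the typing rules backwards. The second auxiliary lemma is the usual substitution lemma: if $\tyg{\contone, \varone:\typtwo}{\termone}{\typone}$ and $\tyg{\conttwo}{\termthree}{\typtwo}$ with compatible contexts, then $\tyg{\contone,\conttwo}{\subst{\termone}{\varone}{\termthree}}{\typone}$; proved by induction on the first derivation, with linearity sidestepping any weakening subtleties.

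With these tools, each reduction case becomes a routine check on the term side. For the $\beta$-rule and the $\mathsf{let}$ rule, the decomposition lemma isolates the redex at some type $\typtwo$, the substitution lemma yields that the contractum also has type $\typtwo$, and reassembly via the decomposition lemma gives the goal. The two $\mathsf{if}$-rules are immediate, since $\termone$ and $\termtwo$ are already typed at the ambient type $\typone$.

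For the quantum-specific rules we additionally verify that the resulting pair $[\qrone', \termone']$ is a well-formed closure on the quantum side. For $\new$, freshness of $\qvarone$ makes $\qrone \tens \ket{\qvarone \leftarrow 0}$ a normalized vector in $\hilb{\sbsts{\fsvone \cup \{\qvarone\}}}$, and $\qvarone$ is typable of type $\qbool$, matching the result type of $\new$. For the unitary rule, $\uopone_n^{\varone_1,\ldots,\varone_n}$ acts unitarily on $\hilb{\sbsts{\fsvone}}$, hence preserves normalization; the tuple $\tuple{\qvarone_1,\ldots,\qvarone_n}$ has type $\qbool^{\otimes n}$, matching the result type of the application. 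For measurement, each projection $\proj{\qrone}{\qvarone}{\bitone}$, renormalized in accordance with its probability weight $\pofobs{\qrone}{\qvarone}{\bitone}$ via the Born rule, lives in $\hilb{\sbsts{\fsvone}}$, and both $\ttrue$ and $\tfalse$ have type $\bool$ as required. None of these cases presents a serious obstacle; the only delicate point is the bookkeeping of the quantum variable set $\fsvone$ across the $\new$ rule (where it grows) and the implicit normalization convention in the measurement rule, both of which are standard once made explicit.
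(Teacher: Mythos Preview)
Your approach is essentially the same as the paper's: a substitution lemma, a decomposition/typing lemma for evaluation contexts, and case analysis on the reduction rule. The paper packages the decomposition lemma slightly differently, defining a judgment $\tyg{\ctxthree}{\econe}{\typtwo\ecimp\typone}$ meaning ``if $\tyg{\conttwo}{\termone}{\typtwo}$ then $\tyg{\ctxthree,\conttwo}{\apec{\econe}{\termone}}{\typone}$'', and then proving an inversion lemma that extracts $\typtwo$, $\conttwo$, $\ctxthree$ from a derivation of $\tyg{\contone}{\apec{\econe}{\termone}}{\typone}$.

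One point worth noting: your decomposition lemma, as stated, requires the replacement term $\termthree$ to live in the \emph{same} context $\contone_1$ as the original redex. This is fine for $\beta$, $\mathsf{let}$, $\mathsf{if}$, unitary, and $\meas$, but not for $\new$: there $\termtwo=\new$ is typed in the empty context while its contractum $\qvarone$ is typed in the context $\{\qvarone\}$. You flag this as ``the only delicate point'', and you are right that it is routine; but the fix is precisely to state the lemma the way the paper does, letting the hole-filler carry an arbitrary context $\conttwo$ and concluding $\tyg{\contone_2,\conttwo}{\apecd{\termthree}}{\typone}$. With that adjustment your argument and the paper's coincide. (A second minor imprecision: after $\meas$ the projected register should live in $\hilb{\sbsts{\fsvone\setminus\{\qvarone\}}}$, not $\hilb{\sbsts{\fsvone}}$, under the natural identification of the $\qvarone\mapsto\bitone$ subspace with the smaller Hilbert space.)
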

%\condmedskip
Subject reduction can be proved with the usual strategy, namely through an appropriate Substitution Lemma:
\begin{lemma}[Substitution Lemma for Terms]\label{lem:substLemTerms}
  If $\tyg{\contone, \varone:\typone}{\termone}{\typtwo}$ and
  $\tyg{\conttwo}{\termtwo}{\typone}$,
  then $\tyg{\contone, \conttwo}{\subst{\termone}{\varone}{\termtwo}}{\typtwo}$.
\end{lemma}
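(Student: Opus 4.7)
The plan is to proceed by induction on the typing derivation $\tdone\pof\tyg{\contone,\varone:\typone}{\termone}{\typtwo}$, with a case analysis on the last rule applied. The key structural property making this essentially routine is that the type system is strictly linear: in every binary rule the context is partitioned between the two premises, and no rule discards or duplicates variables. Consequently the declared hypothesis $\varone:\typone$ occurs in exactly one premise of any binary inference, so the inductive hypothesis can be applied unambiguously to that premise while the other is left untouched and then the rule reassembled.

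The axiom cases are immediate. If the derivation ends with the variable rule for $\varone$ itself, then $\contone$ is empty, $\typtwo=\typone$, and $\subst{\varone}{\varone}{\termtwo}=\termtwo$, which has type $\typone$ under $\conttwo$ by hypothesis. The quantum-variable rule and the constant rules ($\new$, $\ttrue$, $\tfalse$, $\meas$, $\uopone_n$) are vacuous, as their premiss contexts consist only of quantum variables or are empty and thus cannot contain $\varone:\typone$. For the abstraction rule we invoke the inductive hypothesis on the immediate subderivation after using $\alpha$-conversion to assume the bound variable is fresh for $\conttwo$ and distinct from $\varone$.

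The substantive cases are the binary rules (application, pair, let) and the if-then-else rule. In each binary case the conclusion context is $\contone_1,\contone_2,\varone:\typone$, which by linearity means $\varone$ sits in exactly one of the two sub-contexts, say the left one. Applying the inductive hypothesis to that premise yields a derivation with context $\contone_1,\conttwo$, and the original rule reassembled with the untouched right premise produces the desired derivation of $\subst{\termone}{\varone}{\termtwo}$ with context $\contone_1,\contone_2,\conttwo$. The if-then-else rule merits special attention: since the two branches $\termtwo$ and $\termthree$ are typed in the empty context, $\varone$ must occur in the guard, and substitution only affects that subterm, so the inductive hypothesis applied to the guard suffices.

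The only real bookkeeping obstacle is maintaining precise control of the context partition across cases and avoiding capture when substituting inside abstractions and lets; the standard implicit-$\alpha$-conversion convention, which allows us to assume bound variables of $\termone$ are fresh with respect to $\FV{\termtwo}$ and $\conttwo$, disposes of this cleanly. No case requires contraction or weakening, precisely because linearity guarantees that substitution merely relocates the single occurrence of $\varone$ from its hypothesis position into the one leaf of the derivation where it is consumed.
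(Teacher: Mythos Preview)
Your proposal is correct and follows exactly the approach the paper indicates, namely the standard induction on the structure of the typing derivation of $\tyg{\contone,\varone:\typone}{\termone}{\typtwo}$; the paper's own proof is a single sentence to that effect, and your write-up simply spells out the case analysis in detail.
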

\begin{proof}
  This is the usual induction on the structure of a proof of
  $\tyg{\contone, \varone:\typone}{\termone}{\typtwo}$.
\end{proof}
\begin{deff}
  We write $\tyg{\contone}{\econe}{\typone \ecimp \typtwo}$
  when an evaluation context $\econe$ satisfies the following:
  if $\tyg{\conttwo}{\termone}{\typone}$ then
  $\tyg{\contone,\conttwo}{\apec{\econe}{\termone}}{\typtwo}$.
\end{deff}
Such judgments for evaluation contexts can be derived by the inference rules in Figure~\ref{fig:evalCtxtRules};
indeed the rules are admissible.
\begin{figure}[htbp]
  \centering
  \fbox{
    \footnotesize
    \begin{minipage}{\figurewidth}
    $$
    \AxiomC{}
    \UnaryInfC{$\tyg{\cdot}{\emctx}{\typone \ecimp \typone}$}
    \DisplayProof
    \qquad
    \AxiomC{$\tyg{\contone}{\econe}{\typone \ecimp (\typthree \imp \typtwo)}$}
        \AxiomC{$\tyg{\conttwo}{\termone}{\typthree}$}
    \BinaryInfC{$\tyg{\contone,\conttwo}{\econe \termone}{\typone \ecimp \typtwo}$}
    \DisplayProof
    $$
    $$
    \AxiomC{$\tyg{\conttwo}{\valone}{(\typthree \imp \typtwo)}$}
        \AxiomC{$\tyg{\contone}{\ectwo}{\typone \ecimp \typthree}$}
    \BinaryInfC{$\tyg{\contone,\conttwo}{\valone \ectwo}{\typone \ecimp \typtwo}$}
    \DisplayProof
    $$
    $$
    \AxiomC{$\tyg{\contone}{\econe}{\typone \ecimp \typtwo}$}
        \AxiomC{$\tyg{\conttwo}{\termone}{\typthree}$}
    \BinaryInfC{$\tyg{\contone,\conttwo}
      {\pair{\econe}{\termone}}
      {\typone \ecimp \typtwo \tens \typthree}$}
    \DisplayProof
    \qquad
    \AxiomC{$\tyg{\contone}{\valone}{\typtwo}$}
        \AxiomC{$\tyg{\conttwo}{\ectwo}{\typone \ecimp \typthree}$}
    \BinaryInfC{$\tyg{\contone,\conttwo}
      {\pair{\valone}{\ectwo}}
      {\typone \ecimp \typtwo \tens \typthree}$}
    \DisplayProof
    $$
    $$
    \AxiomC{$\tyg{\contone}{\econe}{\typone \ecimp \typtwo \tens \typthree}$}
        \AxiomC{$\tyg{\conttwo,\varone:\typtwo,\vartwo:\typthree}{\termone}{\typefour}$}
    \BinaryInfC{$\tyg{\contone,\conttwo}
      {\llet{\varone}{\vartwo}{\econe}{\termone}}
      {\typone \ecimp \typefour}$}
    \DisplayProof
    $$
    $$
    \AxiomC{$\tyg{\contone}{\econe}{\typone \ecimp \bool}$}
        \AxiomC{$\tyg{\cdot}{\termone}{\typthree}$}
            \AxiomC{$\tyg{\cdot}{\termtwo}{\typthree}$}
    \TrinaryInfC{$\tyg{\contone}
      {\ifthen{\econe}{\termone}{\termtwo}}
      {\typone \ecimp \typthree}$}
    \DisplayProof
    $$
    \end{minipage}
  }
\caption{Evaluation Contexts --- Typing Rules}\label{fig:evalCtxtRules}
\end{figure}
\begin{lemma}\label{lem:evalCtxtInversion}
  If $\tyg{\contone}{\apec{\econe}{\termone}}{\typone}$ then
  $\tyg{\conttwo}{\termone}{\typtwo}$ and $\tyg{\ctxthree}{\econe}{\typtwo \ecimp \typone}$
  for some type $\typtwo$ and contexts $\conttwo$, $\ctxthree$
  where $\contone = \conttwo, \ctxthree$.
  \hfill $\Box$
\end{lemma}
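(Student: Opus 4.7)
The proof plan is to proceed by straightforward structural induction on the evaluation context $\econe$, using at each step (i) inversion on the typing derivation of $\apec{\econe}{\termone}$ to peel off the outermost typing rule, (ii) the induction hypothesis applied to the inner evaluation context, and (iii) the appropriate rule from Figure~\ref{fig:evalCtxtRules} to reassemble a typing of $\econe$.

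The base case $\econe = \emctx$ is trivial: here $\apec{\econe}{\termone} = \termone$, so we have $\tyg{\contone}{\termone}{\typone}$ already, and we take $\conttwo = \contone$, $\ctxthree = \cdot$, $\typtwo = \typone$, using the axiom rule $\tyg{\cdot}{\emctx}{\typone \ecimp \typone}$. For each inductive case, the term $\apec{\econe}{\termone}$ has a fixed outermost shape, and because the calculus is linear, the typing rule at the root of any derivation for this term is uniquely determined. For instance, in the case $\econe = \econe'\termthree$, any derivation of $\tyg{\contone}{\apec{\econe'}{\termone}\;\termthree}{\typone}$ must end with the application rule, yielding a split $\contone = \contone_1, \contone_2$ with $\tyg{\contone_1}{\apec{\econe'}{\termone}}{\typthree\lin\typone}$ and $\tyg{\contone_2}{\termthree}{\typthree}$ for some $\typthree$. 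The induction hypothesis applied to $\econe'$ produces $\conttwo,\ctxthree'$ with $\contone_1 = \conttwo,\ctxthree'$, $\tyg{\conttwo}{\termone}{\typtwo}$, and $\tyg{\ctxthree'}{\econe'}{\typtwo \ecimp (\typthree\lin\typone)}$; applying the corresponding context rule with $\termthree$ gives $\tyg{\ctxthree',\contone_2}{\econe'\termthree}{\typtwo \ecimp \typone}$, so we set $\ctxthree = \ctxthree',\contone_2$.

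The cases $\valone\ectwo$, $\pair{\econe}{\termone}$, $\pair{\valone}{\ectwo}$, $\llet{\varone}{\vartwo}{\econe}{\termone}$, and $\ifthen{\econe}{\termone}{\termtwo}$ are analogous: in each case, invert the typing derivation to obtain its immediate subderivations (one of which types a subterm of the form $\apec{\econe_0}{\termone}$ for some subcontext $\econe_0$), invoke the IH on $\econe_0$, and reassemble using the matching rule of Figure~\ref{fig:evalCtxtRules}. Linearity ensures that the contexts split cleanly into disjoint pieces, so that the bookkeeping $\contone = \conttwo,\ctxthree$ is preserved throughout.

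The only mild subtlety is that the hypothesis of the lemma only states that $\apec{\econe}{\termone}$ is typable, so we must carefully invert the derivation rather than assume a fixed derivation shape; but for each grammatical form of $\econe$, the outermost typing rule is forced, so no case analysis on the derivation is needed beyond the case analysis on $\econe$. The main (mild) obstacle is simply the verbosity of enumerating all seven context-forming cases; none of them involves any genuine difficulty, since every evaluation-context rule in Figure~\ref{fig:evalCtxtRules} is designed to mirror exactly a term-typing rule from Figure~\ref{fig:typingrules}.
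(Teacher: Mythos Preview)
Your proposal is correct and is exactly the standard argument one would expect: structural induction on $\econe$, using that the typing rules are syntax-directed (a fact the paper itself invokes elsewhere) to invert the derivation at each step, then reassembling via the rules of Figure~\ref{fig:evalCtxtRules}. The paper gives no proof at all for this lemma (the $\Box$ appears immediately after the statement), treating it as routine; your write-up simply fills in what the authors left implicit.
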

\begin{proof}(of Subject Reduction)
  The proof is done by a case analysis on the form of the reduction step $\qcone\qred\distrone$:
  \begin{varitemize}
  \item 
    Suppose it is
    $[\qrone,\apecd{(\abstr{\varone}{\termone})\termtwo}]
    \qred\{[\qrone,\apecd{\subst{\termone}{\varone}{\termtwo}}]^1\}$
    and that
    $\tyg{\contone}{\apecd{(\abstr{\varone}{\termone})\termtwo}}{\typeone}$.
    Then by Lemma~\ref{lem:evalCtxtInversion}
    $\tyg{\conttwo}{(\abstr{\varone}{\termone})\termtwo}{\typetwo}$
    and $\tyg{\ctxthree}{\econe}{\typetwo \ecimp \typeone}$ for some
    type $\typetwo$ and contexts $\conttwo$, $\ctxthree$ where
    $\contone = \conttwo, \ctxthree$.  Since typing rules are
    syntax-directed (i.e.\ any judgment uniquely determines its
    derivation if derivable), the derivation of
    $\tyg{\conttwo}{(\abstr{\varone}{\termone})\termtwo}{\typetwo}$ is
    \begin{prooftree}
      \AxiomC{$\pi_1$}
      \noLine
      \UnaryInfC{$\tyg{\conttwo_1, \varone:\typetwo'}{\termone}{\typetwo}$}
      \UnaryInfC{$\tyg{\conttwo_1}{(\abstr{\varone}{\termone})}{\typetwo' \imp \typetwo}$}
          \AxiomC{$\pi_2$}
          \noLine
          \UnaryInfC{$\tyg{\conttwo_2}{\termtwo}{\typetwo'}$}
      \BinaryInfC{$\tyg{\conttwo}{(\abstr{\varone}{\termone})\termtwo}{\typetwo}$}
    \end{prooftree}
    where $\conttwo = \conttwo_1, \conttwo_2$.
    Thus by Lemma~\ref{lem:substLemTerms}
    $\tyg{\conttwo}{\subst{\termone}{\varone}{\termtwo}}{\typtwo}$
    and
    $\tyg{\contone}{\apecd{\subst{\termone}{\varone}{\termtwo}}}{\typone}$
    by definition of $\typetwo \ecimp \typeone$.
  \item 
    Suppose that
    $[\qrone,\apecd{\llet{\varone}{\vartwo}{\pair{\termone}{\termtwo}}{\termthree}}]
    \qred\{[\qrone,\apecd{\subst{\termthree}{\varone,\vartwo}{\termone,\termtwo}}]^1\}$.
    The proof is similar to the above case.
  \item 
    Suppose it is
    $[\qrone,\apecd{\ifthen{\ttrue}{\termone}{\termtwo}}]
    \qred\{[\qrone,\apecd{\termone}]^1\}$, and that
    $\tyg{\contone}{\apecd{\ifthen{\ttrue}{\termone}{\termtwo}}}{\typeone}$.
    Then by Lemma~\ref{lem:evalCtxtInversion}
    $\tyg{\cdot}{\ifthen{\ttrue}{\termone}{\termtwo}}{\typetwo}$ (the
    context is uniquely determined to be $(\cdot)$ by typing rules in
    this case) and $\tyg{\contone}{\econe}{\typetwo \ecimp \typeone}$
    for some $\typetwo$.  Since typing rules are syntax-directed, it
    can be immediately checked that the derivation of
    $\tyg{\cdot}{\ifthen{\ttrue}{\termone}{\termtwo}}{\typetwo}$
    contains a derivation of $\tyg{\cdot}{\termone}{\typetwo}$.  Thus
    $\tyg{\contone}{\apecd{\termone}}{\typeone}$ by definition of
    $\typetwo \ecimp \typeone$.
  \item 
    The other cases are proved similarly:
    $\tyg{\ctxthree}{\econe}{\typetwo \ecimp \typeone}$ for some
    $\typetwo$ and $\ctxthree$ by Lemma~\ref{lem:evalCtxtInversion},
    and substituting the term appearing in the right-hand side of
    $\qred$ we obtain that the desired typing $\qctwo:\typone$ is
    derivable.
  \end{varitemize}
This concludes the proof.
\end{proof}

A quantum closure $\qcone$ is in \emph{normal form} if for any distribution
$\distrone$ it does not hold that $\qcone\qred\distrone$. Reduction cannot
get stuck, as expected:
%\condmedskip
\begin{lemma}[Progress]
  If $[\qrone,\termone]:\typone$ is a quantum closure in normal form, then 
  $\termone$ is a value.
\end{lemma}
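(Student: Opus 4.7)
(Plan)
The plan is to proceed by induction on the structure of the term $\termone$, using the typing derivation $\tyg{\fsvone}{\termone}{\typone}$. The key auxiliary observation, which I would package as a \emph{Canonical Forms Lemma}, is the following: a value $\valone$ with $\tyg{\fsvone}{\valone}{\typone}$ must have a shape dictated by $\typone$. Specifically, a value of type $\bool$ is either $\ttrue$ or $\tfalse$; a value of type $\qbool$ is some quantum variable $\qvarone\in\fsvone$; a value of type $\typone\tens\typtwo$ is a pair $\pair{\valone}{\valtwo}$ with $\valone$ and $\valtwo$ values of the corresponding types; a value of type $\typone\lin\typtwo$ is either an abstraction $\abstr{\varone}{\termtwo}$, or one of the constants $\meas$, $\uopone_n$ (the latter only when $\typone\lin\typtwo$ matches $\qbool^{\otimes n}\lin\qbool^{\otimes n}$), or $\new$ (only when $\typone\lin\typtwo$ is not actually a function type --- $\new$ has type $\qbool$, so it does not arise here). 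In particular, iterating the pair case, a value of type $\qbool^{\otimes n}$ is an $n$-tuple of quantum variables. This lemma is proved by inspection of the value grammar and the typing rules, which are syntax-directed.

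With this in hand, I would distinguish cases on $\termone$. If $\termone$ is a $\lambda$-variable $\varone$, this is impossible because the typing context contains only the quantum variables $\fsvone$. The cases $\termone \in \{\qvarone, \abstr{\varone}{\termtwo}, \pair{\valone}{\valtwo}, \new, \ttrue, \tfalse, \meas, \uopone\}$ (with the pair case requiring that both components already be values) are immediately values. In every remaining case, the term has the shape $\apec{\econe}{\termtwo}$ for some non-trivial evaluation context $\econe$, and the goal is to exhibit a reduction. For $\termone = \termtwo\termthree$: by induction, either $\termtwo$ is not a value and reduces inside the context $\emctx\,\termthree$, or $\termtwo$ is a value and then either $\termthree$ reduces inside $\valone\,\emctx$ or $\termthree$ is a value as well; in the latter sub-case I appeal to the Canonical Forms Lemma applied to $\termtwo$ at type $\typethree\lin\typone$ to conclude that $\termtwo\termthree$ matches exactly one of the redex schemes $(\abstr{\varone}{\cdot})\termthree$, $\meas\,\qvarone$, or $\uopone_n\tuple{\qvarone_1,\ldots,\qvarone_n}$. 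The let-binding and conditional cases are analogous: either the head subterm reduces by IH inside the appropriate evaluation context, or it is a value and Canonical Forms forces it to be respectively a pair or a boolean constant, making the let- or if-reduction rule applicable.

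The main obstacle I foresee is the unitary-operator case: to fire the rule $\uopone_n\tuple{\qvarone_1,\ldots,\qvarone_n}\qred\ldots$ I need that when $\termtwo\termthree$ is a value-applied-to-value with $\termtwo=\uopone_n$, the argument $\termthree$ really is a syntactic tuple of \emph{quantum variables} rather than some other value of type $\qbool^{\otimes n}$. This is why the iterated-tensor clause of the Canonical Forms Lemma is essential, and I would prove it by a small inner induction on $n$ (a value of type $\qbool\tens\typtwo'$ must be a pair whose left component, being a value of type $\qbool$, is a quantum variable). Apart from this, the whole proof is a routine case analysis driven by the syntax-directed nature of the typing rules; note that neither the quantum register $\qrone$ nor the probabilistic structure of $\qred$ plays any role, since every reduction rule is triggered purely by the syntactic shape of $\termone$ at the position singled out by the evaluation context.
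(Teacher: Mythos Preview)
Your plan is essentially the paper's own ``case analysis on the form of $\termone$'', spelled out in full detail, so the overall strategy is fine. There is, however, one genuine slip in your Canonical Forms Lemma: in this calculus $\new$ is listed among the \emph{values} and has type $\qbool$, so a value of type $\qbool$ is \emph{not} necessarily a quantum variable --- it may also be $\new$. Consequently your iterated claim ``a value of type $\qbool^{\otimes n}$ is an $n$-tuple of quantum variables'' is false as stated, and your argument in the $\meas$ and $\uopone_n$ sub-cases does not go through as written: you could have $\meas\;\new$ or $\uopone_2\pair{\qvarone}{\new}$ with both sides values, yet neither matches the redex shape you appeal to.

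The fix is easy and does not change your structure. The point is that $\new$ is a value that nevertheless reduces in \emph{every} evaluation context (the rule $[\qrone,\apecd{\new}]\qred[\qrone\otimes\ket{\qvarone\leftarrow 0},\apecd{\qvarone}]$ fires even for $\econe=\emctx$). So in your application case, phrase the inner dichotomy as ``either $[\qrone',\termthree]$ reduces, or it is in normal form''. In the first branch you are done via the context $\valone\,\emctx$; in the second branch, by IH $\termthree$ is a value, and now canonical forms \emph{for normal-form values} of type $\qbool$ (resp.\ $\qbool^{\otimes n}$) do give exactly a quantum variable (resp.\ a tuple of them), because $\new$ has been excluded. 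Equivalently, keep your canonical forms lemma general (adding the $\new$ clause) and dispatch any residual $\new$ inside the argument using the $\new$-rule in the evident evaluation context.
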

%\condmedskip
\begin{proof}
  This is just a case analysis on the form of $\termone$.
\end{proof}
The multistep relation $\qredmult$ is defined by the following set of rules:
$$
\AxiomC{}%$\qcone\in\normform$}
\UnaryInfC{$\qcone\qredmult\{\qcone^1\}$}
\DisplayProof
\quad
\AxiomC{$\qcone\qred\{\qctwo_1^{\probone_1},\ldots,\qctwo_n^{\probone_n}\}$}
\AxiomC{$\qctwo_i\qredmult\distrone_i$}
\BinaryInfC{$\qcone\qredmult\sum_{i=1}^n\probone_i\cdot\distrone_i$}
\DisplayProof
$$
\begin{prop}[Termination]
  If $\qcone:\typone$, there is a unique distribution $\distrone$ 
  such that $\qcone\qred\distrone$ and any quantum closure in the support of
  $\distrone$ is in normal form.
\end{prop}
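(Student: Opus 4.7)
The plan is to combine three ingredients: determinism of the one-step relation $\qred$, strong normalization of the induced reduction tree, and a direct construction of the limit distribution (the statement is read with $\qredmult$ in place of $\qred$, matching the inductive rules just above it).

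First, I would prove that $\qred$ is \emph{deterministic up to its probabilistic branching}: for every quantum closure $\qcone$ not in normal form there is a unique distribution $\distrone$ with $\qcone\qred\distrone$. This reduces to a unique-decomposition lemma stating that any non-value term $\termone$ with $\tyg{\contone}{\termone}{\typone}$ factors uniquely as $\apec{\econe}{\termone'}$ for some redex $\termone'$, proved by induction on $\termone$ and exploiting the call-by-value shape of the evaluation context grammar; combined with the Progress Lemma, which guarantees the existence of a redex when $\termone$ is not a value, this pins down a uniquely defined reduction tree rooted at $\qcone$, with branching occurring only at $\meas$-steps where the support has two elements.

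The heart of the argument is to show that this reduction tree has finite depth, i.e., that the non-deterministic relation obtained by selecting an arbitrary element of $\suprt{\distrone}$ whenever $\qcone\qred\distrone$ is strongly normalizing on well-typed closures. The natural route is a reducibility argument in the style of Tait and Girard: for each type $\typone$ define a set $\mathcal{R}_{\typone}$ of reducible closures, with $\mathcal{R}_{\bool}$ and $\mathcal{R}_{\qbool}$ being the strongly normalizing closures of that type and the clauses for $\typone\lin\typtwo$ and $\typone\tens\typtwo$ being the usual ones lifted to closures; then prove by induction on the typing derivation that every well-typed closure is reducible, using a substitution lemma tailored to the linear setting. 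Linearity of the system removes the contraction-related subtleties that affect the non-linear case, and the absence of any fixed-point combinator means that no recursion clause is required.

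Given determinism and strong normalization, the reduction tree rooted at $\qcone$ is finite with all leaves in normal form; weighting each leaf by the product of the probabilities along its unique path from the root yields the desired distribution $\distrone$, and uniqueness is immediate because any $\distrone$ with support in normal form and $\qcone\qredmult\distrone$ must assign to each normal closure exactly that same sum of path probabilities. I expect the main obstacle to lie in the reducibility clause handling $\meas$: a measurement step splits a single closure into a distribution over closures that differ both in the term part and in the underlying quantum register, so reducibility must be formulated on closures (rather than on raw terms) and the proof that $[\qrone,\apecd{\meas\,\qvarone}]$ is reducible must check that both $[\proj{\qrone}{\qvarone}{0},\apecd{\ttrue}]$ and $[\proj{\qrone}{\qvarone}{1},\apecd{\tfalse}]$ land in the appropriate $\mathcal{R}_{\bool}$.
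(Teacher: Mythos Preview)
Your proposal would succeed, but it is dramatically heavier than the situation requires. The paper's entire proof is a single line: induction on the size of $\termone$, where $\qcone=[\qrone,\termone]$. The insight you are not exploiting is that the calculus is \emph{strictly linear}: every bound variable occurs exactly once in its scope, so substitution never duplicates a subterm, and hence each reduction rule strictly decreases the size of the underlying term (the one apparent exception, $\new\mapsto\qvarone$, is absorbed by weighting $\new$ as size $2$, or by a lexicographic pair with the number of $\new$ occurrences). Reducibility candidates are precisely the tool one reaches for when contraction lets $\beta$-reduction blow up the term size; with no contraction present, the Tait--Girard machinery buys you nothing over a direct size measure.

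One caution about your determinism step: the unique-decomposition lemma is less clean than you assume. The $\beta$-rule in Figure~\ref{fig:opsem} fires on $(\abstr{\varone}{\termone})\termtwo$ for \emph{arbitrary} $\termtwo$, while the evaluation-context grammar simultaneously admits $\valone\,\ectwo$; so for instance $(\abstr{\varone}{\termone})\,\new$ has two distinct decompositions. This does not threaten termination (the size measure drops along every branch), but for the uniqueness half of the proposition you would need either to read the $\beta$- and $\mathsf{let}$-rules as call-by-value, or to add an easy local-confluence argument. The paper's one-line proof leaves this point implicit too.
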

\begin{proof}
  This is an induction on the size of $\termone$, where $\qcone=[\qrone,\termone]$.
\end{proof}
%%%%%%%%%%%%%%%%%%%%%%%%%%%%%%%%%%%%
\subsection{Translation into \QSMLL}
%%%%%%%%%%%%%%%%%%%%%%%%%%%%%%%%%%%%
\newcommand{\QtoSMLL}[1]{\langle #1\rangle}
\newcommand{\dtdone}{\Pi}
\newcommand{\dtdtwo}{\Psi}
\newcommand{\lkone}{l}
\newcommand{\pnone}{R}   % P lo usiamo gia' per le formule positive
\newcommand{\pntwo}{S}
\newcommand{\gpnone}{\mathbf{R}}
\newcommand{\gpntwo}{\mathbf{S}}

Type derivations of the $\lambda$-calculus which  we have just introduced will be mapped into  nets 
for an immediate generalization of \SMLLb, called \QSMLL. Specifically, we need to generalize \SMLLb\ in the 
following two ways:
\begin{varitemize}
\item
  \emph{Synchronization nodes are  labelled with a unitary operator} whose arity is the sum of the number of 
  atom occurrences in the involved formulas.
\item
  \emph{Boxes contain two nets.}
\end{varitemize} 
Types can be translated into formulas as follows:
$$
\QtoSMLL{\bool}=\QtoSMLL{\qbool}=1;\qquad
\QtoSMLL{\typone\lin\typtwo}=\QtoSMLL{\typone}^\bot\parr\QtoSMLL{\typtwo};
\qquad
\QtoSMLL{\typone\tens\typtwo}=\QtoSMLL{\typone}\tens\QtoSMLL{\typtwo}.
$$
%\emph{Quantum-Closed Type Derivations.} 
Any type derivation $\tdone$ with conclusions
$\tyg{\varone_1:\typone_1,\ldots,\varone_n:\typone_n,\qvarone_1,\ldots,\qvarone_m}{\termone}{\typtwo}$
can be translated into a net $\QtoSMLL{\tdone}$ of the following shape:
\begin{center}
\begin{tikzpicture}[ocenter]
% bang
\node at (0,0) [etic](pospal){};
% auxiliary ports
\node at (pospal.center)[below right=\hstalt and \stlar, etic](posauxlport){\tiny $\langle A_1\rangle^\bot$};
\node at (pospal.center)[below right=\hstalt and 3.75*\stlar, etic](posauxrport){\tiny $\langle A_n\rangle^\bot$};
\node at (pospal.center)[below right=\hstalt and 5*\stlar, etic](posauxpport){\tiny $\langle B\rangle$};
\node at (pospal.center)[below right=\hstalt and 2.375*\stlar, etic](label){\tiny $\cdots$};
% net g
\node at (pospal.center)[above right=0.4*\stalt and 2.275*\stlar, net](netg){$\langle\pi\rangle$};
\draw[nopol, in=90, out=-170](netg)to(posauxlport);
\draw[nopol, in=90, out=-10](netg)to(posauxrport);
\draw[nopol, in=90, out=0](netg)to(posauxpport);
\end{tikzpicture}
\end{center}
\begin{figure}
\fbox{
\begin{minipage}{.97\textwidth}
\begin{center}
$
\left\langle
\AxiomC{$\tdone\pof\tyg{\contone}{\termone}{\bool}$}
\AxiomC{$\tdtwo\pof\tyg{}{\termtwo}{\typone}$}
\AxiomC{$\tdthree\pof\tyg{}{\termthree}{\typone}$}
\TrinaryInfC{$\tyg{\contone}{\ifthen{\termone}{\termtwo}{\termthree}}{\typone}$}
\DisplayProof
\right\rangle
$
\qquad
$=$
\qquad
\begin{tikzpicture}[ocenter]
% bang
\node at (0,0) [etic](pospal){};
\node at (pospal.center)[right=2.135*\stlar, etic](posauxportu){};
\abox{pospal}{exbox}{\stboxlw}{\stboxrwaux+25pt}{\stboxh+6pt}
\node at (pospal.center) [etic](possym){$\bot$};
% auxiliary ports
\node at (pospal.center)[right=1.25*\stlar, etic](posauxportl){\tiny $\langle A\rangle$};
\node at (pospal.center)[right=3*\stlar, etic](posauxportr){\tiny $\langle A\rangle$};
\node at (pospal.center)[below right=0.5*\stalt and 2.135*\stlar, etic](posauxportd){\tiny $\langle A\rangle$};
% net g
\node at (pospal.center)[above right=0.75*\stalt and 1.25*\stlar, net](netl){$\langle\psi\rangle$};
\node at (pospal.center)[above right=0.75*\stalt and 3*\stlar, net](netr){$\langle\xi\rangle$};
\draw[nopol, in=90, out=-90](netl)to(posauxportl);
\draw[nopol, in=90, out=-90](netr)to(posauxportr);
\draw[nopol, in=90, out=-90](posauxportu)to(posauxportd);
% bot node
\node at (possym.center) [above=1.4*\altax, etic](botlink){\tiny $\mathsf{bot}$};
\draw[nopol, in=90, out=-90] (botlink) to (possym);
% principal port
%\node at (pospal.center)[below = \hstalt, etic](posprincport){\tiny $\bot$};
%\draw[nopol](possym)to(posprincport);
% rest of the net
\node at (pospal.center) [below left=1.3*\altax and \ilar, etic](cutlink){\tiny $\cut$};
%\node at (cutlink.center) [above left=1.3*\altax and \ilar, etic](onelink){\tiny $\mathsf{one}$};
\node at (cutlink.center)[above left=1.1*\stalt and \ilar, net](netguard){$\langle\pi\rangle$};
\draw[nopol, in=180, out=-90] (netguard) to (cutlink);
\draw[nopolrev, in=-90, out=0] (cutlink) to (possym);
% quantum register
%\node at (onelink)[above=.7*\stalt, reg](qreg){$Q$};
% links
%\draw[dashed, in=90, out=-90](qreg)to(onelink);
\node at (pospal.center)[below right=0.5*\stalt and 2.135*\stlar, etic](posauxportd){\tiny $\langle A\rangle$};
\end{tikzpicture}

\vspace{5pt}

$
\left\langle
\AxiomC{}
\UnaryInfC{$\tyg{\cdot}{\meas}{\qbool\lin\bool}$}
\DisplayProof
\right\rangle
$
\qquad
$=$
\qquad
\begin{tikzpicture}[ocenter]
% bang
\node at (0,0) [etic](pospal){};
\node at (pospal) [below=.1*\stalt, etic](pospaleps){};
\node at (pospal.center)[right=2.135*\stlar, etic](posauxportu){};
\abox{pospal}{exbox}{\stboxlw}{\stboxrwaux+20pt}{\stboxh+6pt}
\node at (pospal.center) [etic](possym){$\bot$};
% auxiliary ports
\node at (pospal.center)[right=1.25*\stlar, etic](posauxportl){\tiny $\langle \mathbb{B}\rangle$};
\node at (pospal.center)[right=3*\stlar, etic](posauxportr){\tiny $\langle \mathbb{B}\rangle$};
% net g
\node at (pospal.center)[above right=\stalt and 1.25*\stlar, etic](netlup){\tiny $\mathsf{one}$};
\node at (pospal.center)[above right=0.55*\stalt and 1.25*\stlar, etic](netldown){\tiny $\blacksquare$};
\node at (netldown)[right=0.25*\stalt, etic](netldownop){\tiny $\mathsf{I}$};
\node at (pospal.center)[above right=\stalt and 3*\stlar, etic](netrup){\tiny $\mathsf{one}$};
\node at (pospal.center)[above right=0.55*\stalt and 3*\stlar, etic](netrdown){\tiny $\blacksquare$};
\node at (netrdown)[right=0.25*\stalt, etic](netrdownop){\tiny $\mathsf{X}$};
\draw[nopol, in=90, out=-90](netlup)to(netldown);
\draw[nopol, in=90, out=-90](netldown)to(posauxportl);
\draw[nopol, in=90, out=-90](netrup)to(netrdown);
\draw[nopol, in=90, out=-90](netrdown)to(posauxportr);
% bot node
\node at (possym.center) [above=1.4*\altax, etic](botlink){\tiny $\mathsf{bot}$};
\draw[nopol, in=90, out=-90] (botlink) to (possym);
% par node
\node at (pospal.center)[below right=1.25*\stalt and 1.25*\stlar, etic] (parrPal){\tiny $\langle\mathbb{Q}\multimap\mathbb{B}\rangle$};
\lpar{parrPal}{pospaleps}{posauxportu}{par};
\end{tikzpicture}
\end{center}
\end{minipage}}
\caption{Translation of Type Derivation --- Some Interesting Cases}\label{fig:pntrans}
\end{figure}
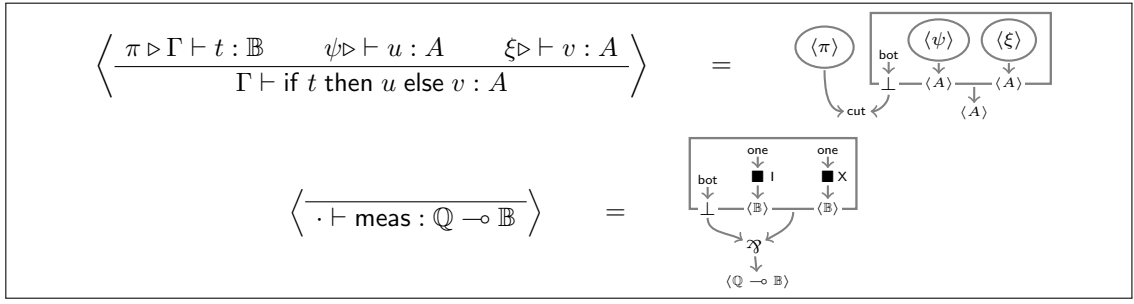
The construction is an induction on the structure of $\tdone$,
see Figure~\ref{fig:pntrans} for some interesting cases.

\noindent\emph{Extending the Translation to Quantum Closures.}
The translation scheme above can be extended to quantum closures, again by generalizing the notion of net; to a net 
we now associate a quantum register, as follows. A \emph{quantum net} is a pair $(\pnone,\qrone)$, where:
\begin{varitemize}
\item
  $\pnone$ is a \QSMLL\ net;
\item
  $\qrone$ is a normalized vector in $\hilb{\{\lkone_1,\ldots,\lkone_n\}}$, where $\lkone_1,\ldots,\lkone_n$ are 
    $\onelk$ links of $\pnone$ which are at depth $0$.
\end{varitemize}
%\emph{Type Derivations.} For the sake of the soundness proof, we can map any type derivation
%with conclusion $\tyg{\varone_1:\typone,\ldots,\varone_n:\typone,\qvarone_1,\ldots,\qvarone_m}{\termone}{\typtwo}$
%(together with a normalized vector $\qrone$ on $\hilb{\{\qvarone_1,\ldots,\qvarone_n\}}$
%can be translated into a quantum   net of the following shape:
The {generalization} above allows to map any quantum closure $[\qrone,\termone]$ into
the following quantum net 
\begin{center}
\begin{tikzpicture}[ocenter]
% bang
\node at (0,0) [etic](pospal){};
% auxiliary ports
\node at (pospal.center)[below right=\hstalt and \stlar, etic](posauxlport){\tiny $\langle A\rangle$};
%\node at (pospal.center)[below right=\hstalt and 3.75*\stlar, etic](posauxrport){\tiny $\langle A_n\rangle^\bot$};
%\node at (pospal.center)[below right=\hstalt and 5*\stlar, etic](posauxpport){\tiny $\langle B\rangle$};
%\node at (pospal.center)[below right=\hstalt and 2.375*\stlar, etic](label){\tiny $\cdots$};
% net g
\node at (pospal.center)[above right=0.4*\stalt and \stlar, net](netg){$\langle\pi\rangle$};
\draw[nopol, in=90, out=-90](netg)to(posauxlport);
%\draw[nopol, in=90, out=-10](netg)to(posauxrport);
%\draw[nopol, in=90, out=0](netg)to(posauxpport);
% ones
\node at (pospal.center) [etic, above left= .8*\stalt and .2*\stlar] (onespax){\tiny $\mathsf{one}$};
\node at (pospal.center) [etic, above left= 0.1*\stalt and .4*\stlar] (onesdots){\tiny $\vdots$};
\node at (pospal.center) [etic, below left= .2*\stalt and .2*\stlar] (onespal){\tiny $\mathsf{one}$};
\draw[nopol, in=135, out=0](onespax)to(netg);
\draw[nopol, in=-135, out=0](onespal)to(netg);
% quantum register
\node at (pospal.center)[above left=0.325*\stalt and 1.6*\stlar, reg](qreg){$Q$};
\draw[dashed, in=-180, out=90](qreg)to(onespax);
\draw[dashed, in=-180, out=-90](qreg)to(onespal);
\end{tikzpicture}
\end{center}
where $\tdone$ is a type derivation for $\termone$. Please observe the
way we make the correspondence between (some of the) $\mathsf{one}$
links and the quantum register $\qrone$ explicit by a dashed line.
This way, any quantum closure $\qcone$ can be associated to a quantum
net $\QtoSMLL{\qcone}$, and the map $\QtoSMLL{\cdot}$ can be
generalized to a function mapping distributions of quantum closures to
distributions of quantum nets.
%%%%%%%%%%%%%%%%%%%%%%%%%%%%%%%%%%%%%%%
\subsection{Normalization}
%%%%%%%%%%%%%%%%%%%%%%%%%%%%%%%%%%%%%%%
\newcommand{\qredpn}{\hookrightarrow}
\newcommand{\qredpnmult}{\leadsto}
\newcommand{\dpnone}{\mathscr{R}}
\newcommand{\dpntwo}{\mathscr{Q}}
\newcommand{\func}[1]{[#1]}
\newcommand{\cutpn}[3]{\mathsf{cut}(#1,#2.#3)}
The additional features of \QSMLL\ require an adaptation of the cut elimination procedure.
%Cut-elimination only makes sense if defined on \emph{all} quantum   nets. 
Moreover, reduction becomes probabilistic. We are then forced to consider distributions
over  nets, which we note $\dpnone,\dpntwo,\ldots$, in the same way as the distributions
over quantum closures we considered in Section~\ref{sect:qcos}. The new reduction rules are the
ones in Figure~\ref{fig:redqpn}.
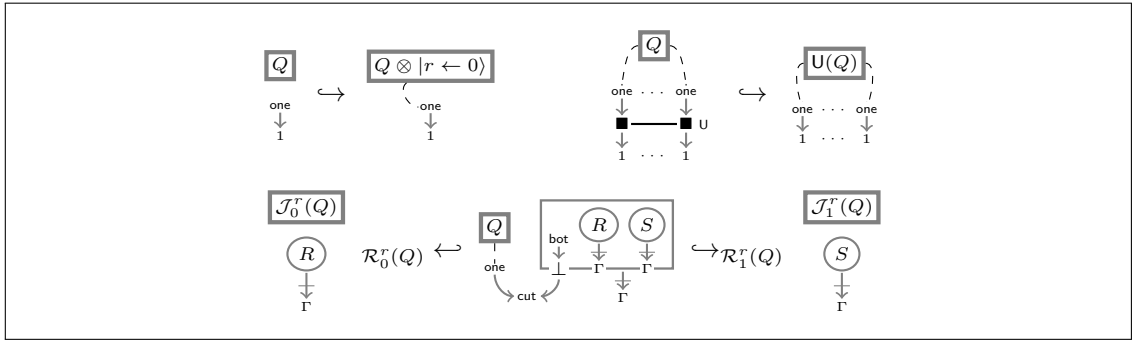
\begin{figure}
\begin{center}
\fbox{
\begin{minipage}{\figurewidth}
\begin{center}
\vspace{7pt}
\begin{center}
\begin{tikzpicture}[ocenter]
\node at (0,0) [etic](onePal){\tiny $1$};
\lone{onePal}{one};
\node at (onePal.center)[above=1.2*\stalt, reg](qreg){$Q$};
\end{tikzpicture}
\hspace{1pt}
$\hookrightarrow$
\hspace{1pt}
\begin{tikzpicture}[ocenter]
\node at (0,0) [etic](onePal){\tiny $1$};
\lone{onePal}{one};
\node at (onePal.center)[above=1.2*\stalt, reg](qreg){$Q\tens|r\leftarrow 0\rangle$};
\draw[dashed, in=180, out=215](qreg)to(one);
\end{tikzpicture}
\qquad\qquad
\begin{tikzpicture}[ocenter]
\node at (0,0) [etic](syplftPal){\tiny $1$};
\node at (syplftPal.center) [etic, above= \stalt] (syplftPax){\tiny $\mathsf{one}$};
\lsync{syplftPal}{syplftPax}{syplft};
\node at (syplftPal) [right= 3*\altax, etic](syprhtPal){\tiny $1$};
\node at (syprhtPal.center) [etic, above= \stalt] (syprhtPax){\tiny $\mathsf{one}$};
\lsync{syprhtPal}{syprhtPax}{syprht};
% dots
\node at (syplftPal.center) [right=1.5*\altax, etic](uldots){\tiny $\cdots$};
\node at (syplftPax.center) [right=1.5*\altax, etic](lldots){\tiny $\cdots$};
% links
\draw[thick] (syplft) to (syprht);
% quantum register
\node at (syplftPax.center)[above right=.8*\stalt and 1.5*\altax, reg](qreg){$Q$};
% links
\node at (syplftPax) [above=.4*\altax](syplftPaxup){};
\node at (syprhtPax) [above=.4*\altax](syprhtPaxup){};
\draw[dashed, in=-90, out=-180](qreg)to(syplftPaxup);
\draw[dashed, in=-90, out=0](qreg)to(syprhtPaxup);
% unitary operator
\node at (syprht) [right=.15*\altax](unop){\tiny $\mathsf{U}$};
\end{tikzpicture}
\hspace{1pt}
$\hookrightarrow$
\hspace{1pt}
\begin{tikzpicture}[ocenter]
\node at (0,0) [etic](onelPal){\tiny $1$};
\lone{onelPal}{onel};
\node at (0,0) [right=3*\altax, etic](onerPal){\tiny $1$};
\lone{onerPal}{oner};
% dots
\node at (onel.center) [right=1.5*\altax, etic](onedots){\tiny $\cdots$};
\node at (onelPal.center) [right=1.5*\altax, etic](onesdots){\tiny $\cdots$};
% quantum register
\node at (onel.center)[above right=.8*\stalt and 1.5*\altax, reg](qreg){$\mathsf{U}(Q)$};
% links
\node at (onel) [above=.4*\altax](onelup){};
\node at (oner) [above=.4*\altax](onerup){};
\draw[dashed, in=-90, out=-180](qreg)to(onelup);
\draw[dashed, in=-90, out=0](qreg)to(onerup);
\end{tikzpicture}

\vspace{10pt}

\begin{tikzpicture}[ocenter]
%bang
\node at (0,0) [ etic](pospal){};
%auxiliary ports
\node at (pospal.center)[below right=\hstalt and \stlar, etic](posauxkport){\tiny $\Gamma$};
%net g
\node at (pospal.center)[above right=0.4*\stalt and \stlar, net](netg){$R$};
\draw[nopolgen, in=90, out=-90](netg)to(posauxkport);
% quantum register
\node at (netg.center)[above=.8*\stalt, reg](qreg){$\proj{Q}{r}{0}$};
\end{tikzpicture}
$\;{}_{\pofobs{Q}{r}{0}}\hookleftarrow\;$
\begin{tikzpicture}[ocenter]
% bang
\node at (0,0) [etic](pospal){};
\node at (pospal.center)[right=1.6*\stlar, etic](posauxportu){};
\abox{pospal}{exbox}{\stboxlw}{\stboxrwaux+7pt}{\stboxh+5pt}
\node at (pospal.center) [etic](possym){$\bot$};
% auxiliary ports
\node at (pospal.center)[right=\stlar, etic](posauxportl){\tiny $\Gamma$};
\node at (pospal.center)[right=2.2*\stlar, etic](posauxportr){\tiny $\Gamma$};
\node at (pospal.center)[below right=0.5*\stalt and 1.6*\stlar, etic](posauxportd){\tiny $\Gamma$};
% net g
\node at (pospal.center)[above right=0.75*\stalt and \stlar, net](netl){$R$};
\node at (pospal.center)[above right=0.75*\stalt and 2.2*\stlar, net](netr){$S$};
\draw[nopolgen, in=90, out=-90](netl)to(posauxportl);
\draw[nopolgen, in=90, out=-90](netr)to(posauxportr);
\draw[nopolgen, in=90, out=-90](posauxportu)to(posauxportd);
% bot node
\node at (possym.center) [above=1.4*\altax, etic](botlink){\tiny $\mathsf{bot}$};
\draw[nopol, in=90, out=-90] (botlink) to (possym);
% principal port
%\node at (pospal.center)[below = \hstalt, etic](posprincport){\tiny $\bot$};
%\draw[nopol](possym)to(posprincport);
% rest of the net
\node at (pospal.center) [below left=1.3*\altax and \ilar, etic](cutlink){\tiny $\cut$};
\node at (cutlink.center) [above left=1.3*\altax and \ilar, etic](onelink){\tiny $\mathsf{one}$};
\draw[nopol, in=180, out=-90] (onelink) to (cutlink);
\draw[nopolrev, in=-90, out=0] (cutlink) to (possym);
% quantum register
\node at (onelink)[above=.7*\stalt, reg](qreg){$Q$};
% links
\draw[dashed, in=90, out=-90](qreg)to(onelink);
\end{tikzpicture}
$\;\hookrightarrow_{\pofobs{Q}{r}{1}}\;$
\begin{tikzpicture}[ocenter]
%bang
\node at (0,0) [ etic](pospal){};
%auxiliary ports
\node at (pospal.center)[below right=\hstalt and \stlar, etic](posauxkport){\tiny $\Gamma$};
%net g
\node at (pospal.center)[above right=0.4*\stalt and \stlar, net](netg){$S$};
\draw[nopolgen, in=90, out=-90](netg)to(posauxkport);
% quantum register
\node at (netg.center)[above=.8*\stalt, reg](qreg){$\proj{Q}{r}{1}$};
\end{tikzpicture}
\end{center}
\vspace{7pt}
\end{center}
\end{minipage}}
\condnocaptionrule\caption{Reduction --- New Rules}\label{fig:redqpn}
\end{center}
\end{figure}
Analogously to what happens for $\lambda$-terms, then, two reduction relations
can be defined on nets, namely a one-step reduction relation $\qredpn$ and a multi-step
reduction relation $\qredpnmult$, both between  nets and distributions over  nets.
The main result of this section is the following:
%\condmedskip
\begin{prop}[Simulation]\label{prop:simtermspn}
  If $\qcone\qredmult\distrone$, then 
  $\QtoSMLL{\qcone}\qredpnmult\QtoSMLL{\distrone}$.
\end{prop}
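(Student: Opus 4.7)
The plan is to prove the single-step version first, namely that $\qcone \qredpn \distrone$ implies $\QtoSMLL{\qcone} \qredpnmult \QtoSMLL{\distrone}$, and then lift to multi-step reduction by an easy induction on the derivation of $\qcone \qredmult \distrone$ (using the inductive definition of $\qredmult$ and the fact that $\qredpnmult$ is closed under convex combinations of distributions of nets, since probabilities on the right are the same on both sides for $\qred$ vs.\ $\qredpn$).

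For the one-step simulation, I would first establish two auxiliary structural facts about the translation $\QtoSMLL{\cdot}$. First, a \emph{compositionality lemma for evaluation contexts}: if $\tdone \pof \tyg{\contone}{\apec{\econe}{\termone}}{\typeone}$ then, splitting this derivation using Lemma \ref{lem:evalCtxtInversion} into a derivation for $\termone$ of type $\typetwo$ and one for $\econe$ of type $\typetwo \ecimp \typeone$, the net $\QtoSMLL{\tdone}$ has $\QtoSMLL{\termone}$ as an identifiable sub-net whose main conclusion is connected to the translation of the context. Second, a \emph{substitution lemma} for the translation: the net obtained by cutting the translation of $\tdone \pof \tyg{\contone,\varone:\typtwo}{\termone}{\typone}$ against the translation of $\tdtwo \pof \tyg{\conttwo}{\termtwo}{\typtwo}$ on the variable $\varone$ $\qredpnmult$-reduces (using only \MLL\ reduction steps) to the translation of $\subst{\termone}{\varone}{\termtwo}$. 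This is the standard ``simulation of $\beta$-reduction by cut elimination'' for \MLL, applied here on top of the new structural layer.

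Once these are in hand, the bulk of the proof is a case analysis on the reduction rule in Figure \ref{fig:opsem} used to derive $\qcone \qredpn \distrone$. The $\beta$-case and the $\mathsf{let}$-case follow directly from the substitution lemma plus compositionality in the evaluation context, which contributes a cut between $\QtoSMLL{(\lambda x.t)u}$ (resp.\ the let-redex) and the context, that is eliminated by standard \MLL\ cut elimination. The conditional cases use the box structure produced by the translation of $\mathsf{if}$ (see Figure~\ref{fig:pntrans}): the guard reduces to the translation of $\ttrue$ or $\tfalse$, which is an $\mathsf{one}$ link that cuts against the $\botlk$-lock of the box, triggering the $\onelk/\botlk$ rule and opening the box, selecting the appropriate branch (here the multi-box machinery of Section~\ref{multiboxes} is exactly what is needed). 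The $\new$ case corresponds directly to the first rule of Figure~\ref{fig:redqpn}, extending the quantum register by a fresh $\mathsf{one}$ link mapped to $\ket{0}$. The unitary case corresponds to the second rule of Figure~\ref{fig:redqpn}: the translation of $\uopone_n \langle \qvarone_1,\ldots,\qvarone_n\rangle$ yields a sync link labelled $\uopone$ whose premisses are $\mathsf{one}$ links linked to the quantum register, which fires in exactly one $\qredpn$-step updating the register by the unitary.

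The main obstacle will be the measurement case, because it is the only genuinely probabilistic step and requires matching a two-element distribution of quantum closures with a two-element distribution of quantum nets, with the right probabilities. Here the translation of $\meas$ from Figure~\ref{fig:pntrans} is crucial: the box contains two nets, each producing a boolean ($\ttrue$ / $\tfalse$) gated by a sync link carrying a unitary ($\mathsf{I}$ and $\mathsf{X}$) acting on the measured qubit. I would show that once the measured quantum variable $\qvarone$ has been substituted in (so that its $\mathsf{one}$ link is cut against the box via the translation of application), the resulting redex matches the third rule of Figure~\ref{fig:redqpn}, which collapses the box to one of the two branches with probability $\pofobs{Q}{\qvarone}{b}$ and projects the register to $\proj{Q}{\qvarone}{b}$, reproducing exactly the distribution $\distrone$ on the $\qred$-side. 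Checking that the probabilities and post-measurement states computed by the $\qredpn$-rule coincide with those specified by $\qred$ for $\meas$ is the delicate arithmetic verification; everything else is routine rewriting bookkeeping.
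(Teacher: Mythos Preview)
Your proposal is correct and follows essentially the same approach as the paper: a one-step simulation lemma proved by case analysis on the operational rules (relying on a substitution lemma for the translation), plus an auxiliary closure property of $\qredpnmult$ under weighted sums of distributions, and then an induction on the derivation of $\qcone\qredmult\distrone$. The paper is terser on the individual cases (it calls them ``straightforward''), while you spell out the box-opening mechanism for conditionals and measurement more explicitly; your added compositionality lemma for evaluation contexts is implicit in the paper's treatment but is indeed what makes the case analysis go through.
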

%\condmedskip
%%%%%%%%%%%%%%%%%%%%%%%%%%
\subsubsection*{The Proof}
%%%%%%%%%%%%%%%%%%%%%%%%%%
\newcommand{\pnthree}{S}
\newcommand{\pnfour}{T}
\newcommand{\dpnthree}{\mathscr{S}} %% R and Q are misleading
\newcommand{\probthree}{r}
First of all, one can prove the following substitution lemma, where $\cutpn{\tdone}{\varone}{\tdtwo}$ 
is the  net obtained by $\QtoSMLL{\tdone}$ and $\QtoSMLL{\tdtwo}$, cutting the link for $\varone$
in the former to the conclusion for the latter.
\begin{lemma}[Subsitution Lemma]\label{lemma:substitution}
  Suppose $\tdone\pof\tyg{\varone:\typone,\qvarone_1,\ldots,\qvarone_n}{\termone}{\typtwo}$
  and $\tdtwo\pof\tyg{\qvartwo_1,\ldots,\qvartwo}{\termtwo}{\typone}$. Then
  $\cutpn{\tdone}{\varone}{\tdtwo}
  \qredpnmult
  \{\QtoSMLL{\tdone\{\tdtwo/\varone\}}^1\}$.
\end{lemma}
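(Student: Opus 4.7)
The plan is to reduce the claim to a single-step $\ax/\cut$ reduction together with a compositionality observation about the translation.

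First I would pinpoint the shape of the cut produced by the $\cutpn{\cdot}{\varone}{\cdot}$ operation. Since the $\lambda$-calculus is linear, the variable $\varone$ appears exactly once in $\termone$, and hence corresponds to exactly one axiom link in $\QtoSMLL{\tdone}$. This axiom has two conclusions: the pending port of type $\QtoSMLL{\typone}^\bot$ labelled by $\varone$, and an edge of type $\QtoSMLL{\typone}$ connected to the consumption site of $\varone$ inside $\QtoSMLL{\tdone}$. The construction $\cutpn{\tdone}{\varone}{\tdtwo}$ places a $\cut$ link between the former port and the main conclusion of $\QtoSMLL{\tdtwo}$, which is precisely an $\ax/\cut$ redex.

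Second, I would apply the $\ax/\cut$ rule (Figure~\ref{sync_red}(a)), which is formulated for arbitrary formulas rather than just atoms. A single reduction step dissolves the axiom and the cut, effectively grafting $\QtoSMLL{\tdtwo}$ onto the consumption edge of $\varone$ in $\QtoSMLL{\tdone}$. Since this step triggers neither a measurement, nor a unitary application, nor a box opening, it is fully deterministic in the sense of $\qredpn$ and yields a Dirac distribution $\{T^1\}$ for the resulting net $T$.

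Finally I would show, by a routine induction on the structure of $\tdone$, that this grafted net $T$ coincides with $\QtoSMLL{\tdone\{\tdtwo/\varone\}}$; in other words, the translation $\QtoSMLL{-}$ commutes with substitution at the level of derivations. Each translation clause (abstraction, application, pair, let-binding, conditional) preserves the consumption site of $\varone$ as a pending port inside its subnets, while the primitive cases $\new$, $\ttrue$, $\tfalse$, $\meas$, $\uopone$ are vacuous since they contain no free $\lambda$-variable. The main delicate point of the argument is this compositionality check: it must be unfolded against each clause of $\QtoSMLL{-}$, but presents no substantive obstacle. Once it is established, reflexive closure of $\qredpnmult$ over the single $\qredpn$ step gives $\cutpn{\tdone}{\varone}{\tdtwo}\qredpnmult\{\QtoSMLL{\tdone\{\tdtwo/\varone\}}^1\}$ as required.
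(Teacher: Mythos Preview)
Your proposal is correct and follows essentially the same route as the paper, whose entire proof is the one-line ``by induction on the structure of $\tdone$''. Your decomposition into a single $\ax/\cut$ step followed by an inductive compositionality check is a perfectly legitimate and slightly more explicit way to organise that same induction; note only that the structural claim you rely on---that the pending $\QtoSMLL{\typone}^\bot$ port is directly a conclusion of an axiom link at depth $0$---is itself justified by the very induction on $\tdone$ you perform at the end.
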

\begin{proof}
  By induction on the structure of $\tdone$.
\end{proof}
We then need another auxiliary result:
\begin{lemma}\label{lem:multRule}
  If $\pnone \qredpnmult \{\pntwo_1^{\probone_1}, \dots, \pntwo_n^{\probone_n}\}$
  and $\pntwo_i \qredpnmult \dpnone_i$,
  then $\pnone \qredpnmult \sum_{i=1}^{n} p_i \dpnone_i$.
\end{lemma}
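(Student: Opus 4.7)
The plan is to proceed by induction on the derivation of $\pnone \qredpnmult \{\pntwo_1^{\probone_1}, \dots, \pntwo_n^{\probone_n}\}$, using the two rules for $\qredpnmult$ (which are the exact analogues of those for $\qredmult$ given in Section~3.2): a reflexivity rule producing the Dirac distribution $\{\pnone^1\}$, and a one-step-then-multi-step composition rule.

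In the base case the derivation is the reflexivity rule, so $n=1$, $\pntwo_1=\pnone$ and $\probone_1=1$. Then $\sum_{i=1}^{n}\probone_i\dpnone_i=\dpnone_1$, and the conclusion $\pnone\qredpnmult\dpnone_1$ is just the hypothesis $\pntwo_1\qredpnmult\dpnone_1$ read with $\pntwo_1=\pnone$.

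In the inductive case, the derivation ends with an application of the composition rule: there is a single-step reduction $\pnone\qredpn\{\pnthree_j^{q_j}\}_{j=1}^{m}$ together with derivations $\pnthree_j\qredpnmult\dpntwo_j$, such that $\sum_{j=1}^{m} q_j\dpntwo_j = \{\pntwo_i^{\probone_i}\}_{i=1}^{n}$. Since the support of each $\dpntwo_j$ is contained in $\{\pntwo_1,\dots,\pntwo_n\}$ and we have by hypothesis $\pntwo_i\qredpnmult\dpnone_i$ for every $i$, we may apply the induction hypothesis to each (shorter) derivation $\pnthree_j\qredpnmult\dpntwo_j$ together with these $\pntwo_i\qredpnmult\dpnone_i$ to get
\[
\pnthree_j \;\qredpnmult\; \sum_{i=1}^{n}\dpntwo_j(\pntwo_i)\,\dpnone_i.
\]
Reapplying the composition rule to $\pnone\qredpn\{\pnthree_j^{q_j}\}_j$ and these derivations yields
\[
\pnone \;\qredpnmult\; \sum_{j=1}^{m} q_j \sum_{i=1}^{n}\dpntwo_j(\pntwo_i)\,\dpnone_i \;=\; \sum_{i=1}^{n}\Bigl(\sum_{j=1}^{m} q_j\dpntwo_j(\pntwo_i)\Bigr)\dpnone_i \;=\; \sum_{i=1}^{n}\probone_i\,\dpnone_i,
\]
the last equality being the componentwise reading of $\sum_j q_j\dpntwo_j=\{\pntwo_i^{\probone_i}\}_i$. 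The only mildly delicate point is purely bookkeeping, namely the switch between viewing distributions as indexed families with multiplicities and as formal sums, so that the rearrangement of summations in the last display is justified; no substantive use of the net structure of $\pnone,\pntwo_i$ or of correctness is needed.
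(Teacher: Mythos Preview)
Your proof is correct and follows the same overall approach as the paper: induction on the derivation of $\pnone \qredpnmult \{\pntwo_1^{\probone_1}, \dots, \pntwo_n^{\probone_n}\}$, with the base case immediate and the inductive case handled by pushing the induction hypothesis through each branch of the one-step reduction and then reapplying the composition rule. The only difference is presentational: the paper exploits that $\qredpn$ in \QSMLL\ yields at most two outcomes and explicitly splits the inductive step into the sub-cases $m=1$ and $m=2$, whereas you treat arbitrary $m$ uniformly via the coefficients $\dpntwo_j(\pntwo_i)$; your version is slightly cleaner and does not depend on the branching factor of $\qredpn$.
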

\begin{proof}
  By induction on the derivation of
  $\pnone \qredpnmult \{\pntwo_1^{\probone_1}, \dots, \pntwo_n^{\probone_n}\}$.
  \begin{varitemize}
  \item If the derivation is
    $$
    \AxiomC{}
    \UnaryInfC{$\pnone \qredpnmult \{\pnone^{1}\}$}
    \DisplayProof
    $$
    and $\pnone \qredpnmult \dpnone$,
    then clearly $\pnone \qredpnmult 1 \cdot \dpnone = \dpnone$.
  \item If the derivation is
    $$
    \AxiomC{$\pnone \qredpn \{\pnthree_1^{\probtwo_1}, \dots, \pnthree_m^{\probtwo_m}\}$}
        \AxiomC{$\pnthree_i \qredpnmult \dpnthree_i$}
    \BinaryInfC{$\pnone \qredpnmult \sum_{i=1}^{m} \probtwo_i \dpnthree_i
      = \{\pntwo_1^{\probone_1}, \dots, \pntwo_n^{\probone_n}\}$}
    \DisplayProof
    $$
    then there are two possibilities.
    Note that $\pntwo_i \qredpnmult \dpnone_i$ holds by assumption.
    \begin{varitemize}
    \item If
      $$
      \AxiomC{$\pnone \qredpn \{\pnthree^{1}\}$}
      \AxiomC{$\pnthree \qredpnmult \dpnthree
        = \{\pntwo_1^{\probone_1}, \dots, \pntwo_n^{\probone_n}\}$}
      \BinaryInfC{$\pnone \qredpnmult 1 \cdot \dpnthree
        = \{\pntwo_1^{\probone_1}, \dots, \pntwo_n^{\probone_n}\}$}
      \DisplayProof
      $$
      then by induction hypothesis $\pnthree \qredpnmult \sum_{i=1}^{n} \probone_i \dpnone_i$.
      Hence
      $$
      \AxiomC{$\pnone \qredpn \{\pnthree^{1}\}$}
      \AxiomC{$\pnthree \qredpnmult \sum_{i=1}^{n} \probone_i \dpnone_i$}
      \BinaryInfC{$\pnone \qredpnmult \sum_{i=1}^{n} \probone_i \dpnone_i$}
      \DisplayProof
      $$

    \item If
      $$
      \small
      \AxiomC{$\pnone \qredpn \{\pnthree_1^{\probtwo}, \pnthree_2^{1 - \probtwo}\}$}
          \AxiomC{$\pnthree_1 \qredpnmult \dpnthree_1
            = \{\pntwo_1^{\probthree_1}, \dots, \pntwo_k^{\probthree_k}\}$}
          \noLine
          \UnaryInfC{$\pnthree_2 \qredpnmult \dpnthree_2
            = \{\pntwo_{k+1}^{\probthree_{k+1}}, \dots, \pntwo_n^{\probthree_n}\}$}
      \BinaryInfC{$\pnone \qredpnmult
        \probtwo \cdot \dpnthree_1 + (1 - \probtwo) \cdot \dpnthree_2
        = \{\pntwo_1^{\probone_1}, \dots, \pntwo_n^{\probone_n}\}$}
      \DisplayProof
      $$
      where $1 \le k < n$, then $\probone_i$ satisfies 
      $$
      \small
      \probone_i =
      \begin{cases}
        \probtwo \probthree_i \enspace \text{if $i \le k$;}\\
        (1 - \probtwo) \probthree_i \enspace \text{if $i > k$}.
      \end{cases}
      $$
      By induction hypothesis $\pnthree_1 \qredpnmult \sum_{i=1}^{k} \probthree_i \dpnone_i$
      and $\pnthree_2 \qredpnmult \sum_{i=k+1}^{n} \probthree_i \dpnone_i$, and
      $$
      \AxiomC{$\pnone \qredpn \{\pnthree_1^{\probtwo}, \pnthree_2^{1 - \probtwo}\}$}
        \AxiomC{$\pnthree_1 \qredpnmult \sum_{i=1}^{k} \probthree_i \dpnone_i$}
        \noLine
        \UnaryInfC{$\pnthree_2 \qredpnmult \sum_{i=k+1}^{n} \probthree_i \dpnone_i$}
      \BinaryInfC{$\pnone \qredpnmult
        \probtwo \cdot \sum_{i=1}^{k} \probthree_i \dpnone_i
        + (1 - \probtwo) \cdot \sum_{i=k+1}^{n} \probthree_i \dpnone_i$}
      \DisplayProof
      $$      
      This precisely says $\pnone \qredpnmult \sum_{i=1}^{n} \probone_i \dpnone_i$:
      {\small
        \begin{align*}
          &\probtwo \cdot \sum_{i=1}^{k} \probthree_i \dpnone_i
          + (1 - \probtwo) \cdot \sum_{i=k+1}^{n} \probthree_i \dpnone_i\\
          =& \sum_{i=1}^{k} \probtwo \probthree_i \dpnone_i
          + \sum_{i=k+1}^{n} (1 - \probtwo) \probthree_i \dpnone_i\\
          =& \sum_{i=1}^{k} \probone_i \dpnone_i
          + \sum_{i=k+1}^{n} \probone_i \dpnone_i
          = \sum_{i=1}^{n} \probone_i \dpnone_i \enspace.
        \end{align*}
      }%
    \end{varitemize}
  \end{varitemize}
This concludes the proof.
\end{proof}
The next lemma uses Lemma~\ref{lemma:substitution} in an essential way:
\begin{lemma}\label{lem:oneStepSimulation}
  If $C \qred \{D_1^{p_1}, \dots, D_n^{p_n}\}$
  then $\QtoSMLL{C} \qredpnmult \{\QtoSMLL{D_1}^{p_1}, \dots, \QtoSMLL{D_n}^{p_n}\}$.
  \hfill $\Box$
\end{lemma}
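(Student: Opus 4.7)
The plan is to proceed by case analysis on which reduction rule of Figure~\ref{fig:opsem} is used to derive $C \qred \{D_1^{p_1},\dots,D_n^{p_n}\}$. A preliminary observation is that every operational rule fires in some evaluation context $E$, and the translation is compositional: $\QtoSMLL{E[t]}$ embeds $\QtoSMLL{t}$ connected to the translation of $E$ via a sequence of cuts and multiplicative links. Thus it suffices to verify the lemma for the ``at the hole'' step and then appeal to Lemma~\ref{lem:multRule} to lift the conclusion through the context.

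For the deterministic cases (everything except $\meas$), the target distribution has a single element with probability $1$, so we must show $\QtoSMLL{C}\qredpnmult\{\QtoSMLL{D_1}^{1}\}$. The $\beta$-step $E[(\lambda x.t)u]\mapsto E[t\{u/x\}]$ translates to a net containing a cut between a $\parr$-link (from $\lambda x.t$) and a $\tens$-link (from the application); applying the $\tens/\parr$ reduction of Figure~\ref{sync_red}(a) produces a cut on $\QtoSMLL{A}$ between the translation of $t$ at $x$ and the translation of $u$, at which point the Substitution Lemma~\ref{lemma:substitution} concludes with $\cutpn{\langle\tdone\rangle}{x}{\langle\tdtwo\rangle}\qredpnmult\{\QtoSMLL{\tdone\{\tdtwo/x\}}^{1}\}$. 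The let-rule is analogous using the $\tens/\parr$ reduction on pairs. The conditional rule for $\ttrue$ and $\tfalse$ maps, by inspection of the translation in Figure~\ref{fig:pntrans}, to a cut between a $\onelk$-link (the translation of the boolean value) and the guard of the multi-box encoding the branches; applying the $\onelk/\botlk$ box-opening reduction of Figure~\ref{sync_red}(c), together with the choice of subnet afforded by multi-boxes (Section~\ref{multiboxes}), selects exactly the correct branch. The $\new$ case corresponds to the first rule of Figure~\ref{fig:redqpn}, adjoining a fresh qubit $\ket{r\leftarrow 0}$ to the quantum register alongside a new $\onelk$ link. The unitary rule corresponds to the second rule of Figure~\ref{fig:redqpn}: after some multiplicative cut-elimination, the synchronization link labelled $\uopone$ sits immediately above the $\onelk$ links representing $q_1,\dots,q_n$, and firing the sync transforms the register by $\uopone^{q_1,\dots,q_n}$, exactly matching the operational step.

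The probabilistic case is $[Q,E[\meas\,q]]$. Here the translation of $\meas$ is precisely the multi-box of Figure~\ref{fig:pntrans} containing the two subnets encoding $\ttrue$ and $\tfalse$, each produced by a $\onelk$ link synchronized (via a singleton sync carrying the relevant single-qubit operator, namely $\mathsf{I}$ and $\mathsf{X}$) to the $\onelk$ link representing $q$. After multiplicative cut-elimination places this structure in interactive position with $q$, the probabilistic box-opening reduction of Figure~\ref{fig:redqpn} fires, producing the distribution $\{S^{\pofobs{Q}{q}{0}},T^{\pofobs{Q}{q}{1}}\}$ with registers $\proj{Q}{q}{0}$ and $\proj{Q}{q}{1}$ respectively. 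A direct check then shows that the resulting two nets (one with $\ttrue$, one with $\tfalse$ plugged into the context $E$) coincide with $\QtoSMLL{[\proj{Q}{q}{0},E[\ttrue]]}$ and $\QtoSMLL{[\proj{Q}{q}{1},E[\tfalse]]}$, with the correct probabilities.

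The main obstacle I expect is bookkeeping rather than conceptual: in every case one must verify that the relatively small, local syntactic step on quantum closures corresponds to a \emph{sequence} of net-reduction steps that first propagate the relevant pieces of the translation of $E$ through the multiplicatives until the interactive configuration of Figure~\ref{fig:redqpn} (or Figure~\ref{sync_red}) is reached, and only then fire the key rule. Confluence and strong normalization of $\to$ on \SMLL{} nets (Propositions~\ref{termination} and~\ref{confluence}), together with Lemma~\ref{lem:multRule} for composing deterministic and probabilistic multi-steps, are what make this bookkeeping manageable; the $\beta$ case, which is the only one genuinely relying on Lemma~\ref{lemma:substitution}, is where all the substitution-related delicacies concentrate.
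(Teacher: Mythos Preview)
Your proposal is correct and follows essentially the same approach as the paper: a case analysis on the operational rule used, with Lemma~\ref{lemma:substitution} doing the real work in the $\beta$ and $\mathsf{let}$ cases. The paper's own proof is a two-line sketch that only names those two cases explicitly; your write-up simply expands the remaining cases (conditionals, $\new$, unitaries, $\meas$) and makes the handling of the evaluation context $E$ explicit, which the paper leaves implicit.
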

\begin{proof}
  By case analysis.
  It is straightforward; in the cases of
  $[\qrone,\apecd{(\abstr{\varone}{\termone})\termtwo}]\qred\{[\qrone,\apecd{\subst{\termone}{\varone}{\termtwo}}]^1\}$ and
$[\qrone,\apecd{\llet{\varone}{\vartwo}{\pair{\termone}{\termtwo}}{\termthree}}]
  \qred\{[\qrone,\apecd{\subst{\termthree}{\varone,\vartwo}{\termone,\termtwo}}]^1\}$,
  Lemma~\ref{lemma:substitution} guarantees the statement: the conclusion of $\tdone\{\tdtwo/\varone$
  is necessarily $\termone\{\termtwo/\varone\}$ and
  the derivation of $\termone\{\termtwo/\varone\}$ is unique.
\end{proof}
We are now ready to prove the Simulation result. This proceeds
by a simple case analysis on the derivation of $\qcone\qredmult\distrone$.
\begin{varitemize}
\item 
  The case of $\qcone\qredmult \{\qcone^1\}$ is clear.
\item 
  In the other case,
  $$
  \AxiomC{$\qcone\qred\{\qctwo_1^{\probone_1},\ldots,\qctwo_n^{\probone_n}\}$}
  \AxiomC{$\qctwo_i\qredmult\distrone_i$}
  \BinaryInfC{$\qcone\qredmult\sum_{i=1}^n\probone_i\cdot\distrone_i$}
  \DisplayProof
  $$
  By Lemma~\ref{lem:oneStepSimulation}
  $\QtoSMLL{C} \qredpnmult \{\QtoSMLL{D_1}^{p_1}, \dots, \QtoSMLL{D_n}^{p_n}\}$.
  By induction hypothesis $\QtoSMLL{D_i} \qredpnmult \QtoSMLL{\mathscr{D}_i}$.
  Hence by Lemma~\ref{lem:multRule} 
    $\QtoSMLL{C} \qredpnmult \sum_i p_i \QtoSMLL{\mathscr{D}_i}
  = \QtoSMLL{\sum_i\probone_i\cdot\distrone_i}$.
\end{varitemize}
%%%%%%%%%%%%%%%%%%%%%%%%%%%%%%%%%%%%%%
\section{Computing  with the \QSIAM} \label{qsiam}
%%%%%%%%%%%%%%%%%%%%%%%%%%%%%%%%%%%%%%

The \SIAM\ can be generalized to the \QSIAM, an abstract machine for
quantum nets. Most definitions about the \QSIAM\ 
are inherited from those about the \SIAM. In particular, positions
and sets of positions such as $\pos{\pnone}$, $\botcon{\pnone}$, etc. are 
defined exactly in the same way.
%Before defining it, we need to give some
%auxiliary definitions, where we assume to work with a  net $\pnone$:
%\begin{varitemize}
%\item
%  The set of positions in $\pnone$, namely the set of all occurrences of the atoms $1$ and $\bot$ in
%  $\pnone$, is indicated with $\pos{\pnone}$;
%\item
%  $\botcon{\pnone}$ is the subset of $\pos{\pnone}$ of all those positions corresponding to
%  occurrences of $\bot$ in the conclusion of $\pnone$. Similarly for occurrences
%  of $1$ in the conclusion of $\pnone$, and in this case the set is $\onecon{\pnone}$ 
%\item
%  $\oneany{\pnone}$ is another subset of $\pos{\pnone}$ and includes all the \emph{constant positions}
%  in $\pnone$, namely those occurrences of the formula $1$ next to a $1$ node.
%\item
%  Finally, $\botbox{\pnone}$ is again a subset of $\pos{\pnone}$, this time including those
%  occurrences of $\bot$ which act as ``guards'' of boxes.
%\end{varitemize}

\noindent\emph{Statics.}
The states of $M_{\gpnone}$, the interactive machine interpreting the
quantum net $\gpnone=(\pnone,\qrone)$, are the pairs
$\stone=(\tksone,\qrtwo)$ where
\begin{varitemize}
\item 
  $\tksone$ is a function from $\pssetone \cup \botcon{\pnone}$ to
  $\pos{\pnone}$, where $\pssetone$ is a subset of
  $\oneany{\pnone}$. As in the \SIAM, the role of $\tksone$ is to
  capture \emph{where} the $|\pssetone|$ tokens started from (namely
  the positions in $\pssetone$) and where they are (namely the
  positions in $\tksone(\pssetone)$).
\item
  $\qrtwo$ is a quantum register in $\hilb{\sbsts{\pssetone}}$.
\end{varitemize}
The set of states of $M_{\gpnone}$ is denoted by
$\states{\gpnone}$. The notions of an initial state, a final state,
and an active state are very close to the ones given for the \SIAM.
The set of all final states of $M_{\gpnone}$ is denoted by
$\fstates{\gpnone}$.
% We are interested in isolating a class of states enjoying nice properties, that will later be
% proved to be invariants. More specifically, not any position should appear in the domain and
% codomain of $\tksone$ when $(\tksone,\qrone)$ is a state. Suppose that 
% $\posone\in\pos{\pnone}$, and suppose that $\postwo_1,\ldots,\postwo_n$ are
% the inner positions for $\bot$ in boxes encosing $\posone$, i.e.,
% \begin{center}
% \TODO{Proof-net explaining the situation.}
% \end{center}
% Then $\posone$ is said to be \emph{active} in $(\tksone,\qrone)$  if there
% is $\{\posthree_1,\ldots,\posthree_n\}\subseteq\pssetone$ such
% that $\tksone(\posthree_i)=\postwo_i$ and 
% and $\qrone=\qrtwo\tens
% \ket{\posthree_1\leftarrow\bitone_1}\tens\cdots\tens\ket{\posthree_n\leftarrow\bitone_n}$.
% One would like all positions in the domain and codomain of $\tksone$ to be \emph{active},
% as this corresponds to the ensuring that all tokens make a journey in the appropriate slice.
% A state $(\tksone,\qrone)\in\states{\gpnone}$ is said to be \emph{initial}  if $\tksone$
% is the identity on $\botcon{\pnone}$, and \emph{final}  if the range of $\tksone$ is
% exactly a subset of $\botbox{\pnone}\cup\onecon{\pnone}$. The sets of initial and final
% states are denoted as $\istates{\pnone}$ and $\fstates{\pnone}$, respectively.

\noindent\emph{Dynamics.} It is now time to describe how states
evolve. This takes again the form of a relation $\trrel$ between
$\states{\gpnone}$ and \emph{finite distributions} over the same
set. Rules follow quite closely the rules for the \SIAM.
%Each transition rule shows how a state can evolve  by showing how the codomain
%of $\tksone$ changes, how the domain of $\tksone$ is  extended, and how the
%quantum register $\qrone$ is altered. 
%The relation $\trrel$ captures state evolution in \emph{one} step. 
Extending $\trrel$ into a multi-step relation $\trrelmult$ can be done
in the usual, standard, way:
$$
{\footnotesize
\AxiomC{}
\UnaryInfC{$\stone\trrelmult\emptyset$}
\DisplayProof
\quad
\AxiomC{$\stone\in\fstates{\gpnone}$}
\UnaryInfC{$\stone\trrelmult\{\stone^1\}$}
\DisplayProof
\quad
\AxiomC{$\stone\trrel\{\sttwo_1^{\probone_1},\ldots,\sttwo_n^{\probone_n}\}$}
\AxiomC{$\sttwo_i\trrelmult\distrone_i$}
%% \BinaryInfC{$\stone\trrelmult\sum_{i=1}^n\frac{1}{\probone_i}\distrone_i$}
\BinaryInfC{$\stone\trrelmult\sum_{i=1}^n\probone_i\cdot\distrone_i$}
\DisplayProof}
$$
The semantics of any state $\stone$ is simply
$$
\sem{\stone}=\sup_{\stone\trrelmult\distrone}\distrone.
$$
Analogously to what has been done for \SMLLb, we can define the object
$\func{\gpnone}$ computed by the \QSIAM\ for $\gpnone$. 
To do that, let us observe the following:
\begin{varitemize}
\item
  The initial state is not anymore unique: each initial state corresponds to an
  element of $\hilb{\botcon{\gpnone}}$.
\item
  To each initial state $\stone$ we can  associate %put in correspondence
  a distribution of final states, each consisting of a partial 
  injection from $\botcon{\gpnone}$ to $\onecon{\gpnone}$, and
  an element of $\hilb{\onecon{\gpnone}}$. If the execution get stuck,
  the result is taken to be $\Box$.
\end{varitemize}
This correspondence is what is taken as the interpretation
$\func{\gpnone}$ of $\gpnone$. In other words: 
$$
\func{\gpnone}:\hilb{\botcon{\gpnone}}\rightarrow
    \quad\distrs{(\botcon{\gpnone}\rightharpoonup\onecon{\gpnone})\times\hilb{\onecon{\gpnone}}\cup\{\Box\}}.
$$
We can then prove the following:
\begin{theorem}[Soundness]\label{theo:soundnessQSIAM}
  If $\gpnone\qredpnmult\dpnone$, then $\func{\gpnone}=\func{\dpnone}$.
\end{theorem}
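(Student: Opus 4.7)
(Proposal.)
The plan is to mirror the proof of the classical Soundness theorem (Theorem~\ref{soundness}) for \SIAM, extending it in two orthogonal directions: (i) reduction is now probabilistic, so both sides of the equation $\func{\gpnone}=\func{\dpnone}$ are distributions over the same space; and (ii) states of the \QSIAM\ carry a quantum register, which must be tracked along each transition. First I would reduce to the one-step case: by a straightforward induction on the derivation of $\gpnone\qredpnmult\dpnone$, using a distributional composition lemma in the style of Lemma~\ref{lem:multRule} (but now applied to the \emph{interpretations} $\func{\cdot}$ rather than to $\qredpnmult$ itself), it suffices to show $\func{\gpnone}=\func{\dpnone}$ whenever $\gpnone\qredpn\dpnone$ is a single reduction step, interpreting the deterministic steps as Dirac-mass distributions.

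Next, I would fix an arbitrary initial state $\stone\in\hilb{\botcon{\gpnone}}$ of $M_{\gpnone}$ and argue on a case basis over which rule of Figures~\ref{sync_red} and~\ref{fig:redqpn} is being fired. If during the \QSIAM\ run from $\stone$ the multitoken never reaches the portion of the net involved in the step, then the runs on the l.h.s.\ and r.h.s.\ are identical, and the two distributions agree pointwise. Otherwise, the purely multiplicative and purely sync-commutation cases are settled exactly as in the proof of Theorem~\ref{soundness}, since neither the register $\qrtwo$ nor the probability weights change. The new one/register creation rule and the unitary sync rule are handled by inspection: in both cases the deterministic \QSIAM\ transition (respectively: extending $\tksone$ with a fresh position whose qubit is initialised to $0$; applying $\uopone$ to the tensor factor of $\qrtwo$ corresponding to the synchronised $\mathsf{one}$ positions) exactly replays what the reduction has pre-computed on $\gpnone$.

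The main obstacle, and the only genuinely new phenomenon, is the box-opening/measurement rule, which splits a single net into a two-point distribution with weights $\pofobs{Q}{r}{0}$ and $\pofobs{Q}{r}{1}$ and whose contents carry the projected registers $\proj{Q}{r}{0}$ and $\proj{Q}{r}{1}$. Here I must show that the distribution $\sem{\stone}$ produced by the \QSIAM\ running on $\gpnone$ coincides with $\pofobs{Q}{r}{0}\cdot\sem{\stone_0}+\pofobs{Q}{r}{1}\cdot\sem{\stone_1}$, where $\stone_i$ is the corresponding initial state on the $i$-th component of $\dpnone$. The argument has three ingredients: (a) by deadlock freedom (Theorem~\ref{final_th}, extended to \QSMLL\ along the same lines as Lemma~\ref{df_smllb}), the run on $\gpnone$ eventually produces a token on the lock of the box, at which point the \QSIAM\ performs precisely the same projective measurement on the qubit $r$, yielding the same weights; (b) after the measurement, exactly one of the two inner nets becomes active, and by the slice argument at the end of Section on multi-boxes the subsequent run only visits positions of that single slice, which matches the run on the corresponding $R_i$; (c) the projected registers agree by construction of the measurement transition.

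Finally, I would collect these cases and conclude that $\func{\gpnone}$ and $\func{\dpnone}$ agree on every initial state $\stone$, including propagating the formal $\Box$ symbol: a deadlock on the l.h.s.\ can only occur inside the box content before measurement or inside the portion of the net untouched by the step, and in either case a deadlock arises on the l.h.s.\ iff it does on the r.h.s., by the same local analysis used in Theorem~\ref{soundness}. The hard part is the bookkeeping for the measurement step; once this is in place, the remaining cases are direct adaptations of the classical proof.
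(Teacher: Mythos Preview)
Your overall plan---reduce to the one-step case and then do a case analysis on the reduction rule, mirroring the proof of Theorem~\ref{soundness}---is reasonable and in spirit close to what the paper does. However, the paper's actual argument for Theorem~\ref{theo:soundnessQSIAM} is organised differently and avoids a pitfall in your proposal.

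The paper does not argue operationally that ``the run on the l.h.s.\ eventually matches the run on the r.h.s.''. Instead, for each one-step reduction $\gpnone\qredpn\{\gpntwo_1^{\probone_1},\ldots,\gpntwo_n^{\probone_n}\}$ it builds an explicit function
\[
\Phi:\states{\gpnone}\longrightarrow(\states{\gpntwo_1}\times\RR)\times\cdots\times(\states{\gpntwo_n}\times\RR)
\]
defined on \emph{all} states (not just reachable ones), and proves two simulation inequalities: roughly, $\sem{\stone}\le\sum_i\probtwo_i\sem{\sttwo_i}$ and $\sem{\stone}\ge\sum_i\probtwo_i\sem{\sttwo_i}$, where $\Phi(\stone)=((\sttwo_i,\probtwo_i))_i$, each by induction on the derivation of $\trrelmult$. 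Since $\sem{\cdot}$ is a supremum, the two inequalities yield equality, and compatibility of $\Phi$ with initial and final states transfers this to $\func{\gpnone}=\func{\dpnone}$. This bisimulation-style packaging handles the probabilistic branching, the interleaving of independent transitions, and the $\Box$ case uniformly, without ever appealing to deadlock freedom.

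This brings me to the one genuine worry in your proposal. In step~(a) of the measurement case you invoke ``deadlock freedom (Theorem~\ref{final_th}, extended to \QSMLL)''. But in the paper's development, deadlock freedom for the \SIAM\ is a \emph{corollary} of Soundness (see the discussion after Theorem~\ref{soundness} and the proofs of Lemmas~\ref{closed_SIAM} and~\ref{df_smllb}); the analogous result for the \QSIAM\ would likewise be downstream of Theorem~\ref{theo:soundnessQSIAM}. Using it here risks circularity. Fortunately you do not need it: in the redex the $\onelk$ link is at depth $0$ and hence active, so the token is created and reaches the lock in two local transitions---exactly the argument the paper already gives for the box-opening case in the proof of Theorem~\ref{soundness}. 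Once you replace the appeal to deadlock freedom by this local observation, your case analysis is essentially sound; the paper's $\Phi$-based argument is simply a cleaner way to make the ``runs match up, including probabilities and $\Box$'' claim precise without having to chase interleavings by hand.
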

The way we prove Theorem~\ref{theo:soundnessQSIAM} is by way of a
careful analysis of paths.  The crucial step consists in proving that
whenever
$\gpnone\qredpn\{\gpntwo_1^{\probone_1},\ldots,\gpntwo_n^{\probone_n}\}$,
then there is a function
$$
\Phi:\states{\gpnone}\rightarrow(\states{\gpntwo_1}\times\RR)\times\ldots\times(\states{\gpntwo_n}\times\RR)
$$ 
satisfying the following properties:
\begin{varitemize}
\item
  On the one hand, if $\stone\in\states{\gpnone}$ and
  $\stone\trrelmult\distrone$, then
  $\Phi(\stone)=((\sttwo_1,\probtwo_1),\ldots,(\sttwo_n,\probtwo_n))$
  and for every $1\leq i\leq n$, $\sttwo_i\trrelmult\distrtwo_i$ where
  $\distrone\leq\sum_{i=1}^n\probtwo_i\distrtwo_i$
\item
  On the other, if
  $\Phi(\stone)=((\sttwo_1,\probtwo_1),\ldots,(\sttwo_n,\probtwo_n))$,
  and for every $1\leq i\leq n$, $\sttwo_i\trrelmult\distrtwo_i$ then
  $\stone\trrelmult\distrone$ where
  $\distrone\geq\sum_{i=1}^n\probtwo_i\distrtwo_i$
\item
  $\Phi$ maps an initial state of $\gpnone$ to the corresponding
  initial state of $\gpntwo_1,\ldots,\gpntwo_n$, each with probability
  $\probone_1,\ldots,\probone_n$. Similarly for final states.
\end{varitemize}
For each reduction rule, one can define the function $\Phi$ and prove the three properties
above, the first two by induction on the structure of the underlying derivation.

Combining Theorem~\ref{theo:soundnessQSIAM} and Proposition~\ref{prop:simtermspn}, one
gets that the \QSIAM\ is a model of computation which is adequate with respect to the
operational semantics introduced in Section~\ref{sect:qcos}:
\begin{corollary}
  If $\qcone\qredmult\distrone$, then $\func{\QtoSMLL{\qcone}}=\func{\QtoSMLL{\distrone}}$.
\end{corollary}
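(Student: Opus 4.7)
The plan is to chain the two results already in place: Proposition~\ref{prop:simtermspn} (Simulation), which lifts operational reduction on quantum closures to multi-step reduction on quantum nets, and Theorem~\ref{theo:soundnessQSIAM} (Soundness of the \QSIAM), which says that $\func{\cdot}$ is invariant under $\qredpnmult$. Concretely, assume $\qcone\qredmult\distrone$. Applying Proposition~\ref{prop:simtermspn} yields $\QtoSMLL{\qcone}\qredpnmult\QtoSMLL{\distrone}$, where $\QtoSMLL{\distrone}$ is the pointwise translation of a distribution of closures into a distribution of quantum nets. Then Theorem~\ref{theo:soundnessQSIAM}, applied to the quantum net $\QtoSMLL{\qcone}$ and the distribution $\QtoSMLL{\distrone}$, delivers $\func{\QtoSMLL{\qcone}}=\func{\QtoSMLL{\distrone}}$, which is the desired equality.

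The only subtlety is making sense of $\func{\QtoSMLL{\distrone}}$ when $\distrone$ is a distribution rather than a single closure. Since the translation is defined pointwise on the support of $\distrone$, and $\func{\cdot}$ maps each quantum net to a distribution over partial injections and vectors (with a possible $\Box$), the natural reading is the convex combination $\func{\QtoSMLL{\distrone}}=\sum_{\qctwo\in\suprt{\distrone}}\distrone(\qctwo)\cdot\func{\QtoSMLL{\qctwo}}$. This is consistent with how Theorem~\ref{theo:soundnessQSIAM} treats distributions of nets on its right-hand side, so no extra definitional work is required beyond recording this convention.

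The main obstacle, such as it is, lies entirely inside the two results we are citing; once they are in hand the corollary is a one-line composition. If one wanted a fully self-contained argument one would have to iterate Theorem~\ref{theo:soundnessQSIAM} along the derivation of $\qredpnmult$, using that $\func{\cdot}$ is affine in its input distribution so that $\func{\sum_i p_i \dpnone_i}=\sum_i p_i\func{\dpnone_i}$; this is precisely what the proof sketch of Theorem~\ref{theo:soundnessQSIAM} via the function $\Phi$ already establishes for one-step reductions, and the multi-step case follows by an easy induction identical in shape to the proof of Lemma~\ref{lem:multRule}. Hence the corollary is immediate and no new ideas beyond those already developed are needed.
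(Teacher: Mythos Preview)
Your proposal is correct and matches the paper's own argument, which simply combines Proposition~\ref{prop:simtermspn} with Theorem~\ref{theo:soundnessQSIAM} exactly as you do. Your added remark on reading $\func{\QtoSMLL{\distrone}}$ as a convex combination is a reasonable clarification of a convention the paper leaves implicit.
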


%%%%%%%%%%%%%%%%%%%%%%%
\section{Discussion}
%%%%%%%%%%%%%%%%%%%%%%%
\paragraph{Sequentialization.}
The graphical calculus we propose here does not have a
sequent calculus counterpart, at least not a standard one. One would
need to add some extra information (for example, a coherence relation
on atoms), to express the fact that atoms from different axioms can be
synchronized (or entangled).

A variation on \SMLL\ which would admit sequentialization into a sequent calculus
is obtained by forcing all sync links to be \emph{unary}, i.e., to
have a single premiss and a single conclusion.  To such a link one can
easily associate a sequent calculus rule, or a term
derivation. However, cut elimination only holds if the net is
closed. Such a solution was explored in \cite{tristan}. A more
sophisticated but somehow similar solution has been proposed by one of
the anonymous referees, who suggested sync links to be identified with a
new kind of \emph{synchronous} cut link.

There is actually a trade-off between two desirable results here:
sequentialization and cut elimination.  A good example is the one in
Figure \ref{special_main}, which can be understood as the net
associated to a term in the form
$\llet{\varthree}{\varfour}{(\uopone\pair{\varone}{\vartwo})}{\pair{\varthree}{\varfour}}$.
The net on the l.h.s.  cannot be reduced further if we limit ourselves
to unary sink links. On the other hand, to this net we can associate a
sequent calculus proof, while it is not the case for the net on the
r.h.s.  In this paper, we prefer to have cut elimination without
conditions, because cut elimination gives us a tool to deal with
deadlock freedom.
%One can also observe that if we only work with \SMLL\ and closed nets,
%the normal form has no sync links, hence we can always associate a
%sequent calculus proof (a term derivation).

\paragraph{Compiling Terms into Circuits.}
The \QSIAM\ machine is definitely a quantum automaton: unitary
transformations and measurements are performed while visiting the
net. It would also be interesting, especially in the measurement-free
case, to design token machines which \emph{extract} a quantum circuit
from a net instead of executing it on-the-fly. The obtained machine
would of course be sound only in the absence of the sync elimination
rule, so that in the normal form (which would essentially be a
\SMLLcf\ net) the unitary gates remain explicit. By the way, having
this option plays in favor of the choice discussed in the paragraph
above, since it is indeed \emph{cut elimination} which allows us to
prove the absence of deadlocks.

%%%%%%%%%%%%%%%%%%%%%%%%
\section{Conclusions}
%%%%%%%%%%%%%%%%%%%%%%%%
This work can be seen as the first step towards making Interaction
Abstract Machines a more general model of computation in which
not only parallelism, but also synchronization, can take place. Interestingly,
this is done with tools coming from proof-theory, namely proof-nets.
Noticeably, desirable properties like termination and deadlock
freedom are byproduct of correctness.

The main weakness of this work is that the underlying logical system,
namely \MLL, is of limited expressive power. Adding exponential
connectives to \SMLL\ is quite natural, and has not been done here
only for the sake of simplicity. Another point worth investigating is
certainly a further analysis on the nature of synchronization,
and in particular on the possibility of synchronizing over formulas
neither strictly positive nor strictly negative. In general, this can
lead to deadlocks, but how about isolating a class of \emph{safe}
formulas?

\section*{Acknowledgments}
The first author is supported by the project ANR-12IS02001
``PACE''. The second author is supported by the project
ANR-2010-BLANC-021301 ``LOGOI''. The third and fourth authors are
supported by Grants-in-Aid for Young Scientists (A) No.\ 24680001,
JSPS, and by Aihara Innovative Mathematical Modeling Project, FIRST
Program, JSPS/CSTP.
 
\bibliographystyle{abbrv}
\bibliography{biblio}
\end{document}